\newtheorem{corollary}{\textbf{Corollary}}
\newtheorem{theorem}{\textbf{Theorem}}
\newtheorem{proposition}{\textbf{Proposition}}
\newtheorem{remark}{\textbf{Remark}}
\newcommand{\nn}{\nonumber}
\newcommand{\cX}{\mathcal{X}}
\newcommand{\cY}{\mathcal{Y}}
\newcommand{\cW}{\mathcal{W}}
\newcommand{\cS}{\mathcal{S}}
\newcommand{\cC}{\mathcal{C}}
\newcommand{\cB}{\mathcal{B}}
\newcommand{\cA}{\mathcal{A}}
\newcommand{\tl}{\tilde{l}}
\newcommand{\tv}{\tilde{v}}
\newcommand{\tR}{\tilde{R}}
\newcommand{\hl}{\hat{l}}
\newcommand{\hw}{\hat{w}}
\newcommand{\hv}{\hat{v}}
\DeclareMathAlphabet{\matheuf}{U}{euf}{m}{n}
\newcommand{\ignore}[1]{}
\title{Information Theoretic Limits of State-dependent Networks}
\author{Yunhao Sun}
\begin{document}
\frontmatter

\newpage
\makefinaltitle

\tableofcontents
\listoffigures

\bodychapters


\mainmatter
\chapter*{Abstract}
\addcontentsline{toc}{chapter}{Abstract}

We investigate the information theoretic limits of two types of state-dependent models in this dissertation. These models capture a wide range of wireless communication scenarios where there are interference cognition among transmitters. Hence, information theoretic studies of these models provide useful guidelines for designing new interference cancellation schemes in practical wireless networks.

In particular, we first study the two-user state-dependent Gaussian multiple access channel (MAC) with a helper. The channel is corrupted by an additive Gaussian state sequence known to neither the transmitters nor the receiver, but to a helper noncausally, which assists state cancellation at the receiver. Inner and outer bounds on the capacity region are first derived, which improve the state-of-the-art bounds given in the literature. Further comparison of these bounds yields either segments on the capacity region boundary or the full capacity region by considering various regimes of channel parameters.

We then study the two-user Gaussian state-dependent Z-interference channel (Z-IC), in which two receivers are corrupted respectively by two correlated states that are noncausally known to transmitters, but unknown to receivers. Three interference regimes are studied, and the capacity region or the sum capacity boundary is characterized either fully or partially under various channel parameters. The impact of the correlation between the states on the cancellation of state and interference as well as the achievability of the capacity is demonstrated via numerical analysis.

Finally, we extend our results on the state-dependent Z-IC to the state-dependent regular IC. As both receivers in the regular IC are interfered, more sophisticated achievable schemes are designed. For the very strong regime, the capacity region is achieved by a scheme where the two transmitters implement a cooperative dirty paper coding. For the strong but not very strong regime, the sum-rate capacity is characterized by rate splitting, layered dirty paper coding and successive cancellation. For the weak regime, the sum-rate capacity is achieved via dirty paper coding individually at two receivers
as well as treating interference as noise. Numerical investigation indicates that for the regular IC, the correlation between states impacts the achievability of the channel capacity in a different way from that of the Z-IC.


\chapter{Introduction}\label{chap:Introduction}
Since the early 1940s, when Claude E. Shannon established the maximum amount of information that can be sent over a noisy channel in his classic papers \cite{Shannon1948} and \cite{Shannon59}, information theory has driven the evolution of communication systems from one generation to another. Shannon's original work focused on the discrete memoryless channel, in which the transition probability distribution (i.e., the noise characteristics of the channel) is perfectly known to both the transmitter and the receiver. In this scenario, he proved theoretically the existence of coding and decoding schemes to achieve any rate below the channel capacity, and proved that such capacity is the reliable transmission limit via a converse argument. In particular, Shannon came up with the idea of {\em random coding} (see \cite{Csiszar98}) to show the achievability of the rate, the ingenious tool that is still widely used in information theory today and is used throughout the rest of this dissertation.

Then, Shannon's basic approach was extended by both mathematicians and engineers to more general models with respect to information sources, coding structures, and performance measures. The fundamental theorem for entropy was extended to the same generality as the ordinary ergodic theorems by McMillan \cite{Mcmillan53} and Breiman \cite{breiman1957} and the result is now known as the Shannon-McMillan-Breiman theorem (the asymptotic equipartition theorem or AEP, the ergodic theorem of information theory, and the entropy theorem). A variety of detailed proofs of the basic coding theorems and stronger versions of the theorems for memoryless, Markov, and other special cases of random processes were developed.

In \cite{Gray1990}, Robert M. Gray pointed out that there are two primary goals of information theory: The first is the development of the fundamental theoretical limit on the achievable performance when communicating a given information source over a given communication channel using optimal (but only theoretical) coding schemes from within a prescribed class. The second goal is the development of practical coding schemes, e.g., structured encoder(s) and decoder(s), which provide the performance that is reasonably good in comparison with the optimal performance given by the theory. 

During the development of practical communication systems, both of these two goals need to be fulfilled. In this dissertation, we mainly investigate the first aspect of information theory. In particular, we focus on a type of state-dependent channels, and explore the dirty paper coding as a useful tool to understand the fundamental performance limit of such a type of channels. The remainder of this chapter provides background materials and outlines the contributions of this dissertation.

\section{Motivation}

Interference management is one of important issues that determine the spectral efficiency of wireless networks. Techniques to deal with interference in all up-to-date cellular networks follow the basic principle of {\em orthogonalizing} transmissions in time, frequency, code, and space, which yields TDMA, FDMA, CDMA and more advanced OFDM technologies. However, orthogonal schemes typically do not reach the best spectral efficiency and are not information theoretically optimal in general. On the other hand, various advanced {\em non-orthogonal} interference cancellation schemes are proposed, motivated by information theoretic designs. For example, the Han-Kobayashi \cite{Han81} scheme is based on the idea that the receiver decodes interference partially, and then subtracts it from the output to reduce the interference. Such a scheme is recently exploited in a down-link non-orthogonal multiple access (NOMA) scheme for interference management in \cite{Saito13,Li14} and \cite{Xu15}. However, such successive interference cancellation requires users to share codebooks and hence can be very complex to implement in practice, especially when transmissions are not within the same network domain.


My dissertation aims at investigating a new framework for interference control in wireless networks based on the following key perspectives of the interference. In fact, interference signals in nature contain coded information sent to certain intended nodes, and hence such signals as codeword sequences are typically {\em noncausally} known by various nodes in the network. For example, an interferer clearly knows the interference signal that it causes to other users noncausally, because such interference is the codeword that this interferer transmits to its intended receivers. As another example, if the interferer is a base station, it can easily inform other base stations about its interference via the backhaul network or inform access points in the cell via wired links. The major observation here is that {\em interference cognition} (i.e., the knowledge of interference being informed to various nodes) naturally exists or can be established at very low costs in networks. Thus, nodes that possess noncausal interference information should be able to exploit it to assist cancellation of such interference.

One major advantage of such an idea is that the design of interference cancellation is handled mainly at the interferer or the helper side, which are typically powerful nodes (such as base stations and access points) in networks and can hence easily take the extra load of interference cancellation. Since the design is on one side, it is more efficient and does not require sharing codebooks as in successive  interference cancellation in the Han-Kobayashi scheme. Moreover, the design can be made transparent to nodes being interfered with. This is very useful in cognitive networks (i.e., \cite{Mitola:IPC:99,Haykin:JSAC:05,Akyi06}) and internet of things (IoT) networks \cite{Ashton09,Atzori10,Gubbi13}. With the interference to primary nodes being canceled by the interferer itself or helper nodes, the access of primary channels can be made simultaneous and transparent from primary networks.




%

In this dissertation, we explore two types of information theoretic models, which capture the impact of interference cognition in wireless networks, and set the goal of best exploiting such information for interference cancellation. These two types of models respectively represent two angles to treat the interference cognition and thus focus on two performance objectives to accomplish. We next explain the essence of these two types of models by their basic versions.
\begin{list}{$\bullet$}{\topsep=0.ex \leftmargin=0.3in \rightmargin=0.1in \itemsep =-0in}

	\item The first type of models are state-dependent channels with the state known at a helper, where neither the transmitter nor the receiver knows the information of channel interference. Instead, there is a dedicate helper who has the information of channel interference and assists the receiver to cancel the channel state. This model can be illustrated via an example (see Fig.~\ref{fig:hmode}). In this model, we consider a device-to-device (D2D) communication in a picocell located inside a macrocell of a cellular network. The D2D communication inside the picocell is corrupted by the interference $s^n$. The helper (i.e., wireless router) in the picocell, which knows the information of channel state through the wire cable, sends its signal $X_0^n$ to help the D2D communication.  Although the help signal $X_0^n$ may also cause interference to the macrocell base station, as long as the power of $X_0^n$ is much less than the power of $S^n$, there is still significant gain in throughput. In fact, our results  demonstrate that the helper can use a relatively small amount of power to completely cancel the interference that the base station causes to D2D users (e.g., the picocell users in our example) even if the interference is as large as infinite.
	\vspace{5mm} 
	\begin{figure}[H]
		\centering
		\includegraphics[width=3in]{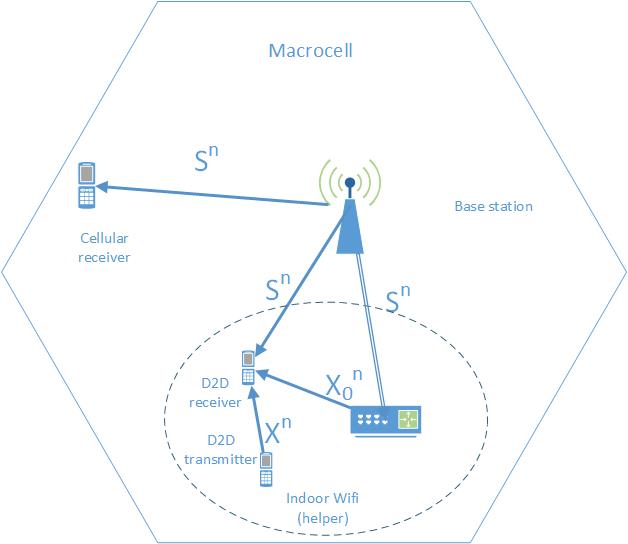}
		\caption{state-dependent channels with states known at helper} \label{fig:hmode}
	\end{figure}
	\vspace{5mm}
	 \item The second type of models are state-dependent channels with the state known at transmitters, so that the transmitters can help the receiver to cancel the state interference. This class of models can be illustrated via a simple example as follows. Consider a multi-cell network, where base station 1 sends information to a cellular user. It is typical that base station 2 in cell 2 can cause interference to the cellular user in cell 1. In this case, base station 2 can send such interference to base station 1 through the backhaul network, so that base station 1 can use its noncausal knowledge about the interference to cancel the interference efficiently.
	\vspace{5mm}
	\begin{figure}[H]
		\centering
		\includegraphics[width=3in]{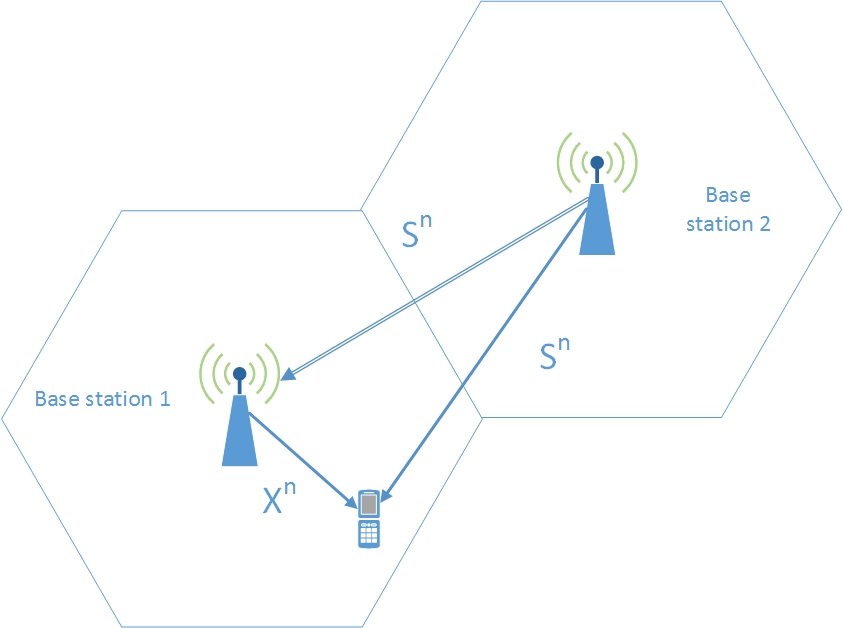}
		\caption{state-dependent channels with states known at transmitter} \label{fig:ctmode}
	\end{figure}
	\vspace{5mm}
%
%
\end{list}

The above two basic scenarios can be broadened to a wide range of network models to capture various scenarios arising in practical networks: multiple receivers can be interfered with by the same or different interfering sources, multiple helpers can cooperatively assist interference cancellation, and interference from different interfering sources can be known distributively by different helpers. All of these models have their representative characteristics which give rise to unique designs and technical challenges, and comprehensive exploration of these models will yield a new framework for interference management based on interference cognition.

%

Thus, the goal of this dissertation is to conduct extensive investigations of the two types of models in order to develop comprehensive understandings of the impact of cognitive interference on interference management in various network environments and devise a set of analysis techniques for characterizing the information theoretic performance of these models. Our studies will provide useful guidelines to significantly improve interference management technologies in practical wireless networks. In particular, we design new interference/state cancellation schemes that maximize the performance of these systems, and characterize the fundamental communication limits of the basic models. Then, we can understand the impact of different channel parameters on the achievability of the channel capacity through our numerical analysis.

\section{Related Work}
In this section, we introduce studies in the literature that are related to the results in this dissertation. The study of state-dependent channels were initiated by Shannon in \cite{Shannon1958}, in which the channel model with causal knowledge of the channel state at the transmitter was studied. In \cite{Gelf80}, the point-to-point channel with the state known noncausally at the transmitter was studied, and the capacity was obtained for the discrete memoryless channel via Gel'fand-Pinsker binning. Based on this result, in \cite{Costa83}, the capacity for the state-dependent point-to-point Gaussian channel was obtained, and it was shown that the state can be perfectly canceled as if there is no state interference. The achievable scheme was referred to as "dirty paper coding". 

Various state-dependent network models have been studied, including the state-dependent multiple-access channel (MAC) in \cite{Kim05,Lapidoth13_a,Zaidi13,Lapidoth13_b,Li13}, the state-dependent broadcast channel in \cite{Steinberg05_2,Khos09,Lapidoth11,Duan14ITW}, the state-dependent relay channel in \cite{Aref09,Khos09,Zaidi10,Zaidi11}, and the state-dependent interference channel (IC) as we discuss below.


More closely to our work on the state-dependent ICs in Chapter \ref{chap:z2state} and Chapter \ref{cha:2state} are a few studies on the state-dependent ICs as follows. A state-dependent IC model was studied in \cite{Zhang11a,Zhang11b} with two receivers corrupted by the same state, and in \cite{Ghas13} with two receivers corrupted by independent states. More recently, in\cite{Duan16IT}, both the state-dependent regular IC and Z-IC were studied, where the receivers are corrupted by the same but differently scaled state. Furthermore, in \cite{Duan13ITW,Ghas14}, a type of the state-dependent Z-IC was studied, in which only one receiver is corrupted by the state and the state information is known only to the other transmitter. In \cite{Fehri2015Z}, a type of the state-dependent Z-IC with two states was studied, where each transmitter knows only the state that corrupts its corresponding receiver. In \cite{Haji13}, a state-dependent Z-interference broadcast channel was studied, in which one transmitter has only one message for its corresponding receiver, and the other transmitter has two messages respectively for two receivers. Both receivers are corrupted by the same state, which is known to both transmitters.

In \cite{Somekh08} and \cite{Kazemi13ISIT}, a model of the cognitive state-dependent IC  was also studied, in which both transmitters (i.e., the primary and cognitive transmitters) jointly send one message to receiver 1, and the cognitive transmitter sends an additional message separately to receiver 2. The state is noncausally known at the cognitive transmitter only. In \cite{Duan12ISIT,Duan15IT}, two state-dependent cognitive IC models were studied, where one transmitter knows both messages, and the two receivers are corrupted by two states which are know to both the two transmitters.


In all the previous work on the state-dependent IC and Z-IC, the states at two receivers are either assumed to be independent, or to be the same but differently scaled, with the exception of \cite{Fehri2015Z} that allows correlation between states. However, \cite{Fehri2015Z} assumes that each transmitter knows only one state at its corresponding receiver, and hence two transmitters cannot cooperate to cancel the states. In this dissertation, we investigate the state-dependent IC and Z-IC with the two receivers being corrupted respectively by two correlated states and with both transmitters knowing both states in order for them to cooperate. The state sequences are assumed to be known at both transmitters. The main focus of this dissertation is on the Gaussian state-dependent IC and Z-IC, where the receivers are corrupted by additive interference, state, and noise. The aim is to design encoding and decoding schemes to handle interference as well as to cancel the state at the receivers. In particular, we are interested in answering the following two fundamental questions: (1) whether or under what conditions both states can be simultaneously fully canceled so that the capacity for the IC and Z-IC without state can be achieved; and (2) what is the impact that the correlation between two states make towards state cancellation and capacity achievability.

A common nature that the above models share is that for each message to be transmitted, at least one transmitter in the system knows both the message and the state, and can incorporate the state information in encoding of the message so that state interference at the corresponding receiver can be canceled However, in practice, it is often the case that transmitters that have messages intended for receivers do not know the state, whereas some third-party nodes know the state, but do not know the message. In such a mismatched case, a helper user can assist all the interfered users to cancel state, though state information cannot be exploited in encoding of messages. A number of previously studied models capture such mismatched property. In \cite{Mallik08}, the point-to-point channel with a helper was studied, in which a transmitter sends a message to a state-dependent receiver, and a helper knows the state noncausally and can help the transmission. Lattice coding was designed in \cite{Mallik08} for the helper to assist state cancellation at the receiver, and was shown to be optimal under certain channel conditions.  A number of more general models were then further studied, which include the point-to-point channel as a special case. More specifically, in \cite{Laneman08,Zaidi09}, the state-dependent MAC was studied, which can be viewed as the point-to-point model with the helper also having its own message to the receiver. Two more general state-dependent MACs were studied in \cite{Somekh08MAC} and \cite{Phil11}, which can be viewed as the MAC model in \cite{Laneman08,Zaidi09} respectively with the helper further knowing the transmitter's message and with one more state corruption known at the transmitter. In \cite{Duan14TIT}, the state-dependent Z-interference channel was studied, which can be viewed as the point-to-point model with the helper also having a message to its own receiver. In \cite{Aref09,Zaidi10}, the state-dependent relay channel was studied, which can be viewed as the point-to-point model with the helper also receiving information from the transmitter and serving as a relay. When these models reduce to the point-to-point model here, the results in \cite{Zaidi09,Phil11,Zaidi10,Duan14TIT} characterize the capacity of the Gaussian channel as the state power goes to infinity as in \cite{Mallik08}. In particular, the achievable scheme in \cite{Phil11} is based on lattice coding similar to \cite{Mallik08}, and the scheme in \cite{Zaidi09,Zaidi10,Duan14TIT} can be viewed as single-bin dirty paper coding (i.e., a special case of dirty paper coding \cite{Gelf80,Costa83} with only one bin). The channel capacity of channels with helper remained unknown until, in \cite{Yunhao16IT}, a new achievable scheme was introduced. under which the channel capacity for point-to-point channel with helper is characterized when the power of channel state is finite and the power of helper is small. 

In this dissertation, we are interested in the state-dependent MAC with a helper. Various state-dependent MAC models were studied previously, which are related but different from the MAC model with a helper studied in this dissertation. State-dependent MAC models with state causally or strictly causally known at the transmitter were studied in \cite{Kim05,Lapidoth13_a,Zaidi13,Lapidoth13_b,Li13}, whereas our model assumes that the state is noncausally known at the helper. The two-user MAC with state noncausally known at the transmitters has been previously studied in various cases. \cite{Kim04,Gelf84} studied the MAC model with state noncausally known at both transmitters, while \cite{Laneman08,Zaidi09} assumed that the state is known only to one transmitter. \cite{Somekh08MAC} studied the cognitive MAC model in which one transmitter also knows the other transmitter's message in addition to the noncausal state information. Furthermore, \cite{Phil08,Phil11} studied the model with the receiver being corrupted by two independent states and each state is known noncausally to one transmitter. In all these two-user MAC models with noncausal state information, at least one transmitter knows the state information, and can hence encode messages by incorporating the state information. Our MAC model is different in that only an additional helper knows the state information and assists to cancel the state. Our goal is to characterize the capacity region either fully or partially for such a model.


\section{Contributions and Organization of Dissertation}
The rest of the dissertation is organized as follows. In Chapter \ref{cha:helper}, we study the two-user state-dependent MAC with a helper. In this model, transmitters 1 and 2 respectively send two messages to one receiver, which is corrupted by an independent and identically distributed (i.i.d.) state sequence. The state sequence is known to neither the transmitters nor the receiver, but is known to a helper noncausally, which thus assists state interference cancellation at the receiver. Our focus is on the Gaussian channel with additive state. An outer bound on the capacity region is first derived, and an inner bound is then obtained based on a scheme that integrates direct state cancellation and single-bin dirty paper coding. By comparing the inner and outer bounds, the channel parameters are partitioned into appropriate cases, and for each case, either segments on the capacity region boundary or the full capacity region are characterized.

 In Chapter \ref{chap:z2state} and Chapter \ref{cha:2state}, we investigate the Gaussian state-dependent IC and Z-IC, in which two receivers are corrupted respectively by two different but correlated states that are noncausally known to two transmitters and but are unknown to the receivers. Three interference regimes are studied, and the capacity region or sum capacity boundary is characterized either fully or partially under various channel parameters. For the very strong regime, the capacity region is achieved by a scheme where the two transmitters implement a cooperative dirty paper coding. For the strong but not very strong regime, the sum-rate capacity is characterized by rate splitting, layered dirty paper coding and successive cancellation. For the weak regime, the sum-rate capacity is achieved via dirty paper coding individually at two receivers as well as treating interference as noise. Furthermore, the impact of the correlation between states on cancellation of state and interference as well as achievability of capacity is explored with numerical illustrations. 

This dissertation leads to the following two conference publications \cite{Yunhao16ISIT,Yunhao17ISIT}, one journal paper publication \cite{Yunhao16IT} and one journal paper to be submitted \cite{Yunhao17IT}.



\chapter{Helper-Assisted State Cancellation for Multiple Access Channels}\label{cha:helper}
In this chapter, we study the state-dependent MAC channel with a helper, where two transmitters wish to send the messages to a receiver over a state-corrupted channel, and a helper knows the state information noncausally and wishes to assist the receiver to cancel the state interference.

The rest of this chapter is organized as follows. We first describe the channel model. Then, we provide lower and upper bounds on the capacity. By analyzing the lower bounds to compare them with our upper bounds, we characterize the capacity region either fully or partially in various cases.

\section{Channel Model}\label{sec:macmodel}

\vspace{5mm}
\begin{figure}[H]
	\centering
	\includegraphics[width=5in]{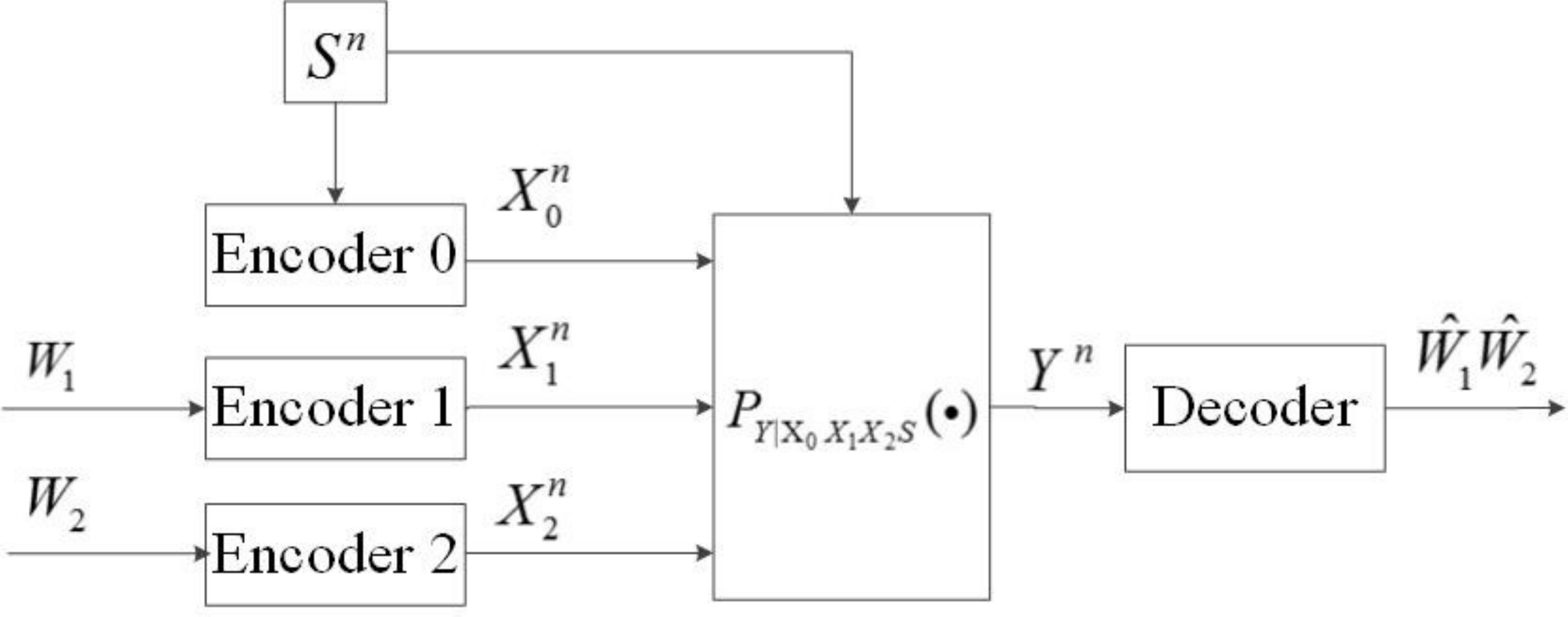}
	\caption{The state-dependent MAC with a helper} \label{fig:channelmodel_mac}
\end{figure}
\vspace{5mm}

We consider the state-dependent MAC with a helper (as shown in Fig.~\ref{fig:channelmodel_mac}), in which transmitter 1 sends a message $W_{1}$, and transmitter 2 sends a message $W_{2}$ to the receiver. The encoder $f_k: \cW \rightarrow \cX_k^n$ at transmitter $k$ maps a message $w_k\in \cW_k$ to a codeword $x_k^n\in \cX_k^n$ for $k=1,2$. The two inputs $x_1^n$ and $x_2^n$ are transmitted over the MAC to a receiver, which is corrupted by an i.i.d.\ state sequence $S^n$. The state sequence is known to neither the transmitters nor the receiver, but is known to a helper noncausally. Hence, the helper assists the receiver to cancel the state interference. The encoder $f_0:\cS^n \rightarrow \cX_0^n$ at the helper maps the state sequence $s^n\in \cS^n$ into a codeword $x_0^n\in\cX_0^n$. The channel transition probability is given by $P_{Y| X_0 X_{1} X_{2} S}$. The decoder $g:\cY^n\rightarrow (\cW_1,\cW_2)$ at the receiver maps the received sequence $y^n$ into two messages $\hw_k\in \cW_k$ for $k=1,2$.


The average probability of error for a length-$n$ code is defined as
\begin{flalign}\label{PE}
P_e^{(n)} = & \frac{1}{|\cW_1||\cW_2|}\sum_{w_1=1}^{|\cW_1|}\sum_{w_2=1}^{|\cW_2|} Pr\lbrace(\hat{w}_1, \hat{w}_2) \neq (w_1, w_2)\rbrace.
\end{flalign}
A rate pair $(R_1, R_2)$ is {\em achievable} if there exist a sequence of message sets $\cW_{k}^{(n)}$ with $|\cW_{k}^{(n)}|=2^{nR_k}$ for $k=1, 2$, and encoder-decoder tuples $(f_0^{(n)},f_1^{(n)},f_2^{(n)},g^{(n)})$ such that the average error probability $P_e^{(n)} \rightarrow 0$ as $n \to \infty$. We define the {\em capacity region} to be the closure of the set of all achievable rate pairs $(R_1, R_2)$.

We focus on the state-dependent Gaussian channel with the output at the receiver for one channel use given by
\begin{flalign}
Y&=X_0 + X_1+ X_2+S+N\label{eq:GeneralChannelModel}
\end{flalign}
where the noise variables $N \sim \mathcal{N}(0,1)$, and $S \sim \mathcal{N}(0,Q)$. Both the noise variables and the state variable are i.i.d. over channel uses. The channel inputs $X_0$, $X_1$ and $X_2$ are subject to the average power constraints $P_0$, $P_1$ and $P_2$.

Our goal is to characterize the capacity region of the Gaussian channel under various channel parameters $(P_0,P_1,P_2,Q)$.


\section{Outer and Inner Bounds}\label{sec:macbounds}

We first provide an outer bound on the capacity region as follows, in which the first terms in the "min" improve the corresponding bounds give in \cite{Duan14ISITA}.
\begin{proposition}\label{pps:Gaussian outer}
	An outer bound on the capacity region of the state-dependent Gaussian MAC with a helper consists of rate pairs $(R_1,R_2)$ satisfying:
	\begin{subequations}
		\begin{flalign}
		R_1\leqslant \min\Big\{& \frac{1}{2}\log(1+\frac{P_1}{Q+2\rho_{0S}\sqrt{P_0Q}+P_0+1})+\frac{1}{2}\log(1+P_0-\rho_{0S}^2P_0),\nonumber\\ & \frac{1}{2}\log(1+P_1)\Big\}\label{eq:r1outer}\\
		R_2\leqslant \min\Big\{& \frac{1}{2}\log(1+\frac{P_2}{Q+2\rho_{0S}\sqrt{P_0Q}+P_0+1})+\frac{1}{2}\log(1+P_0-\rho_{0S}^2P_0), \nonumber \\ & \frac{1}{2}\log(1+P_2)\Big\}\label{eq:r2outer}\\
		R_1+R_2\leqslant \min\Big\{&\frac{1}{2}\log(1+\frac{P_1+P_2}{Q+2\rho_{0S}\sqrt{P_0Q}+P_0+1})+\frac{1}{2}\log(1+P_0-\rho_{0S}^2P_0), \nonumber \\ & \frac{1}{2}\log(1+P_1+P_2)\Big\}\label{eq:r12outer}
		\end{flalign}
	\end{subequations}
	for some $\rho_{0S}$ that satisfies $-1\leqslant \rho_{0S}\leqslant 1$.
\end{proposition}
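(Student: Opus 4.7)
The plan is to combine Fano's inequality with two separate bounding arguments, one producing the ``no-state'' second term in each $\min$ and the other producing the correlation-aware first term involving $\rho_{0S}$. Throughout, I would define the induced empirical cross-correlation
\begin{equation*}
\rho_{0S} := \frac{1}{n\sqrt{P_0 Q}}\sum_{i=1}^n E[X_{0,i} S_i],
\end{equation*}
which lies in $[-1,1]$ by Cauchy--Schwarz together with the power constraint $\tfrac{1}{n}\sum_i E[X_{0,i}^2]\le P_0$; this is the $\rho_{0S}$ appearing in the final bound.

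For the second term in each $\min$, the argument is short. I would hand the receiver $S^n$ as genie side information. Since the helper's signal $X_0^n$ is a deterministic function of $S^n$, the enhanced decoder subtracts both $S^n$ and $X_0^n$ from $Y^n$, and the channel reduces to the standard stateless two-user Gaussian MAC with noise $N^n$; its classical converse delivers $R_k\le\tfrac12\log(1+P_k)$ and $R_1+R_2\le\tfrac12\log(1+P_1+P_2)$.

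For the first term in each $\min$, I would start from Fano,
\begin{align*}
nR_k &\le h(Y^n) - h(X_0^n+X_{3-k}^n+S^n+N^n) + n\epsilon_n,\\
n(R_1+R_2) &\le h(Y^n) - h(X_0^n+S^n+N^n) + n\epsilon_n,
\end{align*}
and then upper bound $h(Y^n)$ using Gaussian maximum entropy with per-coordinate variance averaged via concavity of $\log$:
\begin{equation*}
h(Y^n)\le \tfrac{n}{2}\log\bigl(2\pi e\,(P_0+P_1+P_2+Q+2\rho_{0S}\sqrt{P_0 Q}+1)\bigr),
\end{equation*}
where the cross term $2\rho_{0S}\sqrt{P_0 Q}$ emerges from averaging $2E[X_{0,i}S_i]$ over $i$. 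I would then establish the matching lower bound
\begin{equation*}
h(X_0^n+S^n+N^n)\ge \tfrac{n}{2}\log\!\left(\frac{2\pi e\,(P_0+Q+2\rho_{0S}\sqrt{P_0 Q}+1)}{1+P_0(1-\rho_{0S}^2)}\right),
\end{equation*}
and an analogous lower bound on $h(X_0^n+X_{3-k}^n+S^n+N^n)$, obtained by peeling off the independent $X_{3-k}^n$ via one further entropy-power step. Combining these produces the first term of each $\min$; intersecting it with the genie bound from the previous step finishes the proof.

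The sharp entropy lower bound just displayed is the main obstacle --- a naive application of EPI or plain conditioning only yields the trivial $h(X_0^n+S^n+N^n)\ge\tfrac{n}{2}\log 2\pi e$. The intended route is to decompose $X_{0,i}=\rho_{0S}\sqrt{P_0/Q}\,S_i+V_i$, where $V_i$ is the LMMSE residual, uncorrelated with $S_i$ and of average power at most $P_0(1-\rho_{0S}^2)$, so that $X_{0,i}+S_i+N_i=(1+\rho_{0S}\sqrt{P_0/Q})S_i+V_i+N_i$ exhibits the target variance $P_0+Q+2\rho_{0S}\sqrt{P_0Q}+1$. EPI is then invoked after a Gaussian enhancement --- adding an independent $\mathcal{N}(0,P_0(1-\rho_{0S}^2))$ perturbation of the helper's signal --- so that the correction factor $1+P_0(1-\rho_{0S}^2)$ naturally appears in the denominator. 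This manipulation must simultaneously track the correlation structure through $\rho_{0S}$ and the individual power constraint $P_0$ on the helper; it is where the improvement over the bound in \cite{Duan14ISITA} comes from.
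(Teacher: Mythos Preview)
Your overall architecture is right, and your target inequality
\[
h(X_0^n+S^n+N^n)\ge \tfrac{n}{2}\log\!\left(\frac{2\pi e\,(P_0+Q+2\rho_{0S}\sqrt{P_0 Q}+1)}{1+P_0(1-\rho_{0S}^2)}\right)
\]
is exactly the bound the paper establishes. But the route you sketch to it is not the one that works, and there are two concrete gaps.

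\textbf{(i) The Fano starting point for the individual rates.} You write $nR_k\le h(Y^n)-h(X_0^n+X_{3-k}^n+S^n+N^n)$ and then propose to ``peel off the independent $X_{3-k}^n$ via one further entropy-power step.'' EPI would require a \emph{lower} bound on $h(X_{3-k}^n)$, and you have none: the codebook could be essentially deterministic, so $h(X_{3-k}^n)$ can be $-\infty$ and no $P_{3-k}$ term survives. The paper instead conditions on $X_{3-k}^n$ from the outset, using $nR_1\le I(X_1^n;Y^n|X_2^n)=h(Y^n|X_2^n)-h(X_0^n+S^n+N^n)$, so that $P_2$ never enters the $R_1$ bound and no peeling is needed.

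\textbf{(ii) The key entropy lower bound.} Your LMMSE decomposition $X_{0,i}=\rho_{0S}\sqrt{P_0/Q}\,S_i+V_i$ makes $V_i$ merely \emph{uncorrelated} with $S_i$, not independent (indeed $V_i$ is a deterministic function of $S^n$), so EPI cannot be applied to the sum $(1+\rho_{0S}\sqrt{P_0/Q})S_i+V_i+N_i$. Your ``Gaussian enhancement'' of adding independent $\mathcal N(0,P_0(1-\rho_{0S}^2))$ noise only \emph{raises} entropy, which goes the wrong way for a lower bound. The paper avoids EPI entirely: it expands via the chain rule with $S^n$,
\[
h(Y^n|X_1^nX_2^n)=h(S^n)+h(Y^n|S^nX_1^nX_2^n)-h(S^n|X_1^nX_2^nY^n)\ge h(S^n)+h(N^n)-h(S^n|X_0^n+S^n+N^n),
\]
and then \emph{upper}-bounds the last term by the single-letter MMSE estimate,
\[
h(S_i\mid X_{0,i}+S_i+N_i)\le h\bigl(S_i-\alpha(X_{0,i}+S_i+N_i)\bigr),
\]
with $\alpha=(\rho_{0S}\sqrt{P_0Q}+Q)/(1+P_0+Q+2\rho_{0S}\sqrt{P_0Q})$. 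Averaging over $i$ and using concavity of $\log$ delivers exactly the denominator $1+P_0(1-\rho_{0S}^2)$ you are after. This MMSE-on-$S$ maneuver, not an EPI step, is where the improvement over the earlier bound comes from.
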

\begin{proof}
	See Section \ref{apx:OuterGaussian}.
\end{proof}

The second terms in the "min" in \eqref{eq:r1outer}-\eqref{eq:r12outer} capture the capacity region of the Gaussian MAC without state. If these bounds dominate the outer bound, then it is possible to design achievable schemes to fully cancel the state. Otherwise, if the first terms in the "min" in \eqref{eq:r1outer}-\eqref{eq:r12outer} dominate the outer bound, then the state cannot be fully canceled by any scheme, and the capacity region of the state-dependent MAC is smaller than that of the MAC without state.

We next derive an achievable region for the channel based on an achievable scheme that integrates direct state cancellation and single-bin dirty paper coding. In particular, since the helper does not know the messages, dirty paper coding naturally involves only one bin. More specifically, an auxiliary random variable (represented by $U$ in Proposition \ref{pps:DMC inner}) is generated to incorporate the state information so that the receiver decodes such variable first to cancel the state and then decode the transmitters' information. Based on such an achievable scheme, we derive the following inner bound on the capacity region.
\begin{proposition}\label{pps:DMC inner}
	For the discrete memoryless state-dependent MAC with a helper, an inner bound on the capacity region consists of rate pairs $(R_1, R_2)$ satisfying:
	\begin{subequations}
		\begin{flalign}
		R_1 \le & \min\{I(X_1;Y|X_2,U),\; I(U,X_1;Y|X_2)-I(U;S)\}\label{eq:r1innerdmc} \\
		R_2 \le & \min\{I(X_2;Y|X_1,U),\; I(U,X_2;Y|X_1)-I(U;S)\} \label{eq:r2innerdmc}\\
		R_1+R_2 \le & \min\{I(X_1,X_2;Y|U),\; I(U,X_1,X_2;Y)-I(U;S)\} \label{eq:r12innerdmc}
		\end{flalign}
	\end{subequations}
	for some distribution $P_{S}P_{U|S}P_{X_0|US}P_{X_1}P_{X_2}P_{Y|SX_0X_1X_2}$.
\end{proposition}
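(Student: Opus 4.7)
The plan is to invoke a Gel'fand--Pinsker style binning with a single bin at the helper (since the helper carries no message), combined with simultaneous joint typicality decoding of $(U^n, X_1^n, X_2^n)$ at the receiver. All six rate inequalities appearing inside the three ``$\min$'' expressions will emerge from a single decoding rule, as the outcomes of six different error events in the joint decoding analysis.

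Fix the joint distribution $P_S P_{U|S} P_{X_0|US} P_{X_1} P_{X_2} P_{Y|SX_0X_1X_2}$. First I would generate $2^{n(I(U;S)+\epsilon)}$ auxiliary codewords $U^n(\ell)$ i.i.d.\ according to $P_U$, and for $k=1,2$ generate $2^{nR_k}$ codewords $X_k^n(w_k)$ i.i.d.\ according to $P_{X_k}$; all codebooks are revealed to every party. Upon observing $S^n$, the helper uses the covering lemma to find an index $\ell$ such that $(U^n(\ell), S^n)$ is jointly typical (possible with high probability since the auxiliary codebook size exceeds $2^{nI(U;S)}$), then transmits $X_0^n$ drawn symbol by symbol from $P_{X_0|US}$. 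Transmitter $k$ sends $X_k^n(w_k)$. The receiver searches for a unique triple $(\hat{\ell}, \hat{w}_1, \hat{w}_2)$ such that $(U^n(\hat{\ell}), X_1^n(\hat{w}_1), X_2^n(\hat{w}_2), Y^n)$ is jointly typical, and declares $(\hat{w}_1, \hat{w}_2)$.

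For the error analysis, by symmetry assume $(\ell, w_1, w_2) = (1,1,1)$. The helper-encoding failure event and the ``correct tuple not typical'' event are dispatched by the covering lemma and the law of large numbers, respectively. The remaining seven bad tuple patterns are indexed by which of $\ell, w_1, w_2$ differs from $1$. Using the packing lemma, together with the mutual independence of $U, X_1, X_2$ imposed by the product factor $P_U P_{X_1} P_{X_2}$ and the resulting conditional independencies $X_k \perp (U, X_{3-k})$, the three events in which $\ell$ is decoded correctly but a strict subset of the messages is wrong produce $R_1 < I(X_1; Y | U, X_2)$, $R_2 < I(X_2; Y | U, X_1)$, and $R_1 + R_2 < I(X_1, X_2; Y | U)$; the three events in which $\ell$ is wrong together with some messages produce $R_1 + I(U;S) < I(U, X_1; Y | X_2)$, $R_2 + I(U;S) < I(U, X_2; Y | X_1)$, and $R_1 + R_2 + I(U;S) < I(U, X_1, X_2; Y)$. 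Intersecting all six constraints recovers \eqref{eq:r1innerdmc}--\eqref{eq:r12innerdmc}. The one leftover event (wrong $\ell$ alone) yields the feasibility condition $I(U;S) < I(U; Y | X_1, X_2)$; this is implicitly built into any useful choice of $P_{U|S}$ and is not stated explicitly in the proposition.

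The main technical obstacle is the ``mixed'' error events in which $\ell$ and a subset of the messages are simultaneously wrong: one must verify that the independently drawn tuples remain statistically independent under the chosen joint distribution and that the resulting joint-typicality probability collapses to the clean form $2^{-n\, I(U, X_{\cdot}; Y \mid \cdot)}$ with the conditioning dictated by which indices are correct. This reduces to observing from the factorization $P_{X_1} P_{X_2} P_{U, S, X_0}$ that $U, X_1, X_2$ are mutually independent marginally and are coupled to $Y$ only through the channel; once this is in hand, each mutual information reduces by a chain-rule expansion to the form stated in the proposition, and the rest is routine.
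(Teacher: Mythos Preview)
Your approach is essentially the same as the paper's: Gel'fand--Pinsker style covering at the helper, i.i.d.\ message codebooks at the two transmitters, and simultaneous joint typicality decoding of $(U^n,X_1^n,X_2^n)$ at the receiver. The six constraints you extract from the packing-lemma analysis are exactly the six the paper obtains, and your identification of the factorization $P_U P_{X_1} P_{X_2}$ as the reason the mutual informations collapse to the stated conditional forms is correct.

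There is one point where you diverge from the paper and it is not quite right as written. You decode a \emph{unique} triple $(\hat\ell,\hat w_1,\hat w_2)$ and then hand-wave away the seventh error event (wrong $\ell$, correct $w_1,w_2$) by saying the resulting condition $I(U;S)<I(U;Y\mid X_1,X_2)$ is ``implicitly built into any useful choice of $P_{U|S}$.'' That is not guaranteed: the region in the proposition can be nonempty even when this extra inequality fails (for instance, $R_1+R_2\le I(X_1,X_2;Y)+I(U;Y\mid X_1,X_2)-I(U;S)$ can still be positive). So as stated your argument proves only a subregion. The paper avoids this cleanly by using \emph{non-unique} decoding of the auxiliary index: the decoder declares an error only if no or more than one \emph{message pair} $(\hat w_1,\hat w_2)$ is jointly typical with some $U^n(\hat\ell)$; multiple compatible $\hat\ell$ with the same $(\hat w_1,\hat w_2)$ are not an error, since $\ell$ is not part of the decoding requirement. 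With that decoding rule the seventh event simply disappears from the error analysis, and the proposition follows exactly as stated with no leftover constraint. Replace your uniqueness requirement on $\hat\ell$ with this indirect/non-unique decoding and your proof is complete.
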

\begin{proof}
	See Section \ref{apx:DMC inner}.
\end{proof}

Based on the above inner bound, we derive the following inner bound for the Gaussian channel.
\begin{proposition}\label{pps:Gaussian inner}
	For the state-dependent Gaussian MAC with a helper, an inner bound on the capacity region consists of rate pairs $(R_1,R_2)$ satisfying:
	\begin{subequations}
		\begin{flalign}
		R_1 \leqslant & \min\{ f(\alpha,\beta,P_1), g(\alpha,\beta,P_1)\} \label{eq:r1innergau}\\
		R_2 \leqslant & \min\{ f(\alpha,\beta,P_2), g(\alpha,\beta,P_2)\} \label{eq:r2innergau}\\
		R_1+R_2 \leqslant & \min\{f(\alpha,\beta,P_1+P_2),g(\alpha,\beta,P_1+P_2)\} \label{eq:r12innergau}
		\end{flalign}
	\end{subequations}
	for some real constants $\alpha$ and $\beta$ satisfying $-\sqrt{\frac{P_0}{Q}}\leqslant \beta \leqslant \sqrt{\frac{P_0}{Q}}$. In the above bounds,
	\begin{flalign}
	f(\alpha,\beta,P)&= \frac{1}{2}\log\frac{P_0'(P_0'+(1+\beta)^2Q+P+1)}{P_0'Q(\alpha-1-\beta)^2+P_0'+\alpha^2Q}, \label{eq:Inner1-1mac}\\
	g(\alpha,\beta,P)&= \frac{1}{2}\log\left(1+\frac{P(P_0'+\alpha^2Q)}{P_0'Q(\alpha-1-\beta)^2+P_0'+\alpha^2Q} \right), \label{eq:Inner1-2mac}
	\end{flalign}
	where $P_0'= P_0-\beta^2Q$.
\end{proposition}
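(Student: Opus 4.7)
The plan is to specialize the discrete memoryless inner bound in Proposition~\ref{pps:DMC inner} by choosing a jointly Gaussian distribution for $(S,U,X_0,X_1,X_2)$ that respects the required factorization $P_S P_{U|S} P_{X_0|US} P_{X_1} P_{X_2}$, and then to verify that the six resulting mutual-information expressions reduce exactly to $f(\alpha,\beta,P)$ and $g(\alpha,\beta,P)$.

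Concretely, I would take $X_1 \sim \cN(0,P_1)$ and $X_2 \sim \cN(0,P_2)$ mutually independent and independent of $(S,X_0,U,N)$. For the helper I introduce an auxiliary $X_0' \sim \cN(0,P_0')$, independent of $S$, with $P_0':=P_0-\beta^2 Q$, and set $X_0:=X_0'+\beta S$; the average power constraint $\mE[X_0^2]=P_0$ then forces $|\beta|\le\sqrt{P_0/Q}$, matching the stated range of $\beta$. For the Gel'fand--Pinsker auxiliary I take $U:=X_0'+\alpha S$, so that $X_0=U+(\beta-\alpha)S$ is a deterministic function of $(U,S)$ and the required factorization holds. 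Intuitively, $\beta$ captures direct partial state cancellation by the helper (the term $\beta S$ subtracts state in the clear), while $\alpha$ plays the usual Costa role of matching the auxiliary to the residual state seen at the decoder.

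With these choices the received signal simplifies to $Y=X_0'+(1+\beta)S+X_1+X_2+N$, and every term in Proposition~\ref{pps:DMC inner} reduces to a Gaussian conditional-variance computation. The pivotal algebraic identity, for $Y':=X_0'+(1+\beta)S+N$, is
\[
\Var(Y'|U) \;=\; \frac{P_0'\,Q(\alpha-1-\beta)^2 + P_0' + \alpha^2 Q}{P_0'+\alpha^2 Q},
\]
which follows from $\Var(Y'|U)=\Var(Y')-\mE[Y'U]^2/\Var(U)$ after expanding $(P_0'+(1+\beta)^2 Q+1)(P_0'+\alpha^2 Q)-(P_0'+\alpha(1+\beta)Q)^2$ and collecting the cross terms into the square $(\alpha-1-\beta)^2$. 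The analogous conditional variances needed for the bounds that condition on $X_2$, on $X_1$, or on neither are obtained by simply adding $P_1$, $P_2$, or $P_1+P_2$ to the numerator.

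Combining the above with $I(U;S)=\tfrac12\log((P_0'+\alpha^2 Q)/P_0')$, the bounds of the form $I(U,X_i;Y|X_j)-I(U;S)$ in Proposition~\ref{pps:DMC inner} see the factor $P_0'+\alpha^2 Q$ cancel cleanly against $I(U;S)$, producing $f(\alpha,\beta,P)$; the bounds of the form $I(X_i;Y|X_j,U)$ retain that factor and produce $g(\alpha,\beta,P)$, with $P$ equal to $P_1$, $P_2$, or $P_1+P_2$ depending on which conditional is taken. I anticipate no conceptual obstacle: the entire argument is driven by the single polynomial identity displayed above, and once it is in hand all six bounds of Proposition~\ref{pps:DMC inner} reduce to the claimed forms by direct substitution.
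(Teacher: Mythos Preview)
Your proposal is correct and takes exactly the same approach as the paper: the paper's proof simply states the choice $U=X_0'+\alpha S$, $X_0=X_0'+\beta S$ with $X_0'\sim\cN(0,P_0')$ independent of $X_1,X_2,S$, notes that the constraint on $\beta$ comes from the power constraint on $X_0$, and leaves the mutual-information evaluations implicit. You have supplied precisely those evaluations, including the key conditional-variance identity, so your write-up is in fact more detailed than the paper's own proof.
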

\begin{proof}
	The region follows from Proposition \ref{pps:DMC inner} by choosing the joint Gaussian distribution for random variables as follows:
	\begin{flalign*}
	&U=X_{0}'+\alpha S, \\
	&X_0=X_{0}'+\beta S, \\
	&X_{0}'\sim \mathcal{N}(0,P_{0}'), \\
	&X_1 \sim \mathcal{N}(0,P_1),\quad \\
	&X_2 \sim \mathcal{N}(0,P_2)\nn
	\end{flalign*}
	where $X_{0}',X_1,X_2,S$ are independent. The constraint on $\beta$ follows due to the power constraint on $X_0$.
\end{proof}
We note that the above construction of the input $X_0$ of the helper reflects two state cancelation schemes: the term $\beta S$ represents direct cancelation of some state power in the output of the receiver; and the variable $X_{0}'$ is used for dirty paper coding via generation of the state-correlated auxiliary variable $U$. Hence, the parameter $\beta$ controls the balance of two schemes in the integrated scheme, and can be optimized to achieve the best performance. This scheme is also equivalent to the one with $U=X_0+\alpha S$, where $X_0$ and $S$ are correlated. While such approaches have been considered in the literature (see e.g., \cite{Laneman08}), we believe that  selecting $U$ and $X_0$ successively provides a more operational meaning to the correlation structure.


\section{Capacity Characterization}\label{sec:maccapacity}

By comparing the inner and outer bounds provided in Section \ref{sec:macbounds}, we characterize the capacity region or segments on the capacity boundary in various channel cases. Our idea is to separately analyze the bounds \eqref{eq:r1innergau}-\eqref{eq:r12innergau} in the inner bound and characterize conditions on the channel parameters $(P_0,P_1,P_2,Q)$ under which these bounds respectively meet the bounds \eqref{eq:r1outer}-\eqref{eq:r12outer} in the outer bound. In such cases, the corresponding segment on the capacity region is characterized.


We first consider the bound on $R_1$ in \eqref{eq:r1innergau}. Let
\begin{flalign}\label{eq:alphabeta}
&\alpha_1=\frac{(1+\beta_1)P_0'}{P_0'+1},\\
&\beta_1=\rho_{0S}^*\sqrt{\frac{P_0}{Q}}.
\end{flalign}
Then $f(\alpha,\beta,P_1)$ takes the following form
\begin{flalign}
f(\alpha_1,\beta_1,P_1)&= \frac{1}{2}\log\left(1+\frac{P_1}{Q+2\rho_{0S}^*\sqrt{P_0Q}+P_0+1}\right)\nn\\
&+ \frac{1}{2}\log(1+P_0-\rho_{0S}^{*2}P_0)\label{eq:Inner1-1-1mac}
\end{flalign}
where $\rho_{0S}^*\in [-1,1]$ maximizes
\[\frac{1}{2}\log\left(1+\frac{P_1}{Q+2\rho_{0S}\sqrt{P_0Q}+P_0+1}\right)+ \frac{1}{2}\log(1+P_0-\rho_{0S}^2P_0).\] In fact, $\alpha_1$ is set to maximize $f(\alpha,\beta,P_1)$ for fixed $\beta$, and $\beta_1$ is set to maximize the function with $\alpha$ being replaced by $\alpha_1$.
If $f(\alpha_1,\beta_1,P_1)\leqslant g(\alpha_1,\beta_1,P_1)$, then $R_1=f(\alpha_1,\beta_1,P_1)$ is achievable, and this meets the outer bound (the first term in "min" in \eqref{eq:r1outer}). Thus, one segment of the capacity region is specified by $R_1=f(\alpha_1,\beta_1,P_1)$.

Furthermore, we set $\beta=\alpha-1$ and then obtain:
\begin{flalign}
g(\alpha,\alpha-1,P_1)&=\frac{1}{2}\log(1+P_1). \label{eq:Inner1-2-2mac}
\end{flalign}
If $g(\alpha,\alpha-1,P_1)\leqslant f(\alpha,\alpha-1,P_1)$, i.e., $P_0'^2\ge \alpha^2Q(P_1+1-P_0')$  where $P_0'=P_0-(\alpha-1)^2Q$ holds for some $\alpha \in \Omega_{\alpha}=\{\alpha:1-\sqrt{\frac{P_0}{Q}}\leq \alpha \leq 1+\sqrt{\frac{P_0}{Q}}\}$, then $R_1=\frac{1}{2}\log(1+P_1)$ is achievable, and this meets the outer bound (the second term in "min" in \eqref{eq:r1outer}). This also equals the maximum rate for $R_1$ when the channel is not corrupted by state. Thus, one segment of the capacity region is specified by $R_1=\frac{1}{2}\log(1+P_1)$.

we demonstrate our characterization of the capacity via numerical plots.

\vspace{5mm}
\begin{figure}[thb]
	\begin{center}
		\includegraphics[width=12cm]{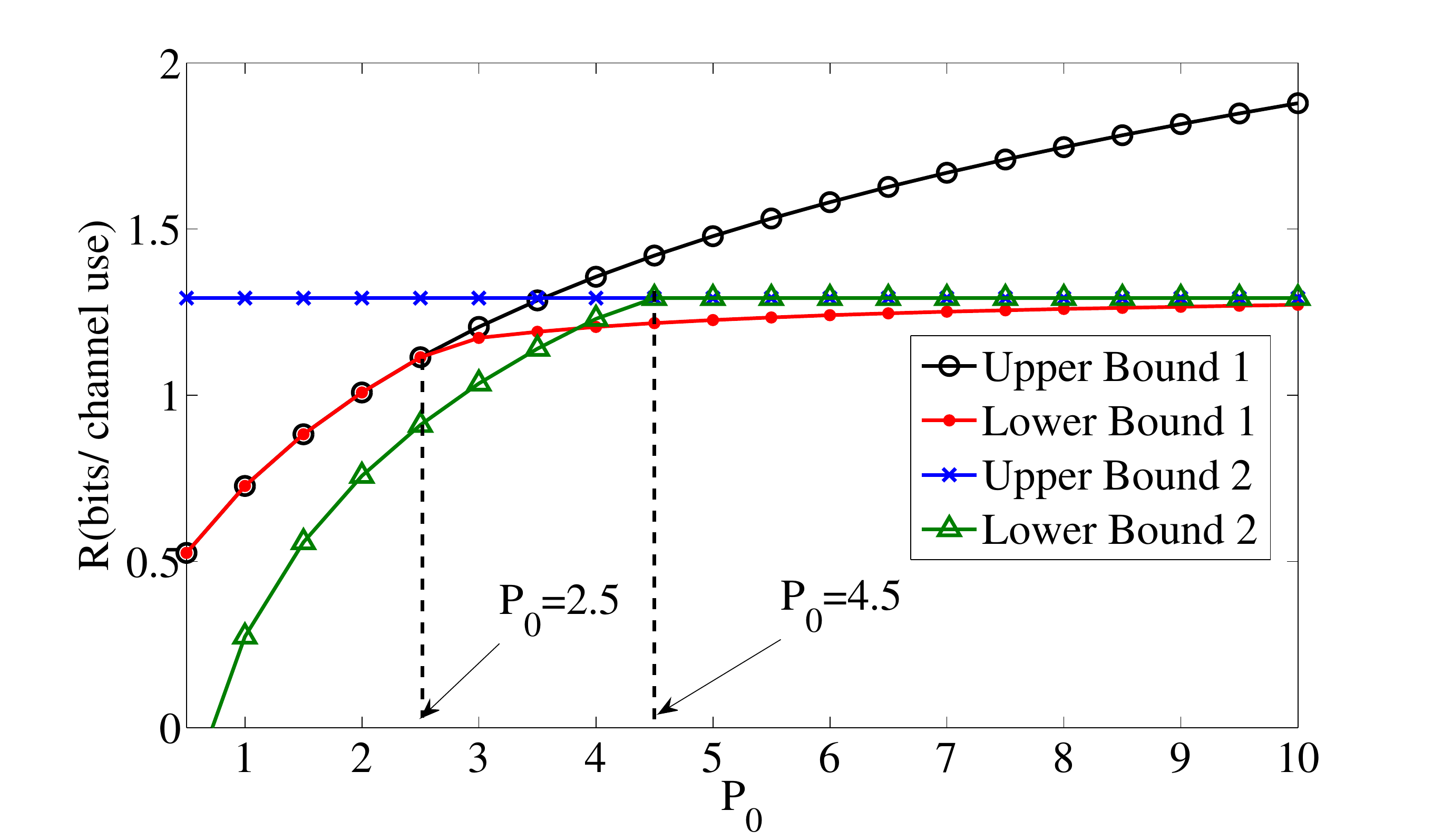}
		\vspace{5mm}
		\caption{Lower and upper bounds on the capacity for the state-dependent channel with a helper}\label{fig:Comparisons}
	\end{center}
\end{figure}
\vspace{5mm}

In Fig.~\ref{fig:Comparisons}, we fix $P=5$, and $Q=12$, and plot the lower bounds $$\min\max_{\alpha,\beta}\{f(\alpha,\beta,P_1,P_0),g(\alpha,\beta,P_1,P_0)\}$$ and the upper bounds in Proposition \ref{pps:Gaussian outer} as functions of the helper's power $P_0$. It can be seen that the lower bound $\max_{\alpha,\beta}f(\alpha,\beta,P_1,P_0)$ matches the upper bound 1 ($\frac{1}{2}\log(1+\frac{P_1+P_2}{Q+2\rho_{0S}\sqrt{P_0Q}+P_0+1})+\frac{1}{2}\log(1+P_0-\rho_{0S}^2P_0)$) when $P_0\leq 2.5$, which characterize the capacity, and the lower bound $\max_{\alpha,\beta}g(\alpha,\beta,P_1,P_0)$ matches the upper bound 2 ($\frac{1}{2}\log(1+P_1)$) when $P_0\ge 4.5 $, which corresponds to the capacity $R_1=\frac{1}{2}\log(1+P_1)$. The numerical result also suggests that when $P_0$ is small, the channel capacity is limited by the helper's power and increases as the helper's power $P_0$ increases. However, as $P_0$ becomes large enough, the channel capacity is determined only by the transmitter's power $P$, in which case the state is perfectly canceled. We further note that the channel capacity without state can even be achieved when $P_0 < Q$ (e.g., $4.5 \leq P_0 \leq 10$). This implies that for these cases, the state is fully canceled not only by state subtraction, but also by precoding the state via single-bin dirty paper coding. We finally note that a better achievable rate can be achieved by the convex envelop of the two lower bounds, which does not yield further capacity result and is not shown in Fig.~\ref{fig:Comparisons}.



\vspace{5mm}
\begin{figure}[thb]
	\centering
	\includegraphics[width=4.5in]{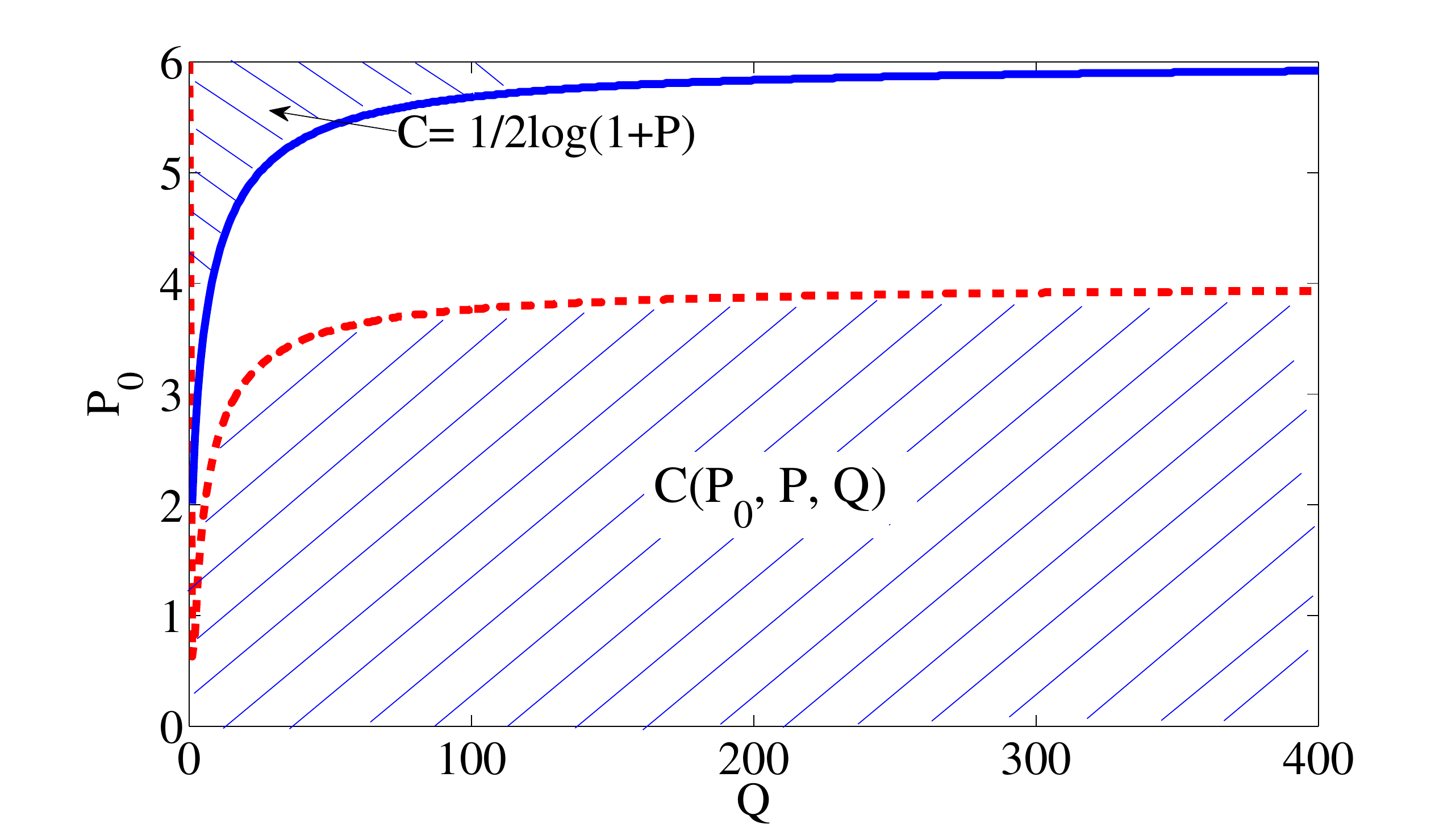}
	\caption{Ranges of parameters for which the capacity is characterized}\label{fig:ParaSets}
\end{figure}
\vspace{5mm}

In Fig.~\ref{fig:ParaSets}, we fix $P=5$, and plot the range of the channel parameters $(Q,P_0)$ for which we characterize the capacity. Each point in the figure corresponds to one parameter pair $(Q,P_0)$. The upper shaded area corresponds to channel parameters that satisfy \eqref{eq:cond2}, i.e., $P_0$ is large enough compared to $Q$, and hence the capacity of the channel without state can be achieved. The lower shaded area corresponds to channel parameters that satisfy \eqref{eq:cond1}, and hence the capacity is characterized by a function of not only $P$, but also $P_0$ and $Q$.

Similarly, following the above arguments, segments on the capacity region boundary corresponding to bounds on $R_2$ and $R_1+R_2$ can be characterized.

Summarizing the above analysis, we obtain the following characterization of segments of the capacity region boundary.

\begin{theorem}\label{th:capacity}
	The channel parameters $(P_0,P_1,P_2,Q)$ can be partitioned into the sets $\cA_1,\cB_1,\cC_1$, where
	\begin{flalign}
	& \cA_1=\{(P_0,P_1,P_2,Q): f(\alpha_1,\beta_1,P_1)\leqslant g(\alpha_1,\beta_1,P_1)\} \nonumber \\
	& \cC_1=\{(P_0,P_1,P_2,Q): P_0'^2\ge \alpha^2Q(P_1+1-P_0') \nn\\
	& \hspace{8mm} \text{ where }P_0'=P_0-(\alpha-1)^2Q,\text{ for some }\alpha \in \Omega_{\alpha}\} \nonumber \\
	& \cB_1=(\cA_1\cup \cC_1)^c. \nonumber
	\end{flalign}
	If $(P_0,P_1,P_2,Q)\in \cA_1$, then $R_1=f(\alpha_1,\beta_1,P_1)$ captures one segment of the capacity region boundary, where the state cannot be fully canceled. If $(P_0,P_1,P_2,Q)\in \cC_1$, then $R_1=\frac{1}{2}\log(1+P_1)$ captures one segment of the capacity region boundary, where the state is fully canceled. If $(P_0,P_1,P_2,Q)\in \cB_1$, $R_1$ segment of the capacity region boundary is not characterized.
	
	The channel parameters $(P_0,P_1,P_2,Q)$ can alternatively be partitioned into the sets $\cA_2,\cB_2,\cC_2$, where
	\begin{flalign}
	& \cA_2=\{(P_0,P_1,P_2,Q): f(\alpha_2,\beta_2,P_2)\leqslant g(\alpha_2,\beta_2,P_2)\} \nonumber \\
	& \cC_2=\{(P_0,P_1,P_2,Q): P_0'^2\ge \alpha^2Q(P_2+1-P_0') \nn\\
	& \hspace{8mm} \text{ where }P_0'=P_0-(\alpha-1)^2Q,\text{ for some }\alpha \in \Omega_{\alpha} \} \nonumber \\
	& \cB_2=(\cA_2\cup \cC_2)^c, \nonumber
	\end{flalign}
	where $\alpha_2,\beta_2$ are defined similarly to \eqref{eq:alphabeta} with $P_1$ being replaced by $P_2$. If $(P_0,P_1,P_2,Q)\in \cA_2$, then $R_2=f(\alpha_2,\beta_2,P_2)$ captures one segment of the capacity region boundary, where the state cannot be fully canceled. If $(P_0,P_1,P_2,Q)\in \cC_2$, then $R_2=\frac{1}{2}\log(1+P_2)$ captures one segment of the capacity region boundary, where the state is fully canceled.
	
	Furthermore, the channel parameters $(P_0,P_1,P_2,Q)$ can also be partitioned into the sets $\cA_3,\cB_3,\cC_3$, where
	\begin{flalign}
	& \cA_3=\{(P_0,P_1,P_2,Q): \nn\\
	&\quad\quad f(\alpha_3,\beta_3,P_1+P_2)\leqslant g(\alpha_3,\beta_3,P_1+P_2)\} \nonumber \\
	& \cC_3=\{(P_0,P_1,P_2,Q): P_0'^2\ge \alpha^2Q(P_1+P_2+1-P_0') \nn\\
	& \hspace{8mm} \text{ where }P_0'=P_0-(\alpha-1)^2Q,\text{ for some }\alpha \in \Omega_{\alpha} \} \nonumber \\
	& \cB_3=(\cA_3\cup \cC_3)^c, \nonumber
	\end{flalign}
	where $\alpha_3,\beta_3$ are defined similarly to \eqref{eq:alphabeta} with $P_1$ being replaced by $P_1+P_2$. If $(P_0,P_1,P_2,Q)\in \cA_3$, then $R_1+R_2=f(\alpha_3,\beta_3,P_1+P_2)$ captures one segment of the sum capacity, where the state cannot be fully canceled. If $(P_0,P_1,P_2,Q)\in \cC_3$, then $R_1+R_2=\frac{1}{2}\log(1+P_1+P_2)$ captures one segment of the sum capacity, where the state is fully canceled.
\end{theorem}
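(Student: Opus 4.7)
The plan is to derive the theorem directly by comparing, term by term, the inner bound in Proposition \ref{pps:Gaussian inner} with the outer bound in Proposition \ref{pps:Gaussian outer}. For each of the three rate constraints (on $R_1$, $R_2$, and $R_1+R_2$), the outer bound is expressed as the minimum of two quantities, while the inner bound is built from $\min\{f(\alpha,\beta,\cdot), g(\alpha,\beta,\cdot)\}$ where $(\alpha,\beta)$ are free parameters. The strategy is to specialize $(\alpha,\beta)$ in two distinct ways so that $f$ or $g$ individually matches one of the outer-bound terms, and then to identify the parameter regime in which the chosen branch is actually the smaller one (and hence achieved).

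I would begin with the $R_1$ bound. First, to target the first term inside the outer-bound ``$\min$'', I would set $\beta = \beta_1 = \rho_{0S}^* \sqrt{P_0/Q}$ and $\alpha = \alpha_1 = (1+\beta_1)P_0'/(P_0'+1)$; the choice of $\alpha_1$ is obtained by maximizing $f(\alpha,\beta_1,P_1)$ in $\alpha$ (a one-variable calculus exercise on the ratio inside the log), and the choice of $\beta_1$ then comes from maximizing the resulting expression over $\beta$. A direct algebraic simplification with $P_0' = P_0 - \beta_1^2 Q$ reduces $f(\alpha_1,\beta_1,P_1)$ to the first term of the $R_1$ outer bound \eqref{eq:r1outer}. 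Consequently, whenever the channel parameters lie in $\cA_1 = \{f(\alpha_1,\beta_1,P_1)\le g(\alpha_1,\beta_1,P_1)\}$, the inner bound's $\min$ for $R_1$ equals $f(\alpha_1,\beta_1,P_1)$, which matches the outer bound, establishing that segment of the capacity boundary.

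For the second parameter region, I would take $\beta = \alpha - 1$, which makes $(\alpha - 1 - \beta) = 0$ and thus collapses the denominator in $g(\alpha,\beta,P_1)$ to $P_0' + \alpha^2 Q$; plugging in shows $g(\alpha,\alpha-1,P_1) = \tfrac{1}{2}\log(1+P_1)$, which equals the second term of \eqref{eq:r1outer}. To certify achievability I need $g \le f$ at this choice, which after clearing logs becomes $P_0'^2 \ge \alpha^2 Q(P_1+1-P_0')$ with $P_0' = P_0 - (\alpha-1)^2 Q$. This is exactly the defining inequality of $\cC_1$, and one must verify that $\alpha \in \Omega_\alpha$ ensures $\beta$ satisfies the feasibility constraint $-\sqrt{P_0/Q}\le \beta \le \sqrt{P_0/Q}$ from Proposition \ref{pps:Gaussian inner}. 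The arguments for $R_2$ and for $R_1+R_2$ are structurally identical, with $P_1$ replaced by $P_2$ and by $P_1+P_2$ respectively, and with $(\alpha_2,\beta_2)$, $(\alpha_3,\beta_3)$ defined analogously; this yields the sets $\cA_2,\cC_2,\cA_3,\cC_3$ and the corresponding segments of the capacity boundary.

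The main obstacle will be the algebraic verification that the particular Gaussian choices of $(\alpha,\beta)$ produce the stated matching expressions exactly rather than just up to lower-order terms; in particular, showing that the optimizer $\alpha_1$ of $f(\alpha,\beta_1,P_1)$ makes the $f$ expression collapse to the clean product-of-two-logs form in \eqref{eq:Inner1-1-1mac} requires careful bookkeeping with $P_0' = P_0 - \beta_1^2 Q$ and the identity $1 - \rho_{0S}^{*2} = (P_0 - \beta_1^2 Q)/P_0$. Once that core identity is established, the remainder of the argument is a routine case split: in each of the three coordinate directions one sets $(\alpha,\beta)$ first to match the ``state-limited'' outer-bound term and second to match the ``no-state'' outer-bound term, and checks which of $f,g$ is dominated in each regime. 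The union of the achievability region with the matching outer bound gives the segment-by-segment characterization claimed.
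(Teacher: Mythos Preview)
Your proposal is correct and follows essentially the same approach as the paper: specialize $(\alpha,\beta)$ to $(\alpha_1,\beta_1)$ so that $f$ matches the first outer-bound term in \eqref{eq:r1outer}, then set $\beta=\alpha-1$ so that $g$ matches the second term, and in each case check which of $f,g$ dominates to obtain the defining conditions for $\cA_1$ and $\cC_1$ (and analogously for $R_2$ and $R_1+R_2$). The algebraic reductions you flag --- that $f(\alpha_1,\beta_1,P_1)$ collapses to \eqref{eq:Inner1-1-1mac} and that $g(\alpha,\alpha-1,P_1)\le f(\alpha,\alpha-1,P_1)$ is equivalent to $P_0'^2\ge \alpha^2 Q(P_1+1-P_0')$ --- are exactly the computations the paper performs.
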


The above theorem describes three partitions of the channel parameters respectively characterizing segments on the capacity region corresponding to $R_1$, $R_2$ and $R_1+R_2$. Then intersection of three sets (with each from one partition) collectively characterizes all segments on the capacity region boundary. For example, if a given channel parameter tuple satisfies $(P_0,P_1,P_2,Q) \in (\cC_1\bigcap\cC_2\bigcap\cA_3)$, then following Theorem \ref{th:capacity}, line segments characterized by $R_1 = \frac{1}{2}\log( 1+P_1)$, $R_2 = \frac{1}{2}\log( 1+P_2)$, and $R_1+R_2=f(\alpha_3,\beta_3,P_1+P_2)$ are on the capacity region boundary. Since parameters $\alpha$ and $\beta$ that achieve these segments are not the same, the intersection of these segments are not on the capacity region boundary.

\begin{figure*}
	\begin{tabular}{cc}
		\includegraphics[width=3in]{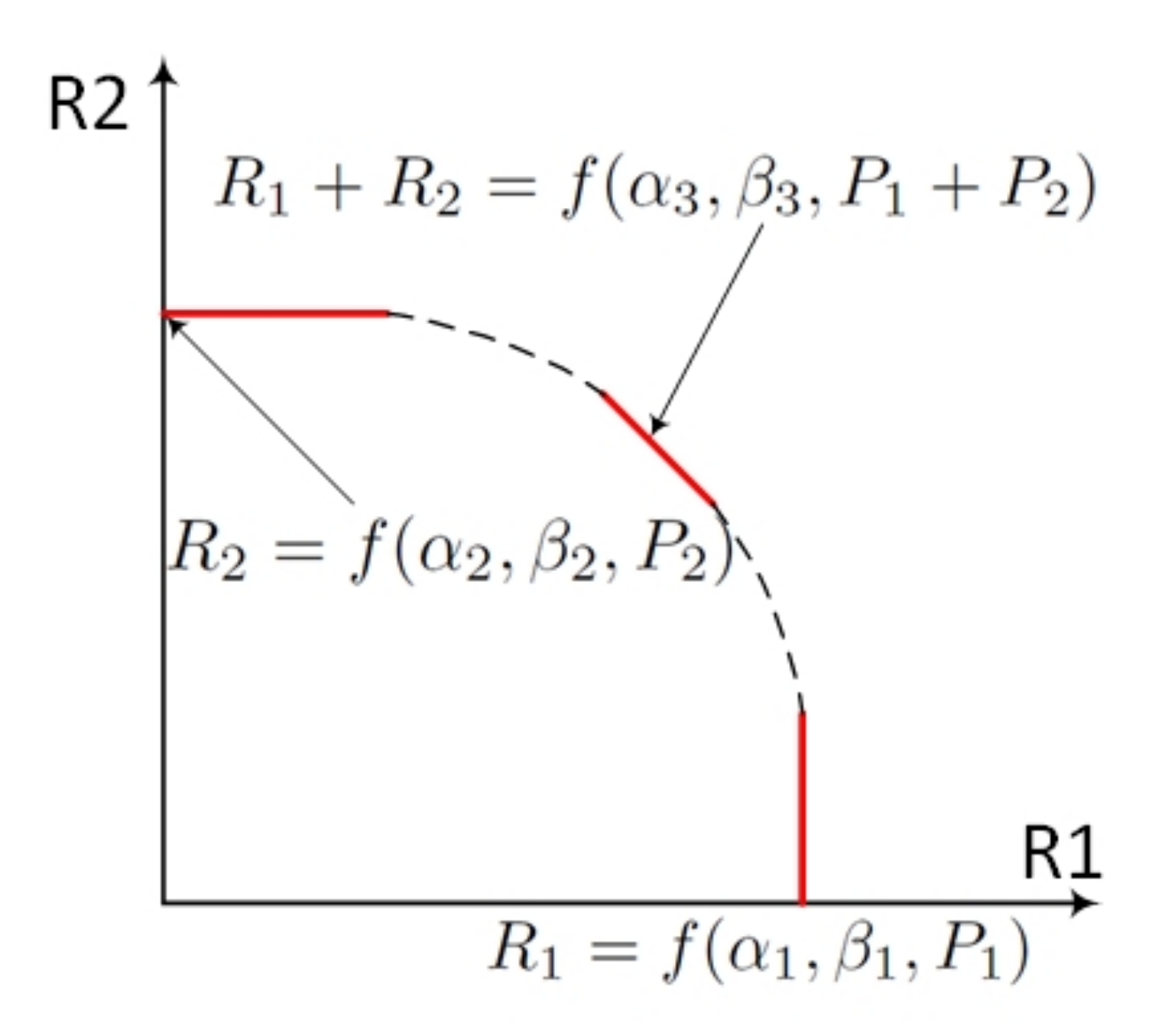}
		&\includegraphics[width=3in]{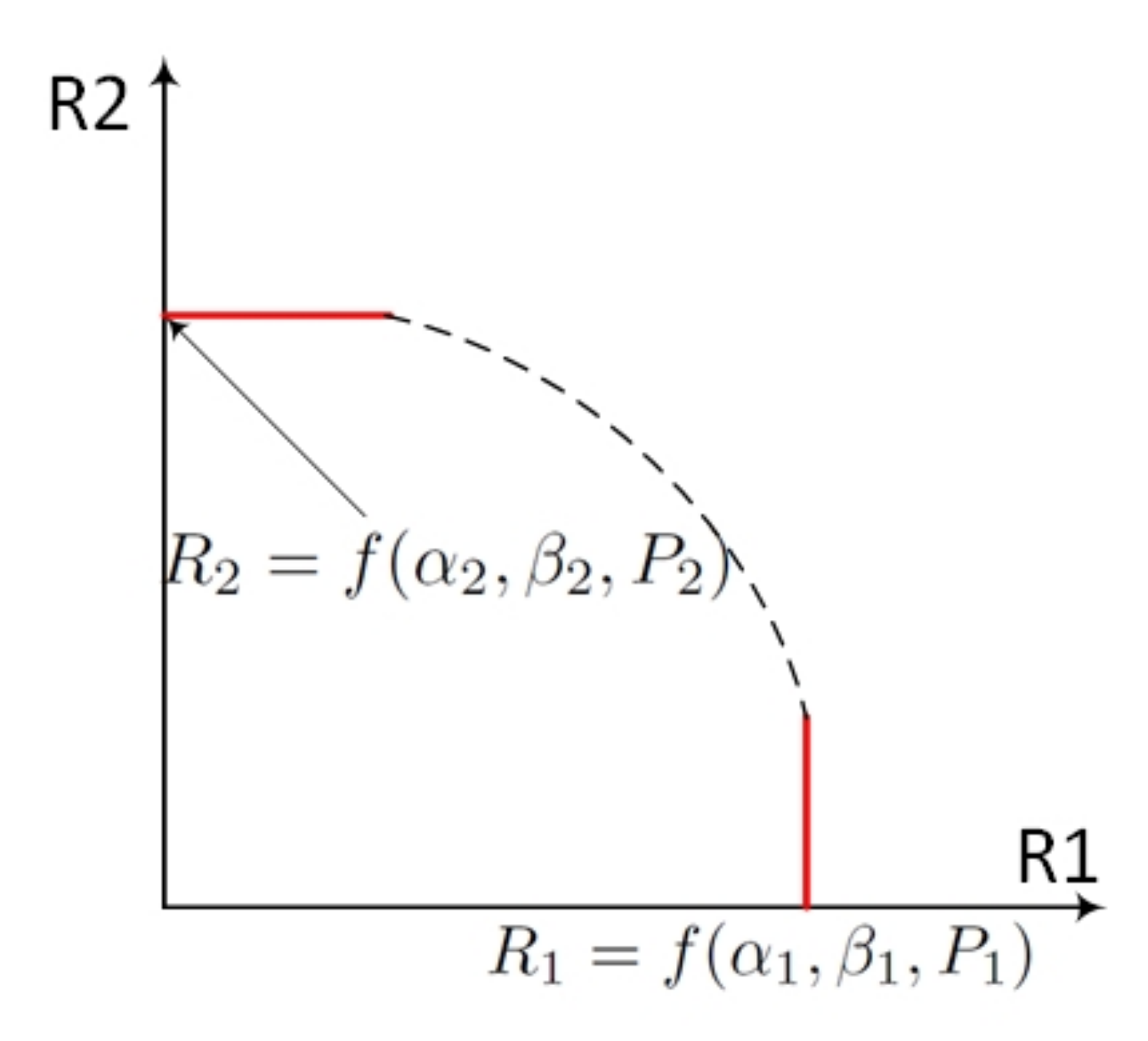}\\
		$\scriptstyle \cA_1\bigcap\cA_2\bigcap\cA_3$&$\scriptstyle\cA_1\bigcap\cA_2\bigcap\cB_3$\\
		\includegraphics[width=3in]{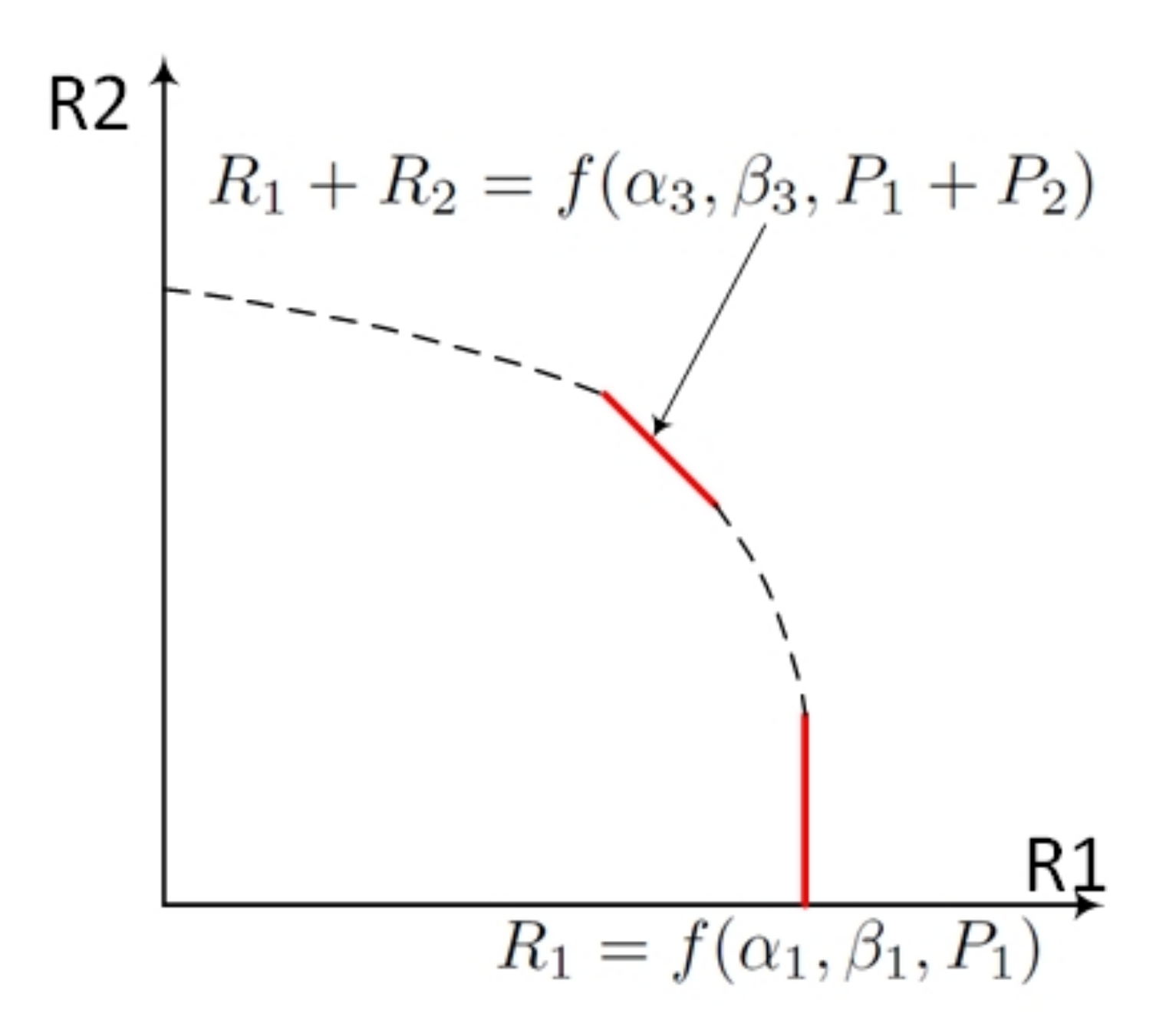}
		&\includegraphics[width=3in]{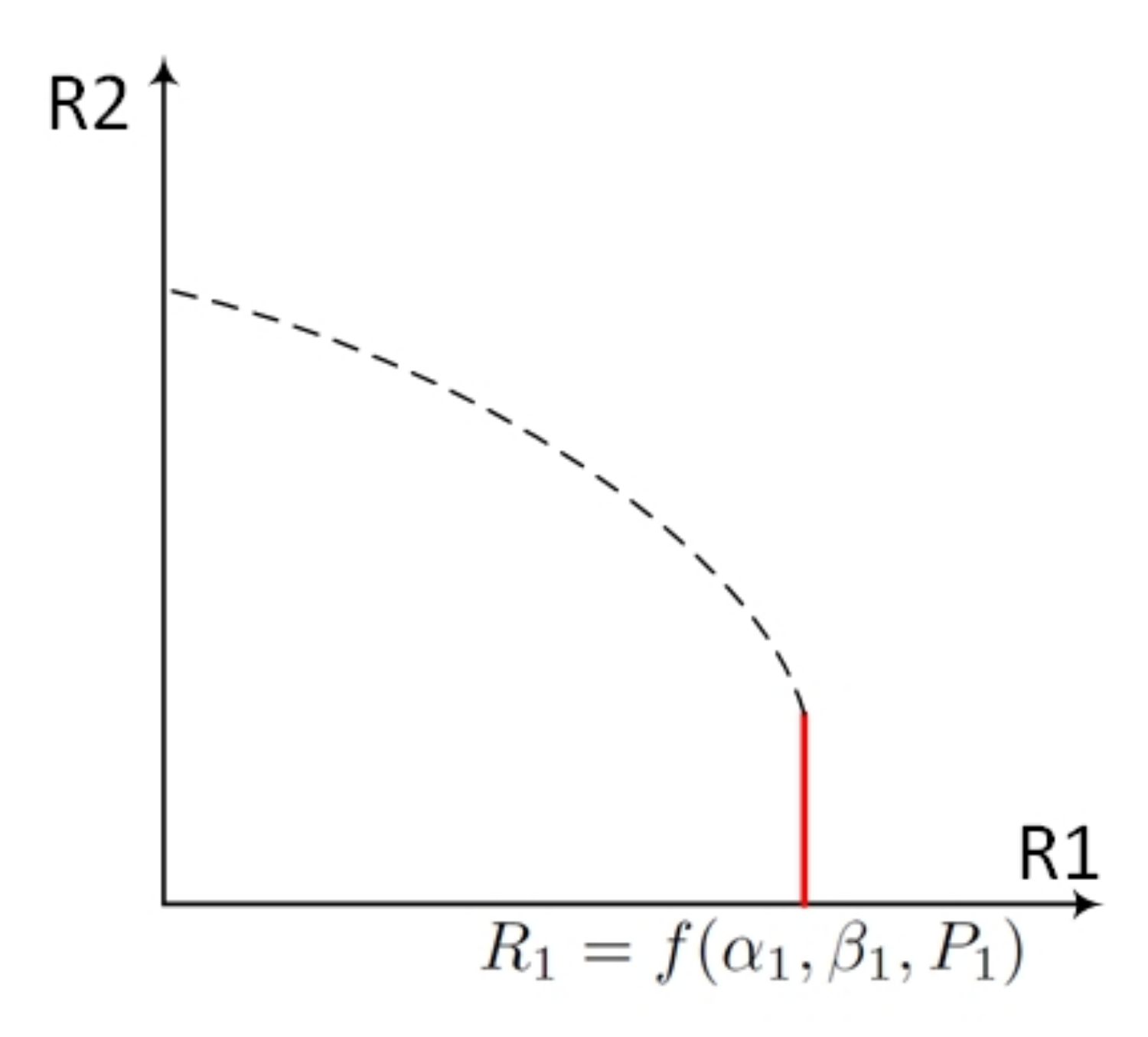}\\
		$\scriptstyle\cA_1\bigcap\cB_2\bigcap\cA_3$&$\scriptstyle\cA_1\bigcap\cB_2\bigcap\cB_3$ \\
		\includegraphics[width=3in]{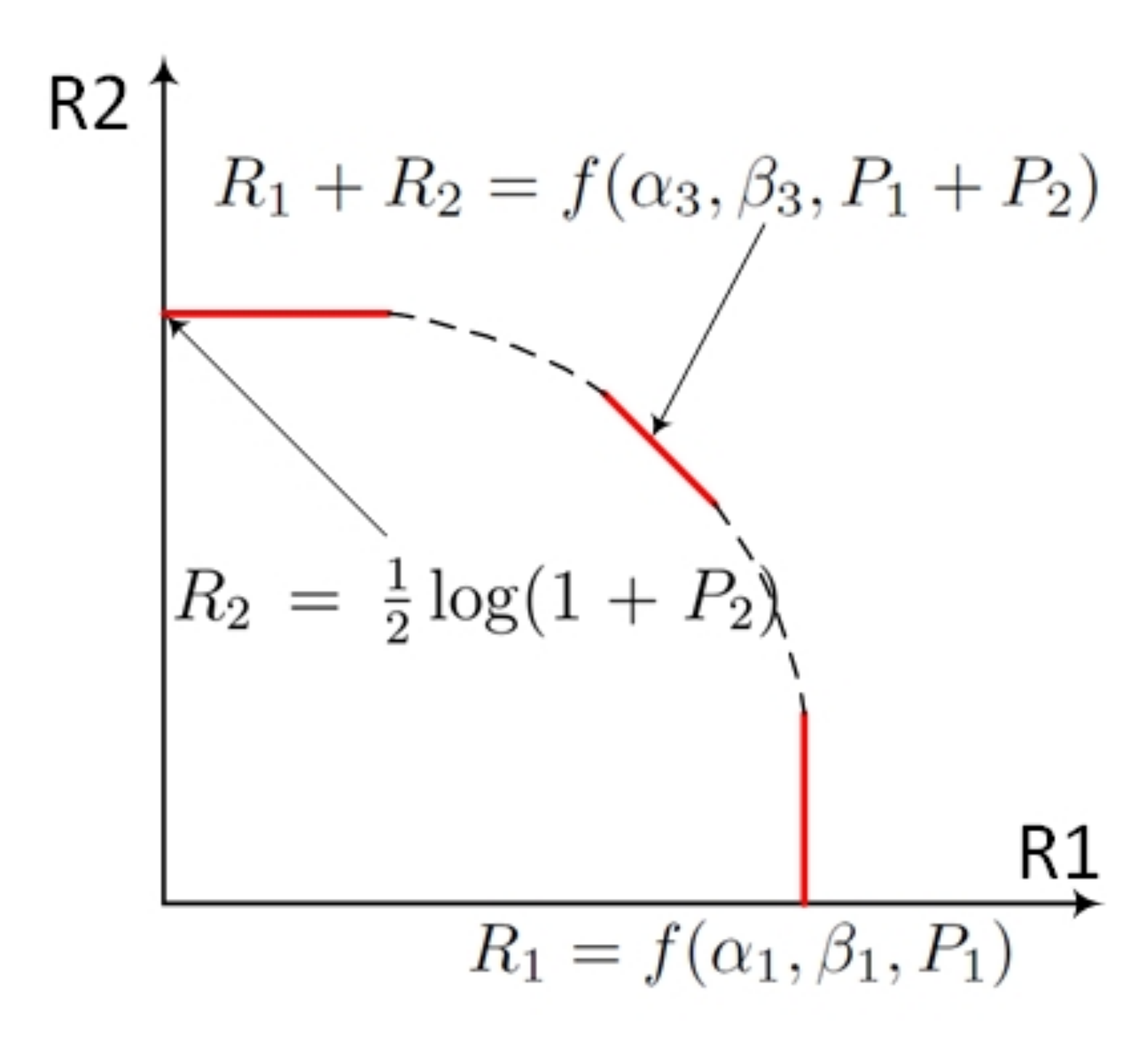}
		&\includegraphics[width=3in]{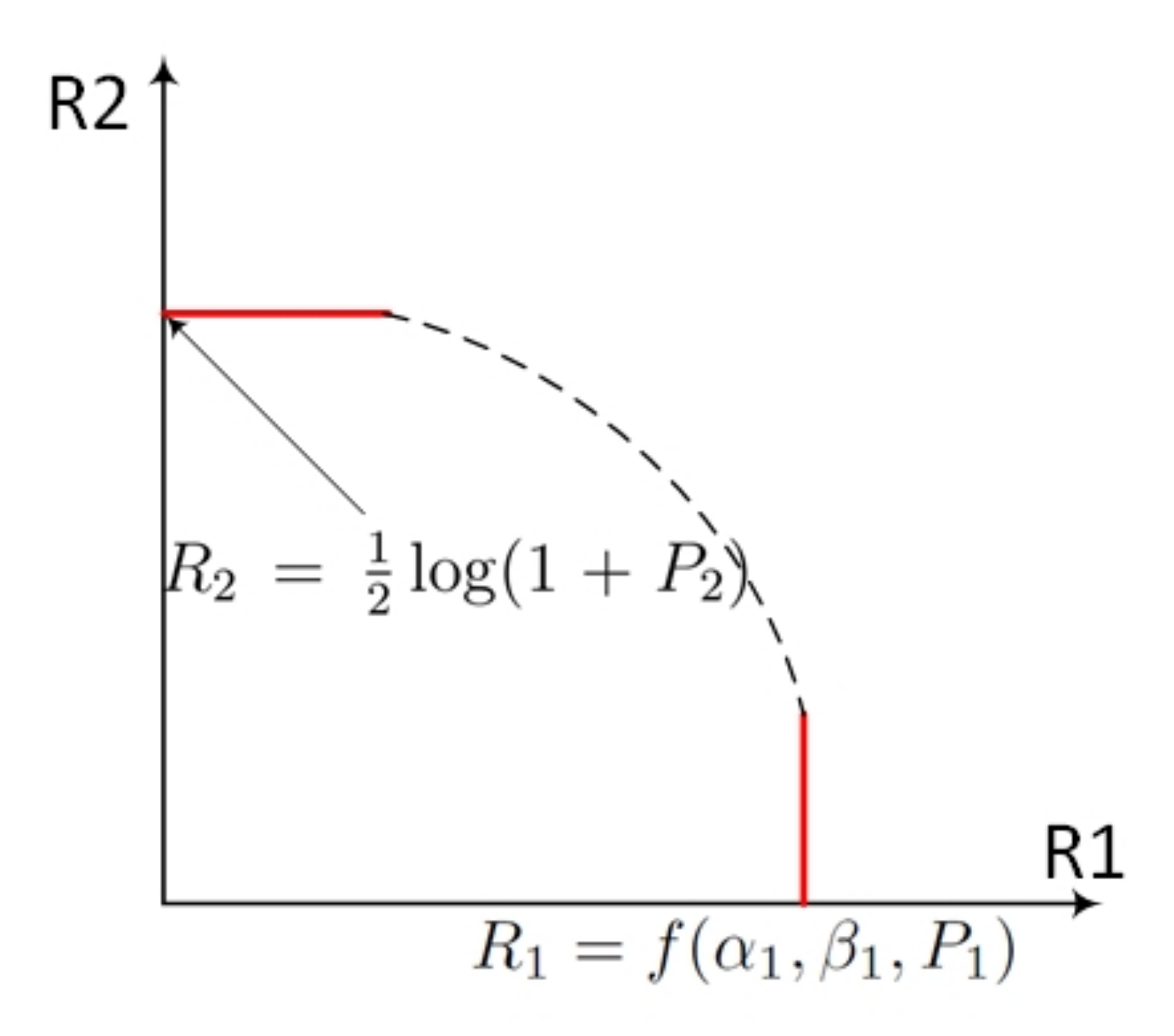}\\
		$\scriptstyle\cA_1\bigcap\cC_2\bigcap\cA_3$&$\scriptstyle\cA_1\bigcap\cC_2\bigcap\cB_3$\\
			\end{tabular}
	\end{figure*}
		\begin{figure*}
			\begin{tabular}{cc}
		\includegraphics[width=3in]{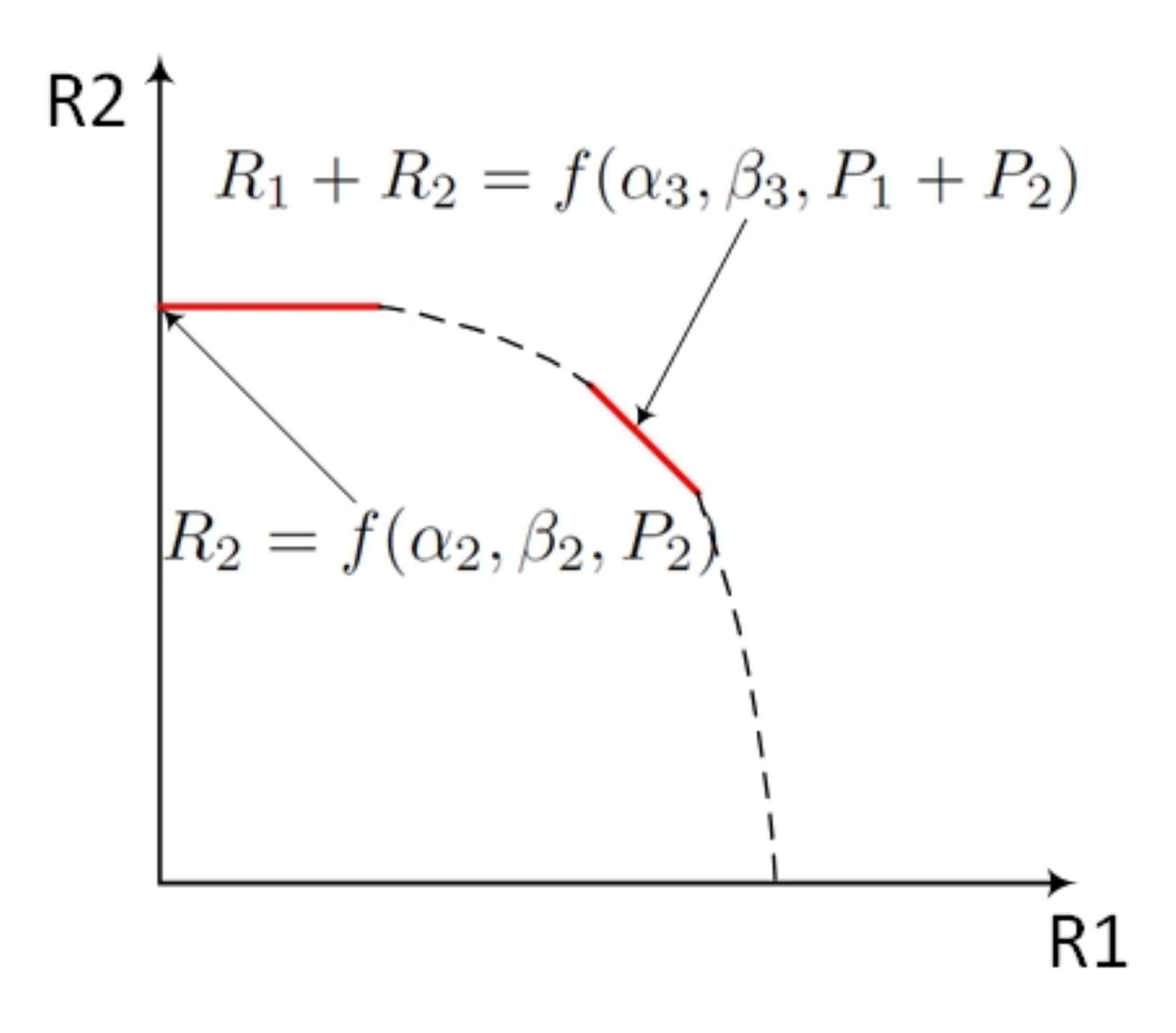}
		&\includegraphics[width=3in]{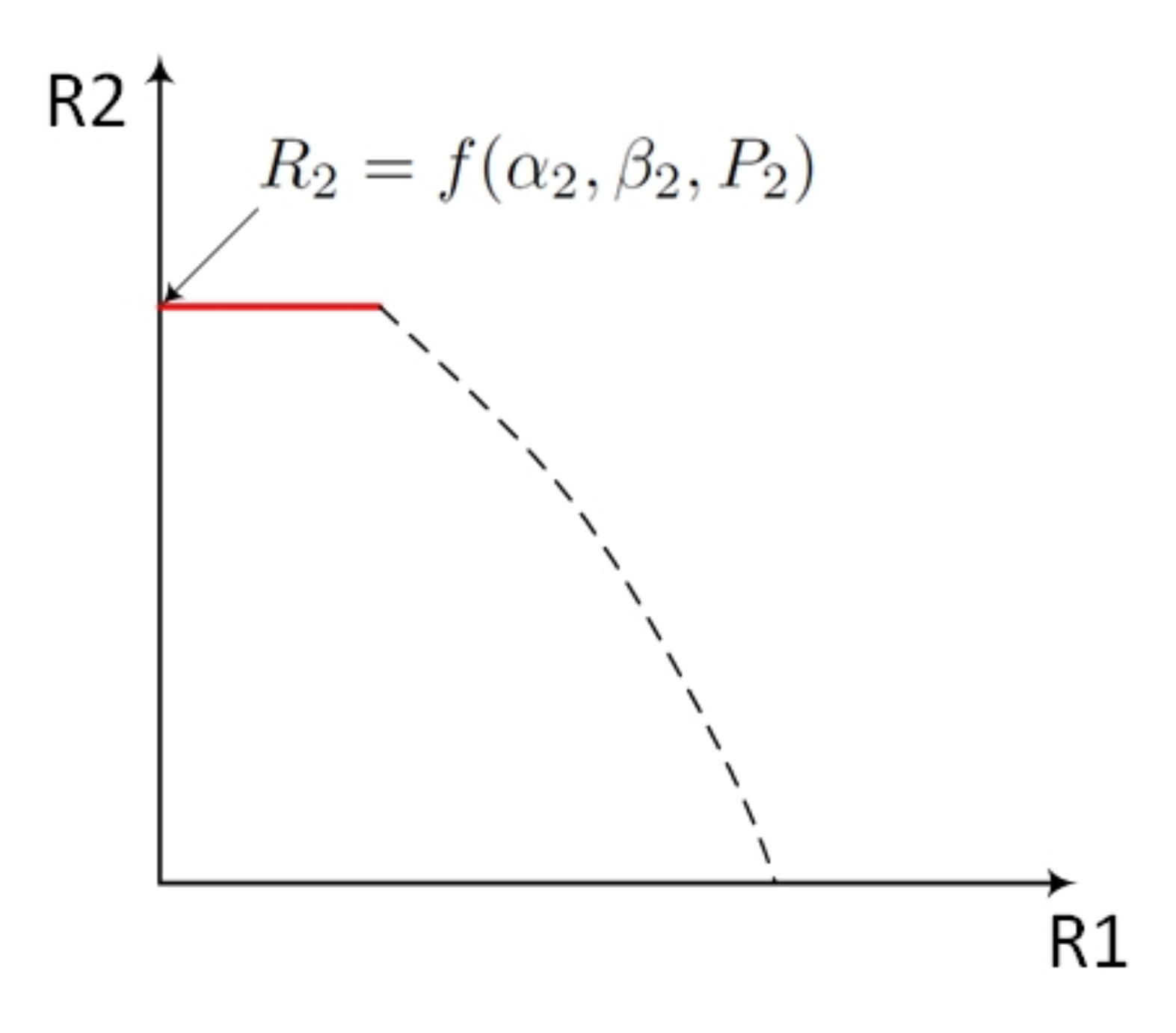} \\
		$\scriptstyle\cB_1\bigcap\cA_2\bigcap\cA_3$&$\scriptstyle\cB_1\bigcap\cA_2\bigcap\cB_3$ \\
		\includegraphics[width=3in]{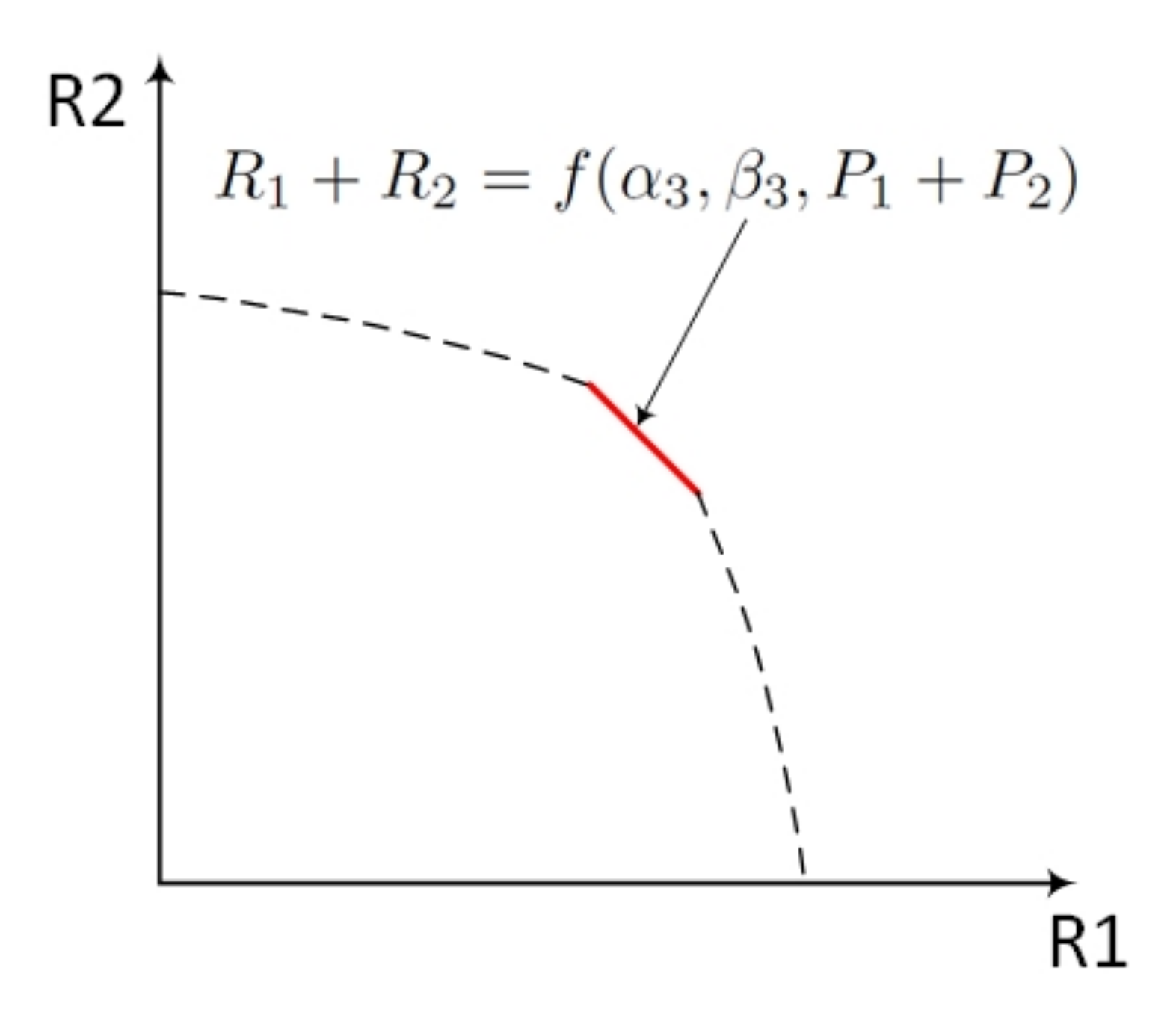}
		&\includegraphics[width=3in]{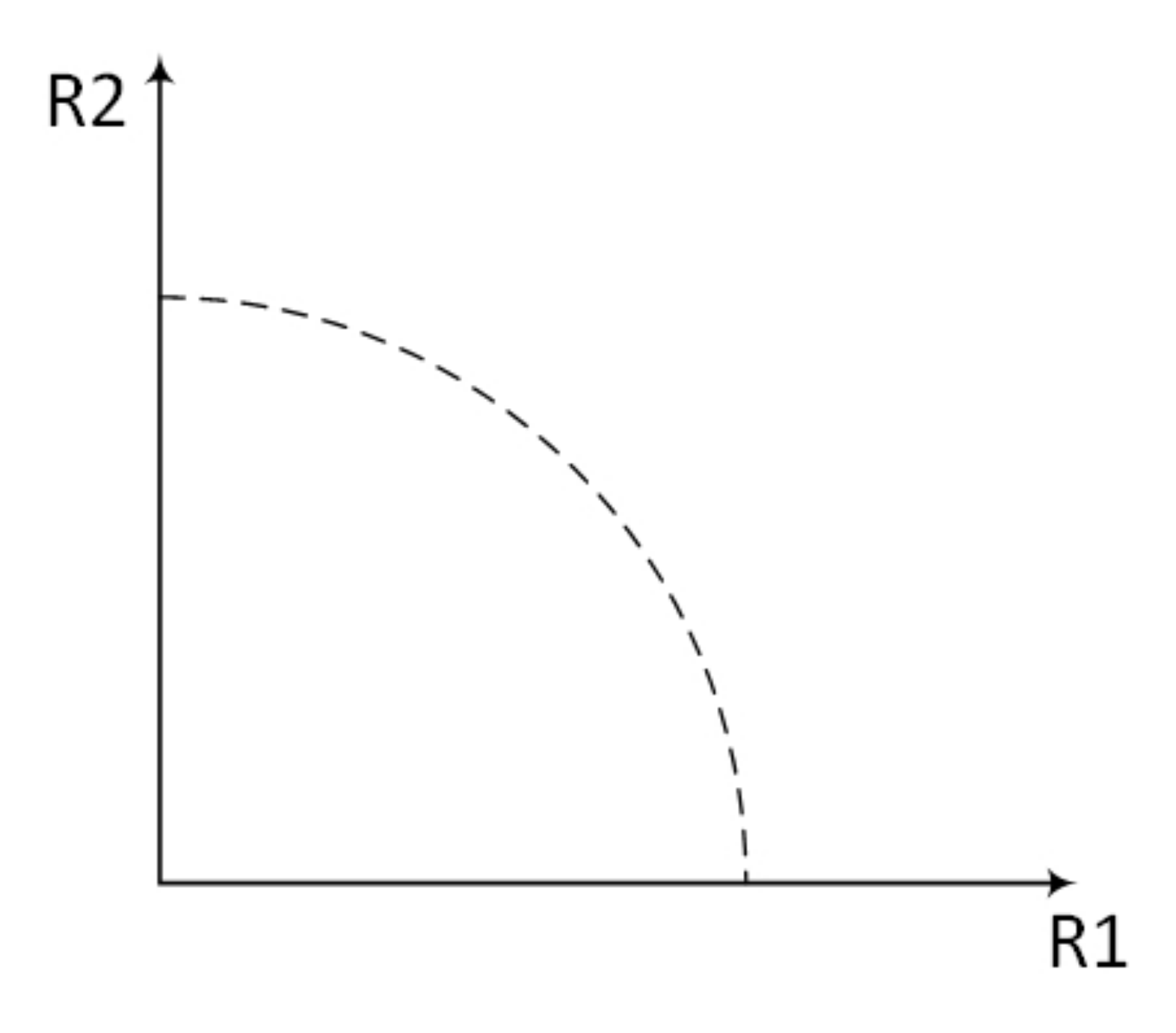}\\
		$\scriptstyle\cB_1\bigcap\cB_2\bigcap\cA_3$&$\scriptstyle\cB_1\bigcap\cB_2\bigcap\cB_3$\\
		\includegraphics[width=3in]{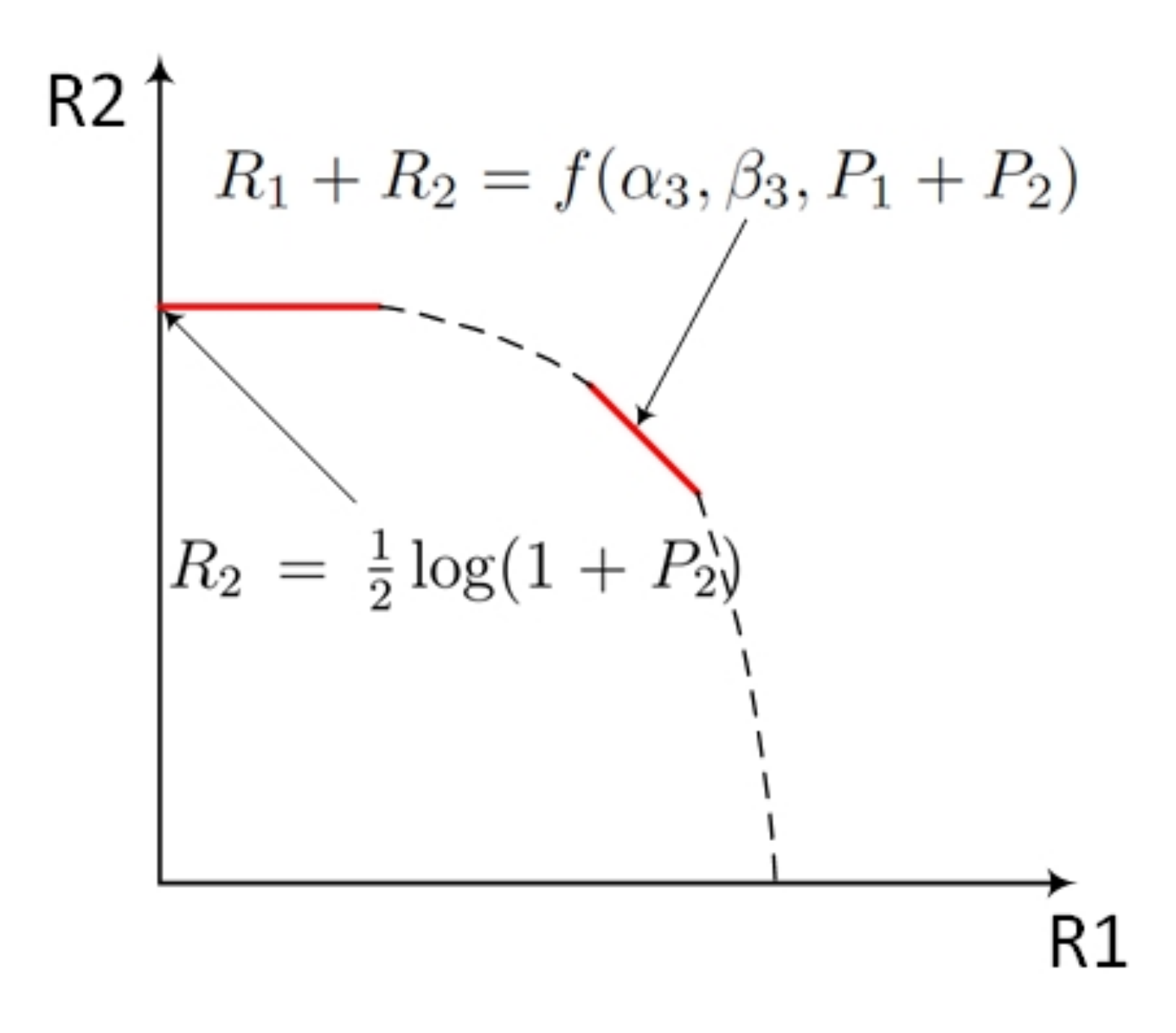}
		&\includegraphics[width=3in]{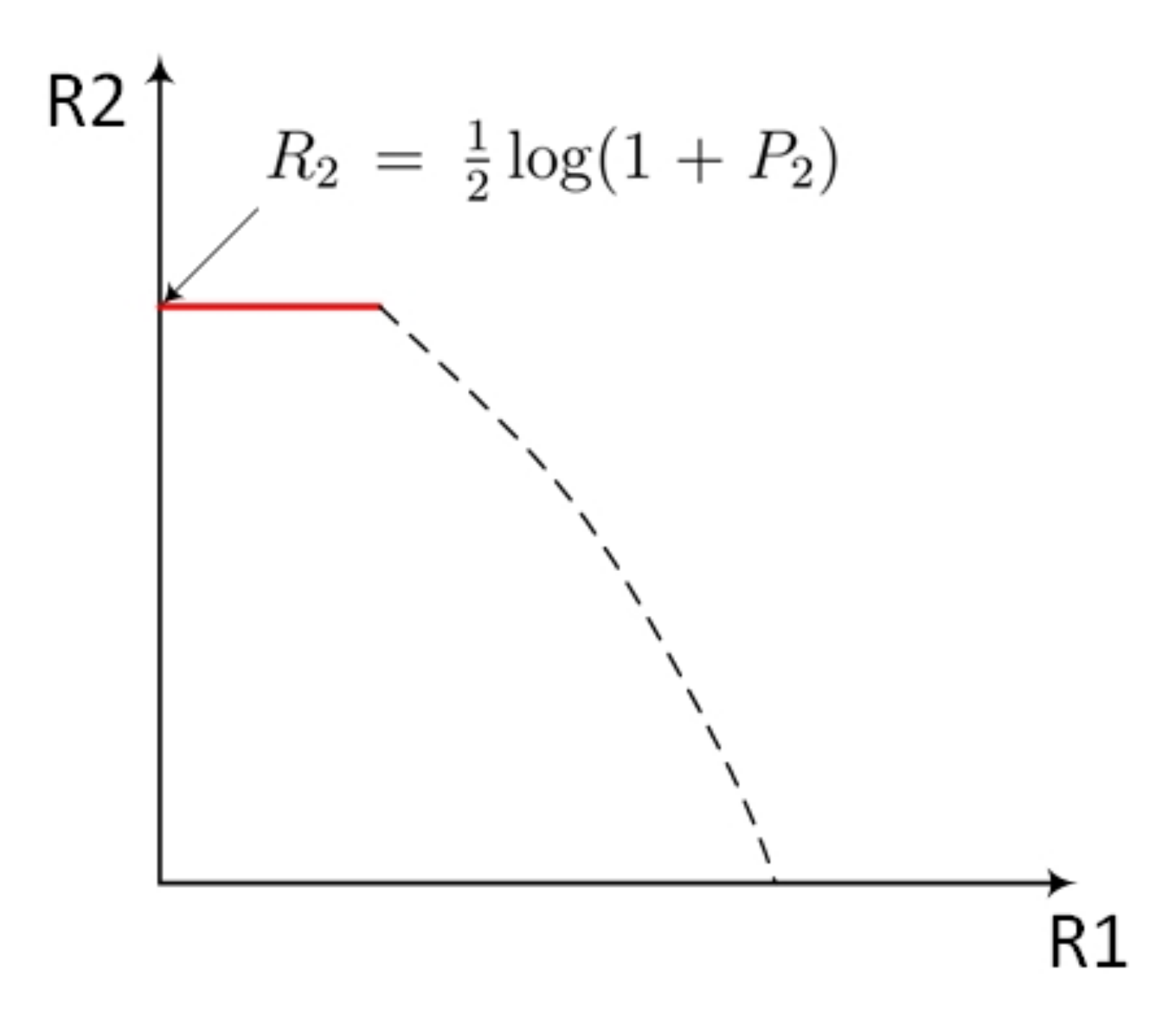}\\
		$\scriptstyle\cB_1\bigcap\cC_2\bigcap\cA_3$&$\scriptstyle\cB_1\bigcap\cC_2\bigcap\cB_3$ \\
					\end{tabular}
	\end{figure*}
\begin{figure*}
\begin{tabular}{cc}
		\includegraphics[width=3in]{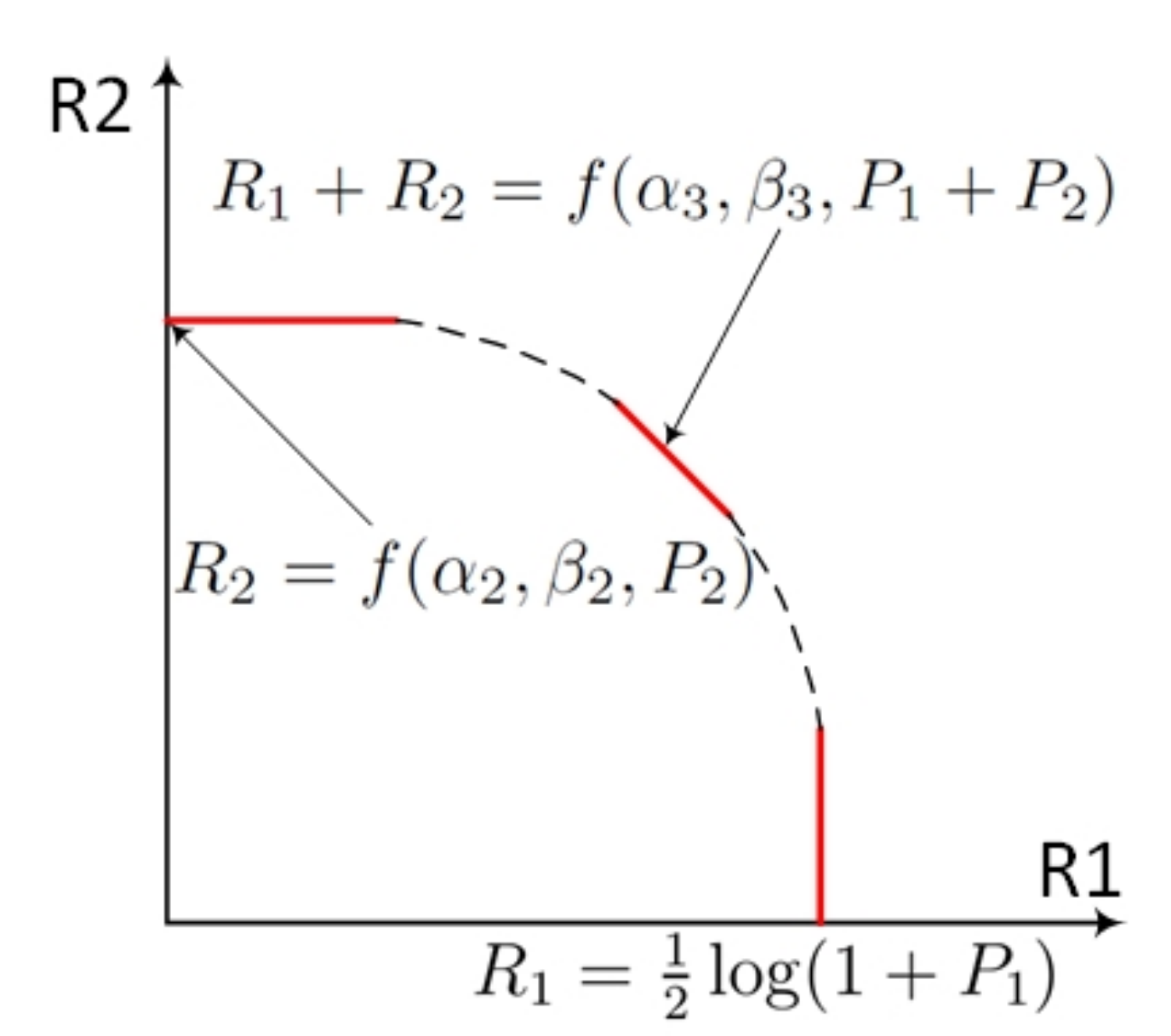}
		&\includegraphics[width=3in]{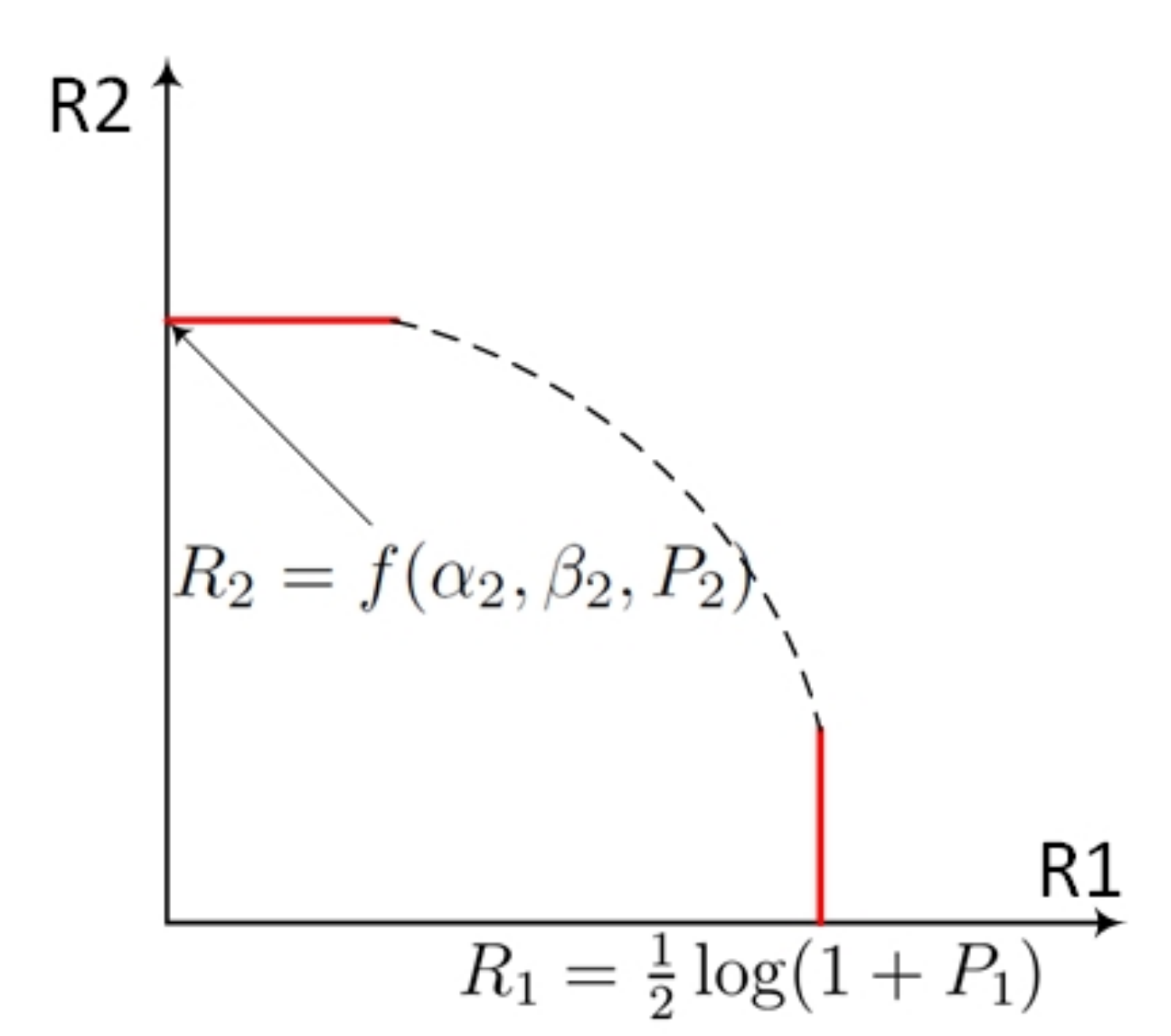}\\
		$\scriptstyle\cC_1\bigcap\cA_2\bigcap\cA_3$&$\scriptstyle\cC_1\bigcap\cA_2\bigcap\cB_3$\\
		\includegraphics[width=3in]{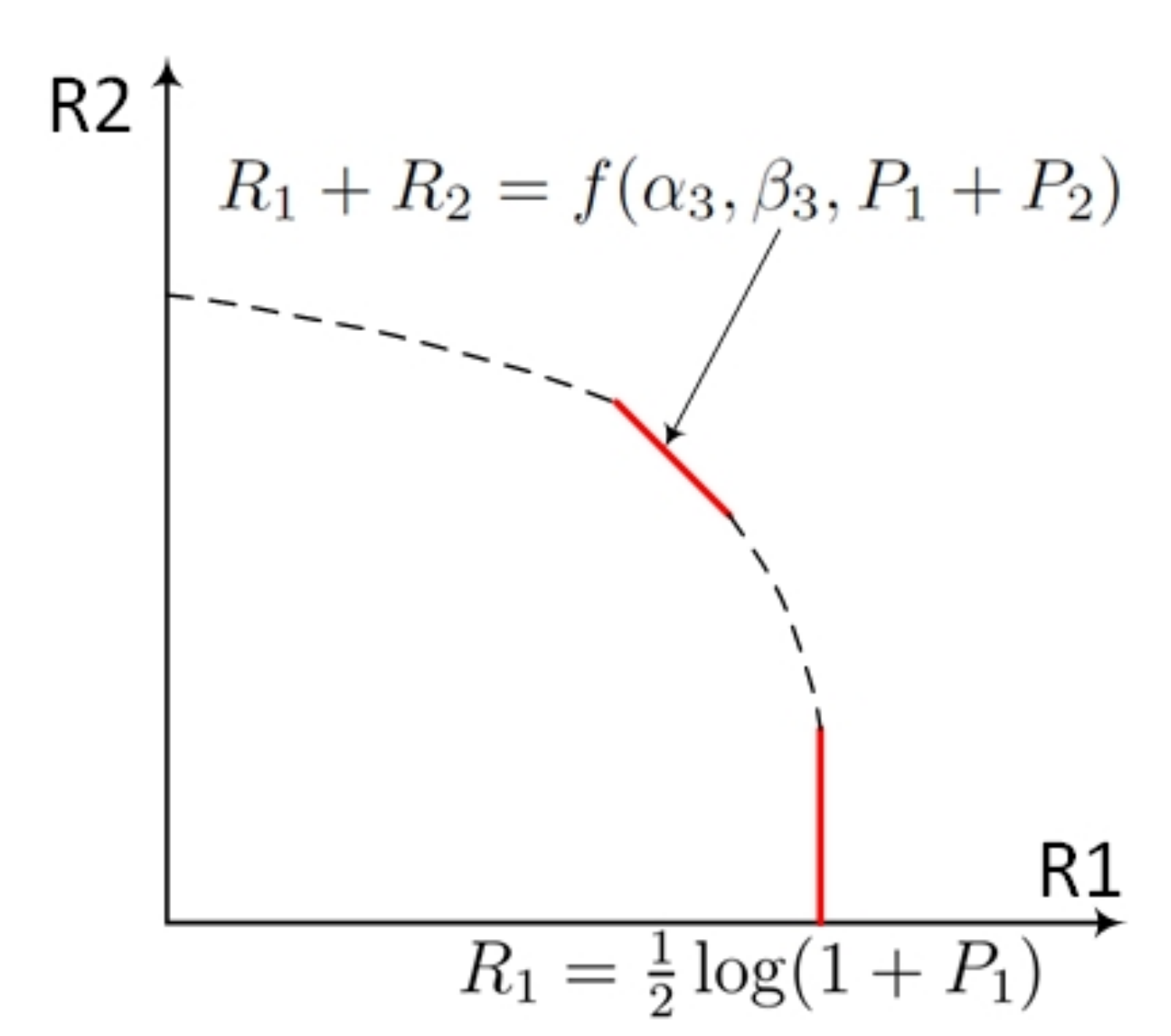}
		&\includegraphics[width=3in]{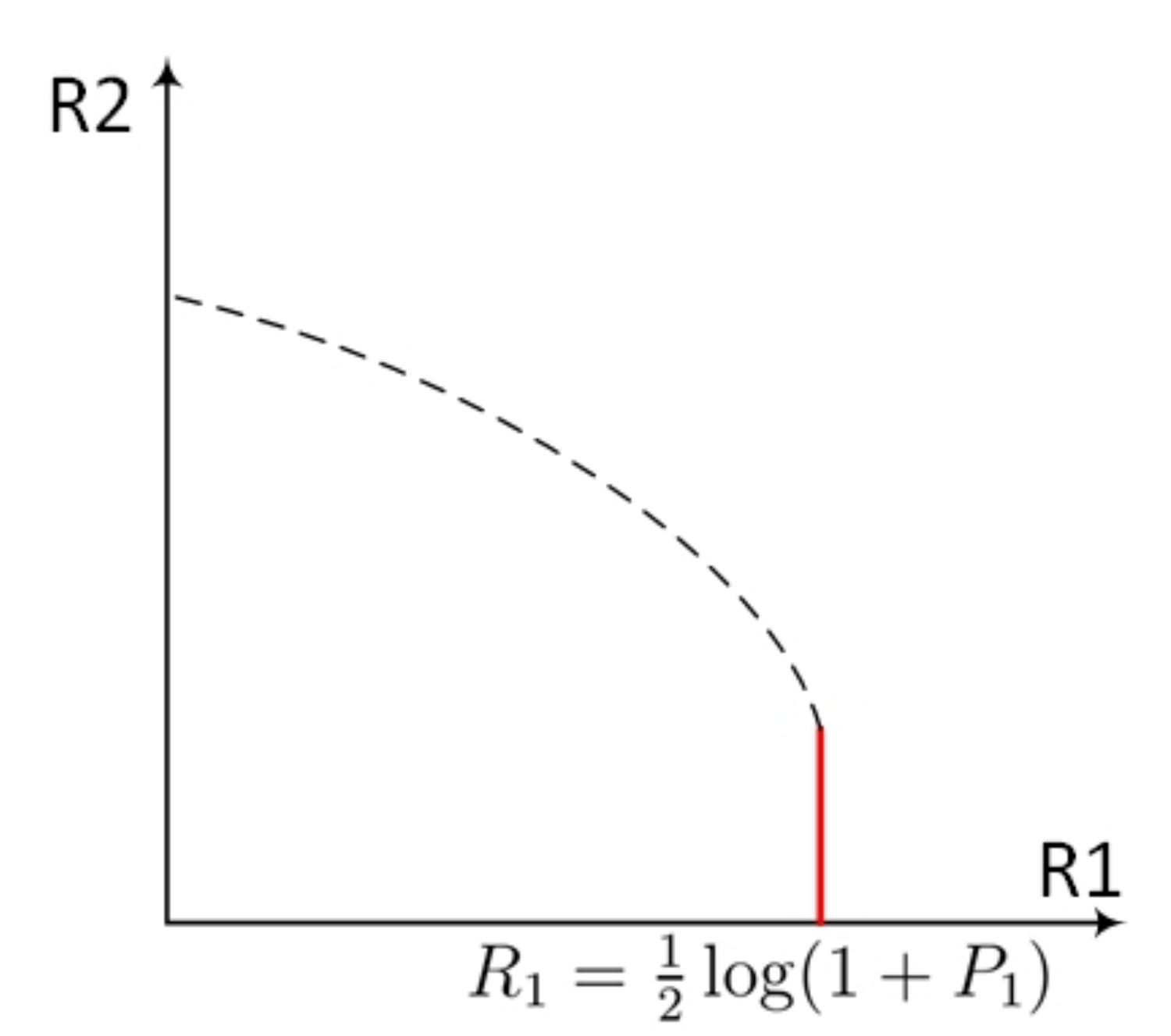}\\
		$\scriptstyle\cC_1\bigcap\cB_2\bigcap\cA_3$&$\scriptstyle\cC_1\bigcap\cB_2\bigcap\cB_3$ \\
		\includegraphics[width=3in]{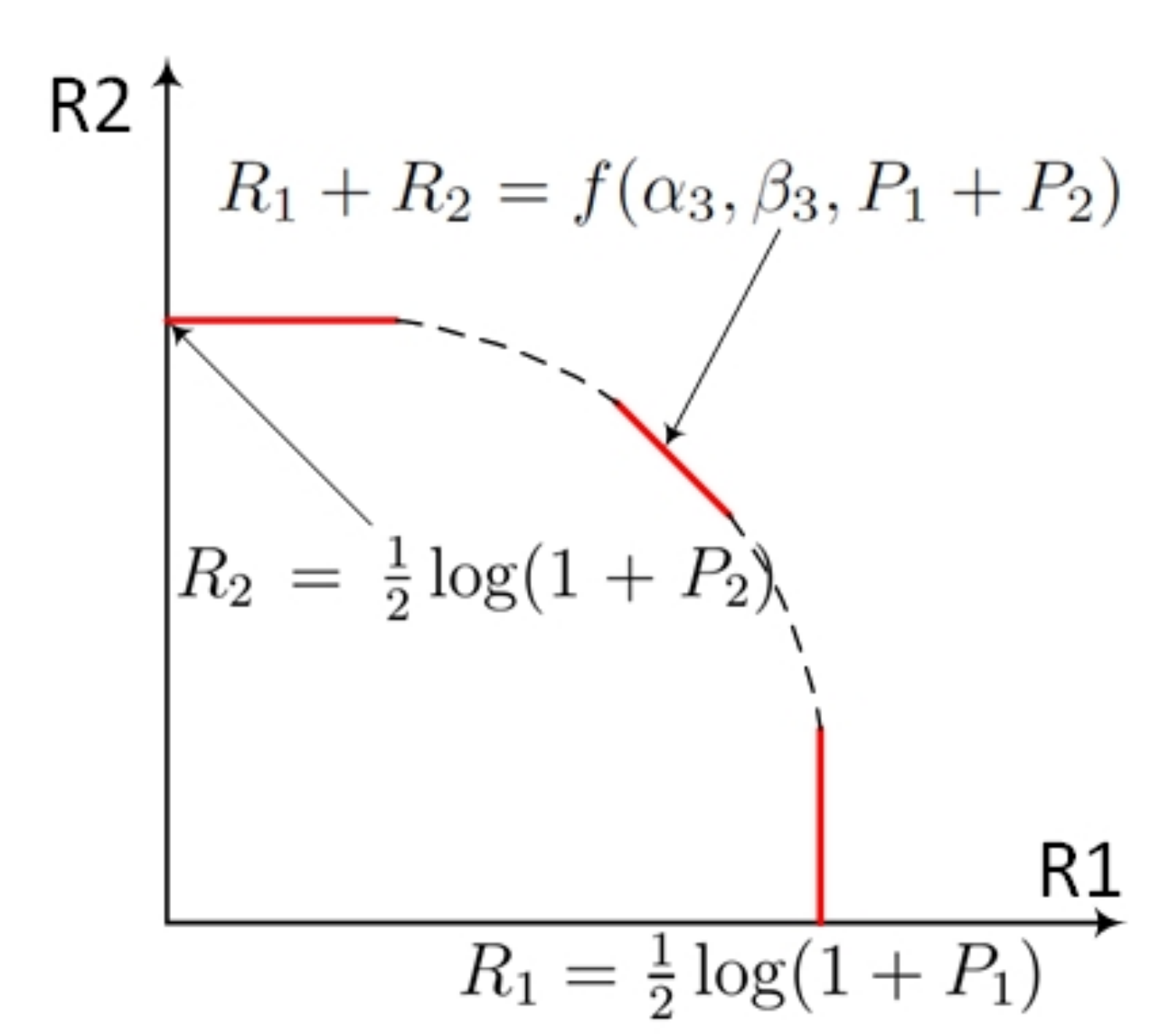}
		&\includegraphics[width=3in]{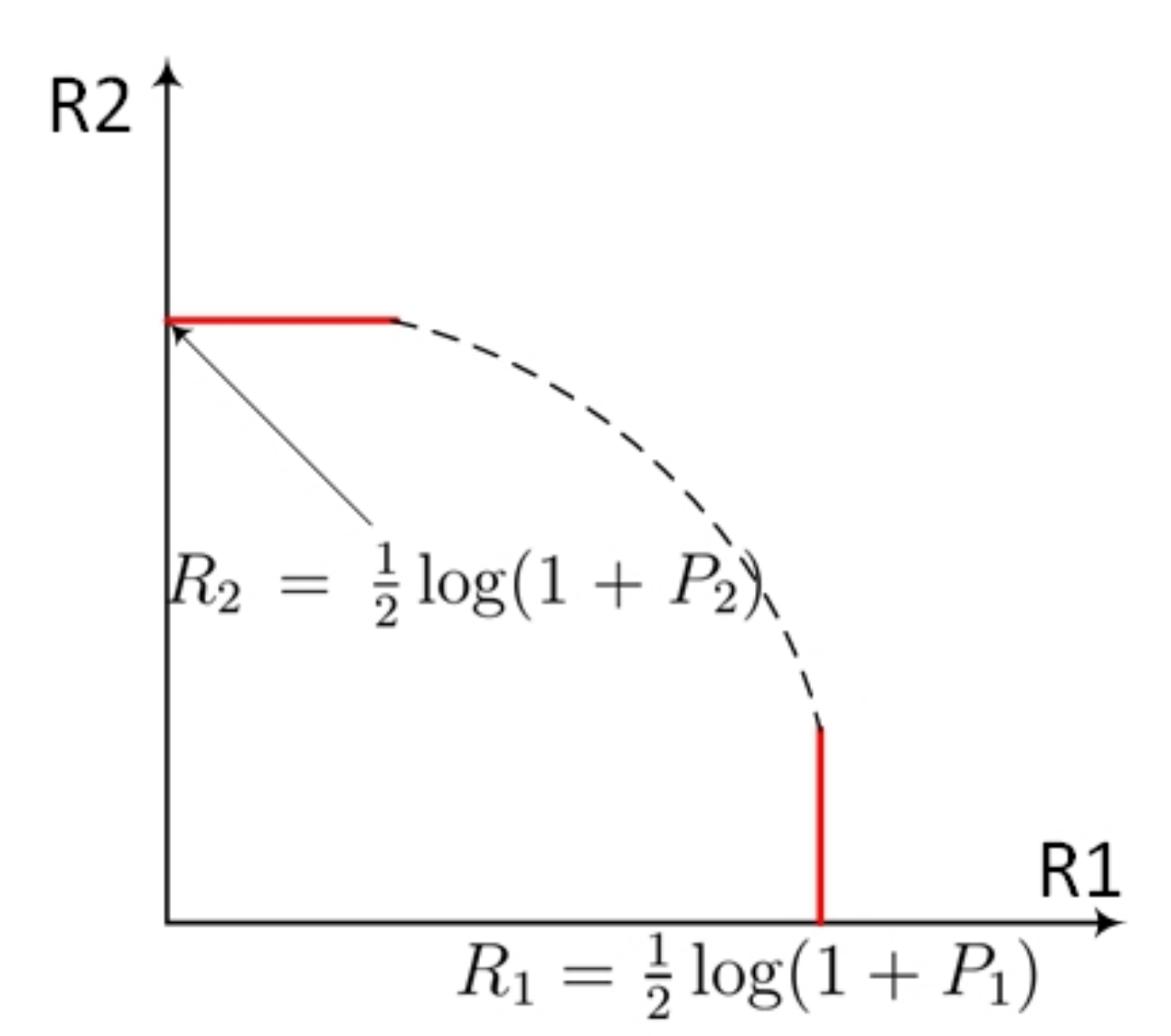}\\
		$\scriptstyle\cC_1\bigcap\cC_2\bigcap\cA_3$&$\scriptstyle\cC_1\bigcap\cC_2\bigcap\cB_3$\\
					\end{tabular}
	\end{figure*}
\begin{figure*}
\begin{tabular}{cc}
		&\includegraphics[width=3in]{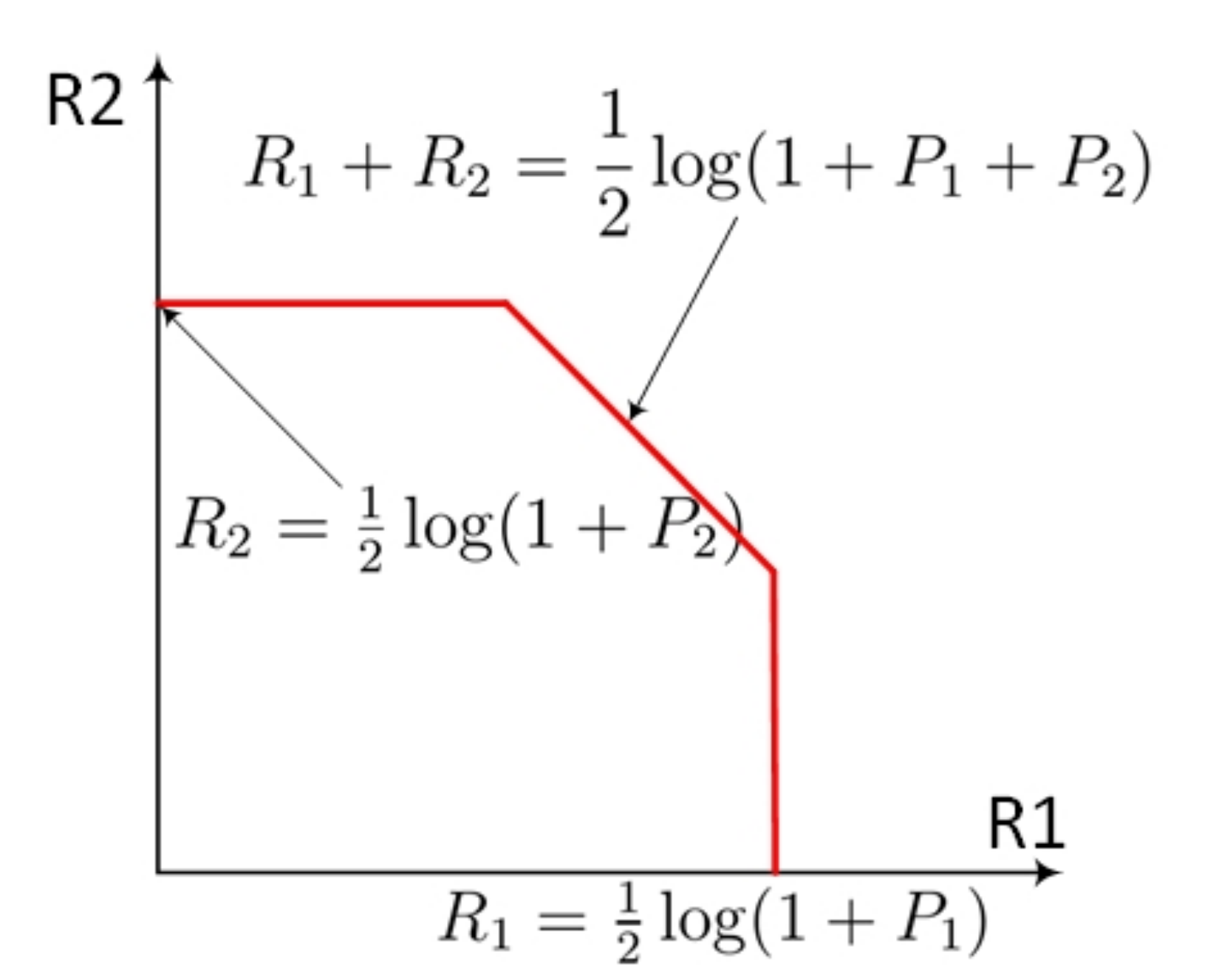} \\
		&$\scriptstyle\cC_1\bigcap\cC_2\bigcap\cC_3$
	\end{tabular}
	\centering
	\caption{Segments of the capacity region for all cases of channel parameters}\label{fig:cases}
\end{figure*}

Fig.~\ref{fig:cases} lists all possible intersections of sets that the channel parameters can belong to. In principle, there should be $3^3=27$ cases. We further note that if $(P_0,P_1,P_2,Q) \in \cC_3$, they must belong to $\cC_1$ and $\cC_2$. Hence, the total number of cases becomes $3^2\times 2+1=19$. For each case in Fig.~\ref{fig:cases}, we use the solid red lines to represent the segments on the capacity region that are characterized in Theorem \ref{th:capacity}, and we also mark the value of the capacity that each segment corresponds to as characterized in Theorem \ref{th:capacity}.

We note that for several cases, segments on the capacity region boundary are characterized to be strictly inside the capacity region of the MAC without the state, i.e., the state cannot be fully canceled. For example, for cases with $(P_0,P_1,P_2,Q) \in (\cA_1\bigcap\cA_2\bigcap\cA_3)$ and $(\cA_1\bigcap\cC_2\bigcap\cA_3)$, sum capacity segments are characterized to be smaller than the sum capacity of the MAC without state. These cases include mostly channel parameters with {\em finite} $Q$, and thus contain much larger sets of channel parameters than \cite{Duan14ISITA} that characterizes such sum capacity segment only for {\em infinite} $Q$.

We further note an interesting case (the last case in Fig.~\ref{fig:cases}), for which the capacity region is fully characterized. We state this result in the following theorem.
\begin{theorem}\label{th:region}
	If $(P_0,P_1,P_2,Q) \in (\cC_1\bigcap\cC_2\bigcap\cC_3)$, i.e.,
	\begin{flalign}\label{eq:cond}
	P_0'^2\ge \alpha^2Q(P_1+P_2+1-P_0'),
	\end{flalign}
	where $P_0'=P_0-(\alpha-1)^2Q$  for some$\alpha \in \Omega_{\alpha}$, then the capacity region of the state-dependent Gaussian MAC contains $(R_1,R_2)$ satisfying
	\begin{flalign}
	R_1 &\leqslant  \frac{1}{2}\log( 1+P_1) \nonumber \\
	R_2 &\leqslant  \frac{1}{2}\log( 1+P_2) \nonumber \\
	R_1+R_2 &\leqslant  \frac{1}{2}\log( 1+P_1+P_2) \nonumber
	\end{flalign}
	which achieves the capacity region of the Gaussian MAC without state.
\end{theorem}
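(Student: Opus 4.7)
The plan is to show that, under the condition \eqref{eq:cond}, both the inner bound of Proposition \ref{pps:Gaussian inner} and the outer bound of Proposition \ref{pps:Gaussian outer} collapse onto the capacity region of the Gaussian MAC without state. The converse direction is immediate: the second terms in the "$\min$" in \eqref{eq:r1outer}--\eqref{eq:r12outer} are precisely the no-state MAC bounds, so Proposition \ref{pps:Gaussian outer} already confines the capacity region to that region.

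For the achievability direction, I would specialize the inner bound of Proposition \ref{pps:Gaussian inner} by choosing $\beta = \alpha - 1$, with $\alpha \in \Omega_{\alpha}$ taken to be the one guaranteed by \eqref{eq:cond}. The key algebraic observation, already noted in \eqref{eq:Inner1-2-2mac}, is that this substitution gives $g(\alpha,\alpha-1,P) = \tfrac{1}{2}\log(1+P)$ for every $P$, matching the no-state single-user rate. The condition $P_0'^{\,2} \ge \alpha^2 Q (P_1 + P_2 + 1 - P_0')$ is exactly what is needed to make $g(\alpha,\alpha-1,P_1+P_2) \le f(\alpha,\alpha-1,P_1+P_2)$, so that \eqref{eq:r12innergau} yields $R_1 + R_2 \le \tfrac{1}{2}\log(1+P_1+P_2)$. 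Since $P_1 + P_2 \ge P_i$ for $i=1,2$, the same inequality automatically implies $P_0'^{\,2} \ge \alpha^2 Q (P_i + 1 - P_0')$, and hence \eqref{eq:r1innergau} and \eqref{eq:r2innergau} simultaneously reduce to $R_i \le \tfrac{1}{2}\log(1+P_i)$ under the \emph{same} choice of $(\alpha,\beta)$.

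It then remains to verify admissibility of the parameter choice. The constraint $\alpha \in \Omega_{\alpha}$ gives $|\beta| = |\alpha - 1| \le \sqrt{P_0/Q}$, which is precisely the feasibility range of $\beta$ in Proposition \ref{pps:Gaussian inner}, and it also ensures $P_0' = P_0 - (\alpha-1)^2 Q \ge 0$, so the auxiliary variance used in constructing $U$ and $X_0$ is legitimate. Combining the three matched rate inequalities, the entire MAC pentagon $\{(R_1,R_2): R_1 \le \tfrac{1}{2}\log(1+P_1),\ R_2 \le \tfrac{1}{2}\log(1+P_2),\ R_1 + R_2 \le \tfrac{1}{2}\log(1+P_1+P_2)\}$ is achievable, and the theorem follows.

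The main technical step is simply the simplification $g(\alpha,\alpha-1,P) = \tfrac{1}{2}\log(1+P)$, which follows from substituting $\beta = \alpha - 1$ into \eqref{eq:Inner1-2mac} and cancelling the common factor $P_0' + \alpha^2 Q$. No additional machinery is required; the only subtlety is the observation that \emph{one single} parameter pair $(\alpha,\alpha-1)$ makes all three inner-bound constraints tight at the no-state MAC boundary, which is exactly why the condition is phrased in terms of the sum $P_1 + P_2$ (the strongest of the three required inequalities).
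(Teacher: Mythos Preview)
Your proposal is correct and follows essentially the same approach as the paper: the converse comes directly from the second terms in Proposition~\ref{pps:Gaussian outer}, and the achievability follows by specializing Proposition~\ref{pps:Gaussian inner} with $\beta=\alpha-1$, using the monotonicity of the condition $P_0'^2\ge\alpha^2 Q(P+1-P_0')$ in $P$ to infer that the single $\alpha$ witnessing $\cC_3$ also witnesses $\cC_1$ and $\cC_2$. The paper makes this argument implicitly via Theorem~\ref{th:capacity} and the remark that $\cC_3\subset\cC_1\cap\cC_2$; you have simply written it out in full.
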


Theorem \ref{th:region} implies that the state is fully canceled if the channel parameters satisfy the condition \eqref{eq:cond}. We further note two special sets of channel parameters in this case. First, if $P_0 \ge Q$, then $\alpha=0\in \Omega_{\alpha}$ and the condition clearly holds. This is not surprising because the helper has enough power to directly cancel the state. Secondly, if $P_1+P_2+1\leq P_0<Q$, then the condition holds for $\alpha=1\in \Omega_{\alpha}$ for arbitrarily large $Q$. This implies that if the helper's power is above a certain threshold, then the state can always be canceled for arbitrary state power $Q$ (even for infinite $Q$).

\newpage
\section{Technical Proofs}

\subsection{Proof of Proposition \ref{pps:Gaussian outer}}\label{apx:OuterGaussian}

The second bounds in "min" in \eqref{eq:r1outer}-\eqref{eq:r12outer} follow from the capacity of the Gaussian MAC without state. The remaining bounds arise due to capability of the helper for assisting state cancelation and are derived as follows.

Consider a $(2^{nR_1}, 2^{nR_2},n)$ code with an average error probability $P_e^{(n)}$. The probability distribution on $W_1 \times W_2 \times S^n \times X_0^n \times X_1^n\times X_2^n \times Y^n$ is given by
\begin{flalign}\label{eq:proba}
P_{W_1W_2S^nX_1^nX_2^nX_0^nY^n}=P_{W_1}P_{W_2}\left[{\prod_{i=1}^n{P_{S_i}}}\right]P_{X_1^n|W_1}P_{X_2^n|W_2}P_{X_0^n|S^n}\prod_{i=1}^n{P_{Y_{i}|X_{1i}X_{2i}X_{0i}S_i}}.
\end{flalign}

By Fano's inequality, we have
\begin{flalign}
H(W_1W_2|Y^n) & \leqslant n(R_1+R_2)P_e^{(n)}+1 = n\delta_{n} \label{eq:Fano}
\end{flalign}
where $\delta_{n} \to 0$ as $n\to +\infty$.

We first bound $R_1$ based on Fano's inequality as follows:
\begin{flalign}
nR_1 &\leqslant I(W_1;Y^n)+n\delta_{n}\nn\\
&\leqslant I(X_1^n;Y^n)+n\delta_{n}\nn\\
&= H(X_1^n)-H(X_1^n|Y^n)+n\delta_{n}\nn\\
&\overset{(a)}{\leqslant} H(X_1^n|X_2^n)-H(X_1^n|X_2^nY^n)+n\delta_{n}\nn\\
&= I(X_1^n;Y^n|X_2^n)+n\delta_{n}\nn\\
&= H(Y^n|X_2^n)-H(Y^n|X_1^nX_2^n)]+n\delta_{n}\nn\\
&= H(Y^n|X_2^n)-H(S^nY^n|X_1^nX_2^n)+H(S^n|X_1^nX_2^nY^n)+n\delta_{n}\nn\\
&= H(Y^n|X_2^n)-H(Y^n|S^nX_1^nX_2^n)-H(S^n|X_1^nX_2^n)+H(S^n|X_1^nX_2^nY^n)+n\delta_{n}\nn\\
&\leqslant H(Y^n|X_2^n)-H(Y^n|S^nX_0^nX_1^nX_2^n)-H(S^n)+H(S^n|X_1^nX_2^nY^n)+n\delta_{n}\nn\\
&\overset{(b)}{\leqslant } \sum_{i=1}^n [ H(Y_{i}|X_{2i})- H(Y_{i}|S_iX_{0i}X_{1i}X_{2i}) -H(S_i)+H(S_i|X_{1i}X_{2i}Y_i)] +n\delta_{n} \label{eq:r1}
\end{flalign}
where (a) follows because $X_1^n$ and $X_2^n$ are independent, and (b) follows because $S^n$ is an i.i.d.\ sequence.

We bound the first term in the above equation as
\begin{flalign}
\frac{1}{n}& \sum_{i=1}^n h(Y_{i}|X_{2i}) \nn \\
&\leqslant \frac{1}{2n} \sum_{i=1}^n \log 2\pi e(Var(X_{1i}+X_{0i}+S_i+N_i)) \nn \\
& = \frac{1}{2n} \sum_{i=1}^n \log 2\pi e(Var(X_{1i})+Var(X_{0i}+S_i)+Var(N_i)) \nn \\
&\leqslant \frac{1}{2n} \sum_{i=1}^n \log 2\pi e\Big(E[X_{1i}^2]+E[X_{0i}^2]+ 2E(X_{0i}S_i) + E[S_i^2] + E[N_i^2])\Big)\nn\\
&\leqslant \frac{1}{2} \log 2\pi e \Bigg(\frac{1}{n}\sum_{i=1}^n E[X_{1i}^2]+\frac{1}{n}\sum_{i=1}^n E[X_{0i}^2]+ \frac{2}{n}\sum_{i=1}^n E(X_{0i}S_i)+ \frac{1}{n}\sum_{i=1}^n E[S_i^2] \\
&\hspace{1cm}+ \frac{1}{n}\sum_{i=1}^n E[N_i^2])\Bigg)\nn\\
&\leqslant \frac{1}{2} \log 2\pi e\Bigg(P_1 + P_0 + Q +1+ \frac{2}{n} \sum_{i=1}^n E(X_{0i}S_i)\Bigg)\nn\\
&\leqslant\frac{1}{2} \log 2\pi e\Big(P_1 + P_0 + Q +1 + 2\rho_{0s}\sqrt{P_0Q}\Big) \label{eq:r1-1}
\end{flalign}
where $\rho_{0s}=\frac{1}{n\sqrt{P_0Q}} \sum_{i=1}^n E(X_{0i}S_i)$.

It is easy to obtain bounds on the second and third terms in \eqref{eq:r1} as follows.
\begin{flalign}
& \frac{1}{n}\sum_{i=1}^n H(Y_{i}|S_iX_{0i}X_{1i}X_{2i}) =\frac{1}{2} \log 2\pi e \label{eq:r1-2}\\
& \frac{1}{n}\sum_{i=1}^n H(S_i)=\frac{1}{2} \log 2\pi e Q \label{eq:r1-3}
\end{flalign}

We next bound the last term in \eqref{eq:r1} as follows.
\begin{flalign}
\frac{1}{n} \sum_{i=1}^n  h(S_i|X_{1i} X_{2i} Y_{i}) &=\frac{1}{n} \sum_{i=1}^n  h(S_i|X_{0i}+S_i+N_i)\nn\\
&\leq \frac{1}{n} \sum_{i=1}^n  h(S_i-\alpha(X_{0i}+S_i+N_i)|X_{0i}+S_i+N_i)\nn\\
&\leqslant \frac{1}{n} \sum_{i=1}^n  h(S_i-\alpha(X_{0i}+S_i+N_i)) \nn\\
&=\frac{1}{2} \log 2\pi e(\alpha^2 P_0+(1-\alpha)^2Q-2\alpha(1-\alpha)\rho_{0s}\sqrt{P_0Q}+\alpha^2)\nn\\
&=\frac{1}{2} \log 2\pi e\biggl( \frac{Q+(P_0-\rho_{0S}^2P_0) Q)}{Q+2\rho_{0S}\sqrt{P_0Q}+P_0+1}\biggl)\label{eq:r1-4}
\end{flalign}
where the last equation follows by setting $$\alpha=\frac{\rho_{0s}\sqrt{P_0Q}+Q}{1+P_0+Q+2\rho_{0s}\sqrt{P_0Q}}$$ so that $S_i-\alpha(X_{0i}+S_i+N_i)$ and $X_{0i}+S_i+N_i$ are uncorrelated.

Combining the above four bounds, we obtain the following upper bound on $R_1$.
\begin{flalign}
R_1\leqslant & \frac{1}{2} \log2\pi e(1+P_0+P_1+Q+2\rho_{0s}\sqrt{P_0Q}) -\frac{1}{2}\log 2\pi e -\frac{1}{2} \log 2\pi eQ\nn\\
\ &+\frac{1}{2} \log 2\pi e\biggl( \frac{Q+(P_0-\rho_{0S}^2P_0) Q)}{Q+2\rho_{0S}\sqrt{P_0Q}+P_0+1}\biggl)\nn\\
\leqslant & \frac{1}{2}\log(1+\frac{P_1}{Q+2\rho_{0S}\sqrt{P_0Q}+P_0+1})+ \frac{1}{2}\log(1+P_0-\rho_{0S}^2P_0)
\end{flalign}

Similarly, we can derive an upper bound for $R_2$ as
\begin{flalign}
R_2 \leqslant & \frac{1}
{2}\log(1+\frac{P_2}{Q+2\rho_{0S}\sqrt{P_0Q}+P_0+1})+ \frac{1}{2}\log(1+P_0-\rho_{0S}^2P_0).
\end{flalign}

We further bound $R_1+R_2$ following similar arguments. We highlight some important steps below.
\begin{flalign}
n(R_1+R_2) &\leqslant I(W_1W_2;Y^n)+n\delta_{n}\nn\\
&\leqslant I(X_1^nX_2^n;Y^n)+n\delta_{n}\nn\\
&= H(Y^n)-H(Y^n|X_1^nX_2^n)+n\delta_{n}\nn\\
&= H(Y^n)-H(S^nY^n|X_1^nX_2^n)+H(S^n|X_1^nX_2^nY^n)+n\delta_{n}\nn\\
&= H(Y^n)-H(Y^n|S^nX_1^nX_2^n)-H(S^n|X_1^nX_2^n)+H(S^n|X_1^nX_2^nY^n)\nn\\
&\hspace{1cm} +n\delta_{n}\nn\\
&\leqslant H(Y^n)-H(Y^n|S^nX_0^nX_1^nX_2^n)-H(S^n)+H(S^n|X_1^nX_2^nY^n)+n\delta_{n}\nn\\
&\leqslant  \sum_{i=1}^n [ H(Y_{i})- H(Y_{i}|S_iX_{0i}X_{1i}X_{2i}) -H(S_i)+H(S_i|X_{1i}X_{2i}Y_i)] +n\delta_{n}\label{eq:r12}
\end{flalign}

The first term in \eqref{eq:r12} can be bounded as follows.
\begin{flalign}
\frac{1}{n} \sum_{i=1}^n h(Y_{i})
&\leqslant \frac{1}{2n} \sum_{i=1}^n \log 2\pi e(Var(X_{1i}+X_{2i}+X_{0i}+S_i+N_i)) \nn \\
&= \frac{1}{2n} \sum_{i=1}^n \log 2\pi e(Var(X_{1i})+Var(X_{2i})+Var(X_{0i}+S_i)+Var(N_i)) \nn \\
&\leqslant \frac{1}{2n} \sum_{i=1}^n \log 2\pi e\Big(E[X_{1i}^2]+E[X_{2i}^2]+E[X_{0i}^2]+ 2E(X_{0i}S_i)+ E[S_i^2]\nn\\
&\hspace{1cm}  + E[N_i^2])\Big)\nn\\
&\leqslant \frac{1}{2} \log 2\pi e \Bigg(\frac{1}{n}\sum_{i=1}^n E[X_{1i}^2]+\frac{1}{n}\sum_{i=1}^n E[X_{2i}^2]+\frac{1}{n}\sum_{i=1}^n E[X_{0i}^2] \nn\\
&\hspace{1cm}+ \frac{2}{n}\sum_{i=1}^n E(X_{0i}S_i)+ \frac{1}{n}\sum_{i=1}^n E[S_i^2] + \frac{1}{n}\sum_{i=1}^n E[N_i^2])\Bigg)\nn\\
&\leqslant\frac{1}{2} \log 2\pi e\Big(P_1 + P_2 + Q +1 + 2\rho_{0s}\sqrt{P_0Q}\Big) \label{eq:r12-1}
\end{flalign}

Other bounds in \eqref{eq:r12} can be bounded in the way as in \eqref{eq:r1-2}, \eqref{eq:r1-3}, and \eqref{eq:r1-4}. Combining these bounds with \eqref{eq:r12-1}, we obtain the following desired upper bound on $R_1+R_2$.
\begin{flalign}
R_1+R_2 \leqslant & \frac{1}{2}\log\left(1+\frac{P_1+P_2}{Q+2\rho_{0S}\sqrt{P_0Q}+P_0+1} \right)+ \frac{1}{2}\log(1+P_0-\rho_{0S}^2P_0)
\end{flalign}

\subsection{Proof of Proposition \ref{pps:DMC inner}}\label{apx:DMC inner}
We use random codes and fix the following joint distribution:
\begin{flalign}
P_{SUX_0X_1X_2Y}=P_{SU}P_{X_0|SU}P_{X_1}P_{X_2}P_{Y|SX_0X_1X_2}.\nn
\end{flalign}
Let $T_\epsilon^n(P_{SUX_0X_1X_2Y})$ denote the strongly joint $\epsilon$-typical set based on the above distribution. For a given sequence $x^n$, let $T_\epsilon^n(P_{U|X}|x^n)$ denote the set of sequences $u^n$ such that $(u^n, x^n)$ is jointly typical based on the distribution $P_{XU}$.

\begin{enumerate}
	\item Codebook Generation:
	\begin{itemize}
		\item Generate $2^{n\tR}$ codewords $u^n(v)$ with i.i.d.\ components based on $P_U$. Index these codewords by $v = 1, \ldots, 2^{n\tR}$.
		\item Generate $2^{nR_1}$ codewords $x_1^n(w_1)$ with i.i.d.\ components based on $P_{X_1}$. Index these codewords by $w_1 =1, \ldots, 2^{nR_1}$.
		\item Generate $2^{nR_2}$ codewords $x_2^n(w_2)$ with i.i.d.\ components based on $P_{X_2}$. Index these codewords by $w_2 =1, \ldots, 2^{nR_2}$.
	\end{itemize}
	\item Encoding:
	\begin{itemize}
		\item Helper: Given $s^n$, find $\tv$, such that $(u^n(\tv), s^n)\in T^n_\epsilon(P_{SU})$. It can be shown that for large $n$, such $\tv$ exists with high probability if
		\begin{equation}
		\tR\geqslant I(S;U). \label{eq:binning}
		\end{equation}
		Then given $(u^n(\tv),s^n)$, generate $x_0^n$ with i.i.d.\ components based on $P_{X_0|SU}$ for transmission.
		\item Transmitter 1: Given $w_1$, map $w_1$ into $x_1^n(w_1)$ for transmission.
		\item Transmitter 2: Given $w_2$, map $w_2$ into $x_2^n(w_2)$ for transmission.
	\end{itemize}
	\item Decoding:
	\begin{itemize}
		\item Given $y^n$, find $(\hv,\hw_1, \hw_2)$ such that $(u^n(\hv), x^n_1(\hw_1), x^n_2(\hw_2), y^n)\in T^n_\epsilon(P_{UX_1X_2Y})$. If no or more than one $(\hw_1,\hw_2)$ can be found, declare an error($\hv$ is not necessary to be correctly decoded).
	\end{itemize}
\end{enumerate}
It can be shown that for sufficiently large n, decoding is correct with high probability if
\begin{flalign}
R_1 \leqslant & I(X_1;Y|X_2,U) \nn \\
\tR+R_1 \leqslant &I(U,X_1;Y|X_2) \nn \\
R_2 \leqslant& I(X_2;Y|X_1,U) \nn \\
\tR+R_2 \leqslant&I(U,X_2;Y|X_1)\nn \\
R_1+R_2 \leqslant& I(X_1,X_2;Y|U)\nn\\
\tR+R_1+R_2 \leqslant&I(U,X_1,X_2;Y) \nn
\end{flalign}
We note that the event that multiple $\hv$ with only single pair $(\hw_1,\hw_2)$ satisfy the above decoding requirement is not counted as an error event, because the index $v$ is not the decoding requirement. Finally, combining the above bounds with \eqref{eq:binning} yields the desired achievable region.

\chapter{State-Dependent Z-Interference Channel with Correlated States}\label{chap:z2state}

In this chapter, we study the state-dependent Z-IC channel with correlated states. This state-dependent Z-IC is different from Z-IC without state as each receiver is corrupted by a channel state and the two transmitters know the information of both of the channel states noncausally. Our focus here is on the more general scenario, where the two receivers
are corrupted by two correlated states, and our aim is to understand how the correlation
affects the design of the scheme.

The rest of chapter is organized as follows. In Section \ref{sec:zmodel}, we describe the channel model. In Section \ref{sec:zvs}, we study the model in the very strong interference regime and characterize the channel parameters under which the two receivers achieve their corresponding point-to-point channel capacity without state and interference. In Section \ref{sec:zs}, we study the model in the strong but not very strong interference regime and characterize the sum capacity boundary partially under certain channel parameters based on the joint design of rate splitting, successive cancellation, as well as dirty paper coding. In Section \ref{sec:zweak}, we study the model in the weak interference regime and characterize the sum capacity, which is achieved by the two transmitters independently designing dirty paper coding and treating interference as noise.

\section {Channel Model}\label{sec:zmodel}

\vspace{5mm}
\begin{figure}[thb]
	\centering
	\includegraphics[width=4.5in]{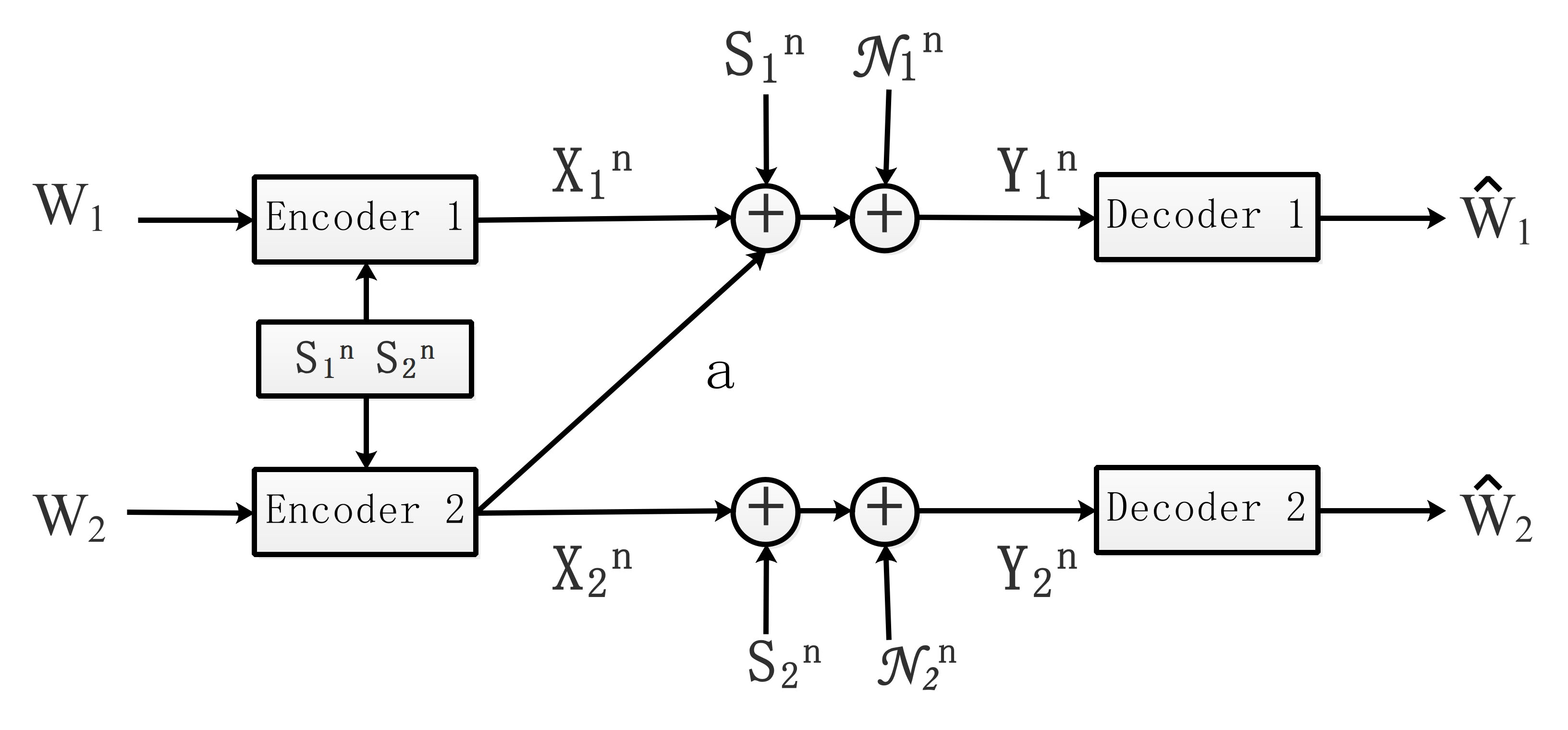}
	\caption{The state-dependent Z-IC}\label{fig:channel}
\end{figure}
\vspace{5mm}

We consider the state-dependent Z-IC (as shown in Fig.~\ref{fig:channel}), in which transmitters 1 and 2 send two messages $W_{1}$ and $W_{2}$ to two receivers 1 and 2, respectively.  Receiver 1's output is interfered by transmitter 2's input as well as a state sequence $S_1^n$, and receiver 2's output is interfered only by a state sequence $S_2^n$, which is correlated with $S_1^n$. The two state sequences $S_1^n$ and $S_2^n$ are assumed to be known {\em noncausally}  at both transmitters. The encoder $k$ at transmitter $k$ maps the message $w_k\in \cW_k=\{1,\dots,2^{nR_k}\}$ and the state sequences $s_1^n$ and $s_2^n$ to a codeword $x_k^n\in \cX_k^n$ for $k=1,2$. The two inputs $x_1^n$ and $x_2^n$ are transmitted over the memoryless Z-IC characterized by $P_{Y_1|X_1X_2S_1}$ and $P_{Y_2|X_2S_2}$. Receiver 1 is required to decode $W_1$ and receiver 2 is required to decode $W_2$. 
The average probability of error for a length-$n$ code is defined as
\begin{flalign}\nn
P_e^{(n)} = & \frac{1}{|\cW_1||\cW_2|}\sum_{w_1=1}^{|\cW_1|}\sum_{w_2=1}^{|\cW_2|} Pr\lbrace(\hat{w}_1, \hat{w}_2) \neq (w_1, w_2)\rbrace.
\end{flalign}
A rate pair $(R_1, R_2)$ is {\em achievable} if there exist a sequence of encoding and decoding schemes 
such that the average error probability $P_e^{(n)} \rightarrow 0$ as $n \to \infty$. The {\em capacity region} is defined to be the closure of the set of all achievable rate pairs.

In this dissertation, we focus on the Gaussian Z-IC with the outputs at the two receivers for one channel use given by
\begin{subequations}
	\begin{flalign}
	Y_1&=X_1+ aX_2+S_1+N_1\label{eq:zGeneralChannelModel}\\
	Y_2&=X_2+S_2+N_2
	\end{flalign}
\end{subequations}	
where $a$ is the channel gain coefficient, and $N_1$ and $N_2$ are noise variables with Gaussian distributions $N_1 \sim \mathcal{N}(0,1)$ and $N_2 \sim \mathcal{N}(0,1)$. The state variables $S_1$ and $S_2$ are jointly Gaussian with correlation coefficient $\rho$ and marginal distributions $S_1 \sim \mathcal{N}(0,Q_1)$ and $S_2\sim \mathcal{N}(0,Q_2)$. Both the noise variables and the state variables are i.i.d. over channel uses. 
The channel inputs $X_1$ and $X_2$ are subject to the average power constraints $P_1$ and $P_2$.

Our goal is to characterize channel parameters, under which the capacity of the corresponding Z-IC without the presence of the states can be achieved, and thus the capacity region of the Z-IC with the presence of state is also established. In particular, we are interested in understanding the impact of the correlation between the states on the capacity characterization.

\section {Very Strong Interference Regime}\label{sec:zvs}

In this section, we study the state-dependent Z-IC in the very strong regime, in which the channel parameters satisfy $a^2>1+P_1$. For the corresponding Z-IC without states, the capacity region contains rate pairs $(R_1,R_2)$ satisfying 
\begin{flalign}\label{cap:zVeryStrong}
&R_1\leqslant \frac{1}{2}\log(1+P_1), \;\; \\
&R_2 \leqslant  \frac{1}{2}\log(1+P_2).
\end{flalign}
In this case, the two receivers achieve the point-to-point channel capacity without interference. Furthermore, in \cite{Duan16IT}, an achievable scheme has been established to achieve the same point-to-point channel capacity when the two receivers are corrupted by the same but differently scaled state. Our focus here is on the more general scenario, where the two receivers are corrupted by two {\em correlated} states, and our aim is to understand how the correlation affects the design of the scheme.


We first design an achievable scheme to obtain an achievable rate region for the discrete memoryless Z-IC. The two transmitters encode their messages $W_1$ and $W_2$ into two auxiliary random variables $U$ and $V$, respectively, based on the Gel'fand-Pinsker binning scheme. Since receiver 2 is interference free and is corrupted by $S_2$, the auxiliary random variable $V$ is designed with regard to only $S_2$. Furthermore,
receiver 1 first decodes $V$, then uses it to cancel the interference $X_2$ and partial state interference, and finally decodes its own message $W_1$ by decoding $U$. Here, since $S_2$ is introduced to $Y_1$ when canceling $X_2$ via $V$, the auxiliary random variable $U$ is designed based on both $S_1$ and $S_2$ to fully cancel the states. Based on such a scheme, we obtain the following achievable region.
\begin{proposition}\label{pps:Z_inner}
	For the state-dependent Z-IC with the states noncausally known at both transmitters, an achievable region consists of rate pairs $(R_1,R_2)$ satisfying:
	\begin{subequations}
		\begin{flalign}
		R_1 \leqslant & I(U;VY_1) -I(S_1,S_2;U) \label{eq:zv_inner1}\\
		R_2 \leqslant &  \min\{ I(V;Y_2),I(V;Y_1)\} -I(S_2;V) \label{eq:zv_inner2}
		\end{flalign}
	\end{subequations}
	for some distribution $P_{S_1S_2}P_{U|S_1S_2}P_{X_1|US_1S_2}P_{V|S_2}P_{X_2|VS_2}$
	$P_{Y_1|S_1X_1X_2}P_{Y_2|S_2X_2}$.
\end{proposition}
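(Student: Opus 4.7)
The plan is to use Gel'fand--Pinsker binning at each transmitter, together with successive joint-typicality decoding at Receiver~1 and single-user joint-typicality decoding at Receiver~2. I fix any joint distribution of the form $P_{S_1S_2}P_{U|S_1S_2}P_{X_1|US_1S_2}P_{V|S_2}P_{X_2|VS_2}$ compatible with the channel, introduce auxiliary binning rates $\tilde R_1,\tilde R_2\ge 0$, and at the end eliminate them to obtain \eqref{eq:zv_inner1}--\eqref{eq:zv_inner2}.

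For the codebooks I would draw $2^{n(R_2+\tilde R_2)}$ sequences $v^n(w_2,k_2)$ i.i.d.\ from $P_V$, grouped into $2^{nR_2}$ bins of size $2^{n\tilde R_2}$ indexed by $(w_2,k_2)$, and $2^{n(R_1+\tilde R_1)}$ sequences $u^n(w_1,k_1)$ i.i.d.\ from $P_U$, grouped analogously. At Transmitter~2, given $s_2^n$ and message $w_2$, I search bin $w_2$ for an index $k_2$ with $(v^n(w_2,k_2),s_2^n)$ jointly typical; by the covering lemma this succeeds with high probability if $\tilde R_2\ge I(V;S_2)$. Then $x_2^n$ is drawn symbol by symbol from $P_{X_2|VS_2}$. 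At Transmitter~1, given $(s_1^n,s_2^n)$ and $w_1$, I look for $k_1$ with $(u^n(w_1,k_1),s_1^n,s_2^n)$ jointly typical, which requires $\tilde R_1\ge I(U;S_1,S_2)$, and then generate $x_1^n$ from $P_{X_1|US_1S_2}$. The design choice to bin $U$ against the \emph{joint} state $(S_1,S_2)$ rather than $S_1$ alone is the essential point: after Receiver~1 cancels $X_2$ via the decoded $v^n$, $S_2$ leaks into its effective noise because $V$ itself depends on $S_2$, so $U$ must be pre-compensated against both state components.

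For decoding, Receiver~2 would look for the unique $(w_2,k_2)$ with $(v^n(w_2,k_2),y_2^n)$ jointly typical; the packing lemma gives vanishing error if $R_2+\tilde R_2\le I(V;Y_2)$. Receiver~1 would first decode $(w_2,k_2)$ by the same test against $y_1^n$, requiring $R_2+\tilde R_2\le I(V;Y_1)$, and then, with $v^n$ in hand as side information, search for $(w_1,k_1)$ with $(u^n(w_1,k_1),v^n,y_1^n)$ jointly typical, yielding $R_1+\tilde R_1\le I(U;V,Y_1)$. Only the message indices $w_1,w_2$ must be correct; $k_1,k_2$ are allowed to be wrong, which is standard for Gel'fand--Pinsker binning. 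Subtracting the covering constraints from the packing constraints produces precisely the region claimed.

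The step I expect to be the most delicate is the error analysis at Receiver~1 for the compound event ``correct $w_2$ but wrong $k_2$ paired with some $(w_1,k_1)$ that passes the $U$-test.'' Because the $U$- and $V$-codebooks are generated independently of each other and of $(S_1^n,S_2^n)$, the chosen triple $(u^n,v^n,s_1^n,s_2^n)$ behaves, conditional on the true message, as if it were drawn from the prescribed product-then-binning distribution, so the standard packing lemma applies to each alternative codeword. Once that independence is exploited carefully, the remaining bounds on the six error events reduce to routine typicality estimates, and combining them with $\tilde R_2\ge I(V;S_2)$ and $\tilde R_1\ge I(U;S_1,S_2)$ yields the advertised inner bound.
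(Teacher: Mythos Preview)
Your proposal is correct and follows essentially the same route as the paper: independent Gel'fand--Pinsker binning at the two transmitters, successive joint-typicality decoding at Receiver~1 (first $V$, then $U$ with $v^n$ as side information), and single-user decoding of $V$ at Receiver~2, after which eliminating $\tilde R_1,\tilde R_2$ yields \eqref{eq:zv_inner1}--\eqref{eq:zv_inner2}. The only cosmetic difference is that the paper bins $V$ against $(S_1,S_2)$ while you bin against $S_2$ alone, but under the stated factorization $P_{V|S_2}$ one has $I(V;S_1,S_2)=I(V;S_2)$, so the two are equivalent; also note that in the successive scheme the intermediate $v^n$ (hence both $w_2$ and $k_2$) must be recovered correctly at Receiver~1 for the $U$-step to work, which the bound $R_2+\tilde R_2\le I(V;Y_1)$ already guarantees.
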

\begin{proof}
	See Section \ref{apx:Z inner}	
\end{proof}
Following Proposition \ref{pps:Z_inner}, we further simplify the achievable region in the following corollary, which is in a useful form for us to characterize the capacity region for the Gaussian Z-IC. 
\begin{corollary}\label{cor:Z_inner}
	For the state-dependent Z-IC with the states noncausally known at both transmitters, if the following condition
	\begin{flalign}
	I(V;Y_2)\leqslant I(V;Y_1)\label{zvcond}
	\end{flalign} 
	is satisfied, then an achievable region consists of rate pairs $(R_1,R_2)$ satisfying:
	\begin{subequations}
		\begin{flalign}
		R_1 \leqslant & I(U;VY_1) -I(S_1,S_2;U) \\
		R_2 \leqslant & I(V;Y_2) -I(S_2;V) 
		\end{flalign}
	\end{subequations}
	for some distribution $P_{S_1S_2}P_{U|S_1S_2}P_{X_1|US_1S_2}P_{V|S_2}P_{X_2|VS_2}$
	$P_{Y_1|S_1X_1X_2}P_{Y_2|S_2X_2}$.
\end{corollary}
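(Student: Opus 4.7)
The plan is to derive the corollary as an immediate specialization of Proposition \ref{pps:Z_inner}. The inner bound in Proposition \ref{pps:Z_inner} couples the rate $R_2$ to the quantity $\min\{I(V;Y_2), I(V;Y_1)\} - I(S_2;V)$, whose two branches originate from the two places where $V$ must be reliably decoded in the coding scheme: at receiver 2 (for recovering $W_2$) and at receiver 1 (for stripping the interference $X_2$ before decoding $U$). Under the hypothesis $I(V;Y_2) \leq I(V;Y_1)$, the minimum is attained by the first term, so the ``$Y_1$-decoding'' branch becomes inactive, and the bound collapses to $R_2 \leq I(V;Y_2) - I(S_2;V)$. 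The bound on $R_1$ in Proposition \ref{pps:Z_inner} does not involve any minimum and is carried over verbatim.

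Thus the proof amounts to nothing more than invoking Proposition \ref{pps:Z_inner} and substituting the explicit value of the minimum under the stated hypothesis; the underlying joint distribution class $P_{S_1S_2}P_{U|S_1S_2}P_{X_1|US_1S_2}P_{V|S_2}P_{X_2|VS_2}P_{Y_1|S_1X_1X_2}P_{Y_2|S_2X_2}$ is preserved unchanged. There is essentially no obstacle: the only point worth being explicit about is that the error-analysis step for decoding $V$ at receiver 1 (which gave rise to the term $I(V;Y_1)$ in Proposition \ref{pps:Z_inner}) is automatically satisfied whenever the more restrictive bound $R_2 + I(S_2;V) \leq I(V;Y_2)$ holds, because we have $I(V;Y_2) \leq I(V;Y_1)$ by assumption. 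Hence no re-derivation from the codebook is needed, and the corollary follows.
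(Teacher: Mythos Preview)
Your proposal is correct and matches the paper's approach exactly: the paper presents Corollary~\ref{cor:Z_inner} as an immediate simplification of Proposition~\ref{pps:Z_inner}, obtained by observing that under the hypothesis $I(V;Y_2)\leqslant I(V;Y_1)$ the minimum in \eqref{eq:zv_inner2} is attained by $I(V;Y_2)$, with the $R_1$ bound and the admissible distribution class carried over unchanged.
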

In Corollary \ref{cor:Z_inner}, condition \eqref{zvcond} requires that receiver 1 is more capable in decoding $V$ (and hence $W_2$) than receiver 2, which is likely to be satisfied in the very strong regime. 

We now study the Gaussian Z-IC. Since $S_1$ and $S_2$ are jointly Gaussian, $S_1$ can be expressed as $S_1=dS_2+S_1^\prime$ where $d$ is a constant representing the level of correlation, and $S_1'$ is independent from $S_2$ and $S_1'\sim \mathcal{N}(0, Q_1')$ with $Q_1=d^2Q_2+Q_1^\prime$. Thus, without loss of generality, the channel model can be expressed in the following equivalent form that is more convenient for analysis,
\begin{subequations}
	\begin{flalign}
	Y_1&=X_1+ aX_2+dS_2+S_1^\prime+N_1\\
	Y_2&=X_2+S_2+N_2.
	\end{flalign}
\end{subequations}	

Following Corollary \ref{cor:Z_inner}, we characterize the channel parameters under which both the states and interference can be fully canceled, and hence the capacity region for the Z-IC is obtained. 

	\begin{theorem}\label{thr:VSCapacity}
	For the state-dependent Gaussian Z-IC with states noncausally known at both transmitters, if the channel parameters  $(a,d,P_1,P_2,Q_1^\prime,Q_2)$ satisfy the following condition:
	\begin{equation} \frac{P_1+a^2P_2+d^2Q_2+Q_1^\prime+1}{(d+a\beta)^2Q_2P_2+(P_2+\beta^2Q_2)(P_1+Q_1^\prime+1])}\geqslant \frac{P_2+1}{P_2} \label{eq:zverystrongcond}
	\end{equation}
	where $\beta=\frac{P_2}{P_2+1}$, then the capacity region is characterized by \eqref{cap:zVeryStrong}.	
	\end {theorem}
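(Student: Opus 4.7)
The plan is to invoke Corollary~\ref{cor:Z_inner} with jointly Gaussian auxiliaries obtained from a double application of Costa's dirty paper coding, and then algebraically reduce the decodability premise \eqref{zvcond} of that corollary to the theorem's hypothesis \eqref{eq:zverystrongcond}. I take $X_1\sim\mathcal{N}(0,P_1)$ and $X_2\sim\mathcal{N}(0,P_2)$ mutually independent and independent of $(S_1,S_2)$, and encode $W_2$ at transmitter~2 via Gel'fand--Pinsker binning with $V=X_2+\beta S_2$ and $\beta=P_2/(P_2+1)$, the canonical Costa choice against the state $S_2$ on the clean link $Y_2=X_2+S_2+N_2$. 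This immediately yields $R_2=I(V;Y_2)-I(V;S_2)=\tfrac{1}{2}\log(1+P_2)$.

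Provided \eqref{zvcond} holds, receiver~1 is at least as capable as receiver~2 at recovering $V$, so receiver~1 decodes $V$ first and subtracts $aV$ from $Y_1$. Writing $S_1=dS_2+S_1^\prime$, the residual is $X_1+(d-a\beta)S_2+S_1^\prime+N_1$, an effective point-to-point AWGN channel whose Gaussian ``state'' $S^\star:=(d-a\beta)S_2+S_1^\prime$ is a deterministic function of $(S_1,S_2)$ and is therefore available noncausally at transmitter~1. A second application of Costa's DPC, with $U=X_1+\alpha S^\star$ and $\alpha=P_1/(P_1+1)$, yields $R_1=I(U;V,Y_1)-I(U;S_1,S_2)=\tfrac{1}{2}\log(1+P_1)$. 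The matching converse bounds $R_k\leq\tfrac{1}{2}\log(1+P_k)$ for $k=1,2$ are the single-user AWGN capacities and follow by giving each intended receiver the other user's codeword and both state sequences as genie side information.

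What remains is to verify that, under this Gaussian choice, $I(V;Y_2)\leq I(V;Y_1)$ is equivalent to \eqref{eq:zverystrongcond}. I would compute $\mathrm{Var}(V)=P_2+\beta^2Q_2$, $\mathrm{Var}(Y_1)=P_1+a^2P_2+d^2Q_2+Q_1^\prime+1$, and $\mathrm{Cov}(V,Y_1)=aP_2+\beta dQ_2$; substituting these into $\mathrm{Var}(Y_1|V)=\mathrm{Var}(Y_1)-\mathrm{Cov}(V,Y_1)^2/\mathrm{Var}(V)$ causes the cross terms to collapse, via $a^2\beta^2-2a\beta d+d^2=(a\beta-d)^2$, into the factored form $(P_1+Q_1^\prime+1)(P_2+\beta^2Q_2)+P_2Q_2(a\beta-d)^2$, which is the denominator appearing in \eqref{eq:zverystrongcond}. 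Combining this expression for $I(V;Y_1)$ with the standard Costa identity $I(V;Y_2)=\tfrac{1}{2}\log\tfrac{(1+P_2)(P_2+\beta^2Q_2)}{P_2}$ and rearranging $I(V;Y_2)\leq I(V;Y_1)$ reproduces \eqref{eq:zverystrongcond}. This conditional-variance bookkeeping, driven by the coupling of $V$ and $Y_1$ through the shared $S_2$, is the only non-routine step and is where I expect the main obstacle: one must keep track of the cross term $2a\beta d P_2Q_2$ with the correct sign so that the $(a\beta-d)^2$ factorization falls out cleanly. Everything else is a direct consequence of Corollary~\ref{cor:Z_inner} and two applications of Costa's result.
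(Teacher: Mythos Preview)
Your approach is essentially identical to the paper's: both invoke Corollary~\ref{cor:Z_inner} with $V=X_2+\beta S_2$, $\beta=P_2/(P_2+1)$, and $U=X_1+\alpha_1 S_2+\alpha_2 S_1'$ with the Costa coefficients $\alpha_1=\tfrac{P_1}{P_1+1}(d-a\beta)$, $\alpha_2=\tfrac{P_1}{P_1+1}$ (your single-parameter form $U=X_1+\tfrac{P_1}{P_1+1}S^\star$ expands to exactly this), and both reduce the premise \eqref{zvcond} to the stated condition. Your explicit variance computation in fact produces $(d-a\beta)^2$ in the denominator rather than the printed $(d+a\beta)^2$; this is a typo in the displayed statement (note also the stray bracket there), and your derivation is the correct one.
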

	\begin{proof}
		Theorem \ref{thr:VSCapacity} follows from Corollary \ref{cor:Z_inner} by setting $U=X_1+\alpha_1S_2+\alpha_2S_1^\prime$, $V=X_2+\beta S_2$, where $X_1$, $X_2$, $ S_1^\prime$ and $ S_2 $ are independent Gaussian variables with mean zero and variances $P_1$, $P_2$, $ Q_1^\prime $ and $ Q_2 $, respectively. As discussed in the proof of Proposition \ref{pps:Z_inner}, V is first decoded by decoder 1. And then inspired by dirty paper coding, we design $ \alpha_1 $, $\alpha_2$ and $\beta$ for both $Y_2=X_2+S_2+N_2$ and $Y_1^{\prime}=Y_1-aV=X_1+(d-a\beta)S_2+S_1^\prime+N_1$ to fully cancel the states. Thus, the coefficients should satisfy the following conditions,
		\begin{subequations}  
			\begin{flalign}
			\frac{\alpha_1}{d-a\beta}&=\frac{P_1}{P_1+1}\label{eq:zdirty_con1}\\
			\alpha_2&=\frac{P_1}{P_1+1}\label{eq:zdirty_con2}\\
			\beta&=\frac{P_2}{P_2+1},\label{eq:zdirty_con3}
			\end{flalign}
		\end{subequations}
		which further yields
		$\alpha_1$, $\alpha_2$ and $\beta$ to satisfy
		\begin{flalign}
		\alpha_1&=\frac{P_1}{P_1+1}(d-\frac{aP_2}{P_2+1}), \quad\nn \\
		\alpha_2&=\frac{P_1}{P_1+1},\quad\nn\\
		\beta&=\frac{P_2}{P_2+1}.\ \ \ \ \  \nn
		\end{flalign}
		Substituting the above choice of the auxiliary random variables and the parameters into \eqref{zvcond} in Corollary \ref{cor:Z_inner}, we obtain the condition \eqref{eq:zverystrongcond}. Substituting those choices into the inner bound in Corollary \ref{cor:Z_inner}, we obtain the capacity region characterized by \eqref{cap:zVeryStrong}. Since such an achievable region achieves the point-to-point channel capacity for the Z-IC without the state, it can be shown to be the capacity region of the state-dependent Z-IC.
	\end{proof}
	\begin{figure}[thb]
		\centering
		\includegraphics[height=3in,width=5in]{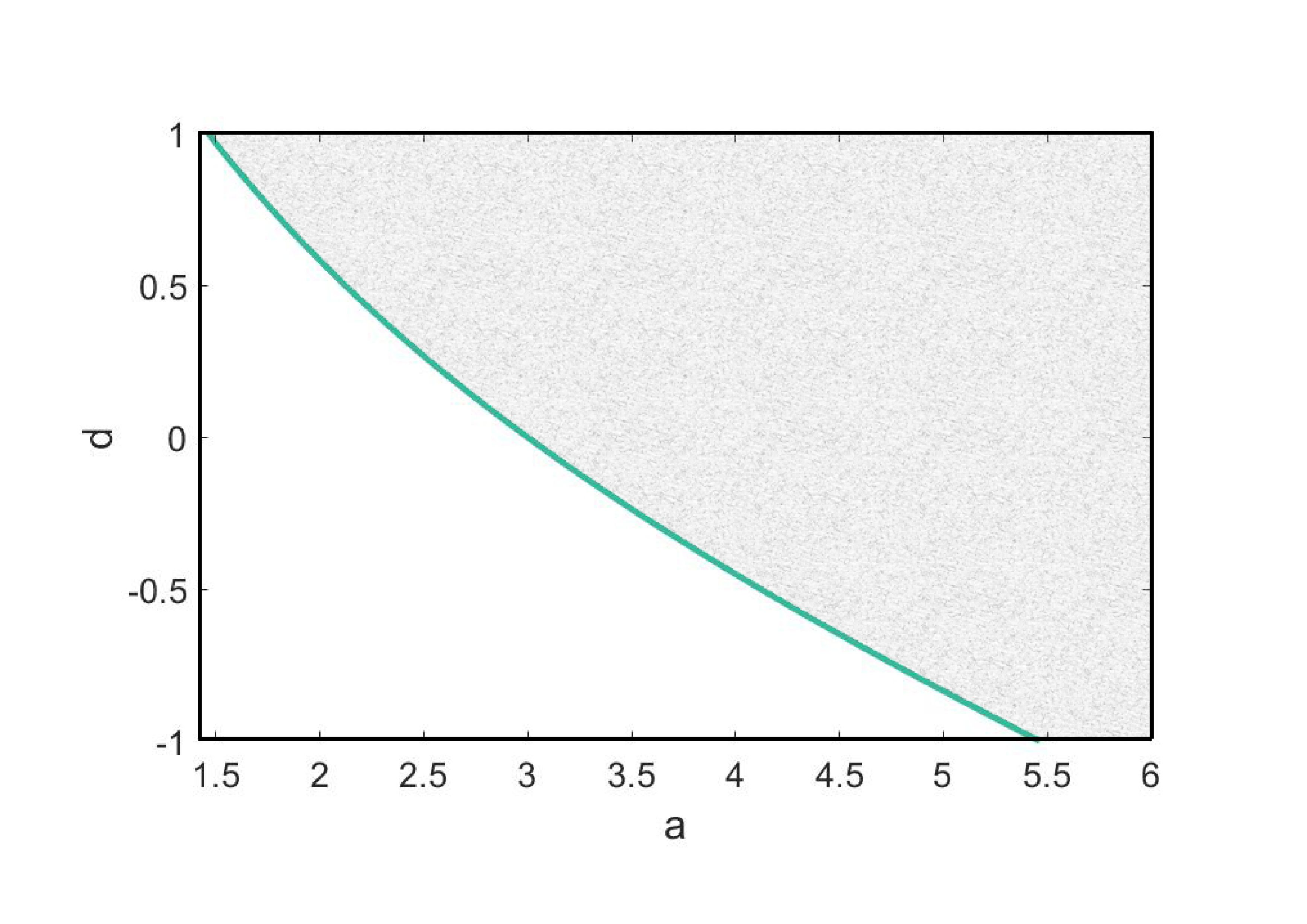}
		\caption{Characterization of channel parameters $(a,d)$ in shaded area under which the state-dependent Gaussian Z-IC achieves the capacity of the corresponding channel without states and interference in very strong regime.}\label{fig:vsac1}
	\end{figure}
	
	Based on Theorem \ref{thr:VSCapacity}, if channel parameters satisfy the condition \eqref{eq:zverystrongcond}, we can simultaneously cancel two states and the interference, and the point-to-point capacity of two receivers without state and interference can be achieved. The correlation between the two states captured by $d$ plays a very important role regarding whether the condition can be satisfied.
	In Fig.~\ref{fig:vsac1}, we set $P_1=2$, $P_2=2$, $Q_1=1$ and $Q_2=1$, and plot the range of the parameter pairs $(a,d)$ under which the channel capacity without states and interference can be achieved. These parameters fall in the shaded area above the line. It can be seen that as $d$ becomes larger (i.e., the correlation between the two states increases), the threshold on the parameter $a$ to fully cancel the interference and state becomes smaller. This suggests that more correlated states are easier to cancel together with the interference.
	\vspace{5mm}
	\begin{figure}[thb]
		\centering
		\includegraphics[height=3in,width=5in]{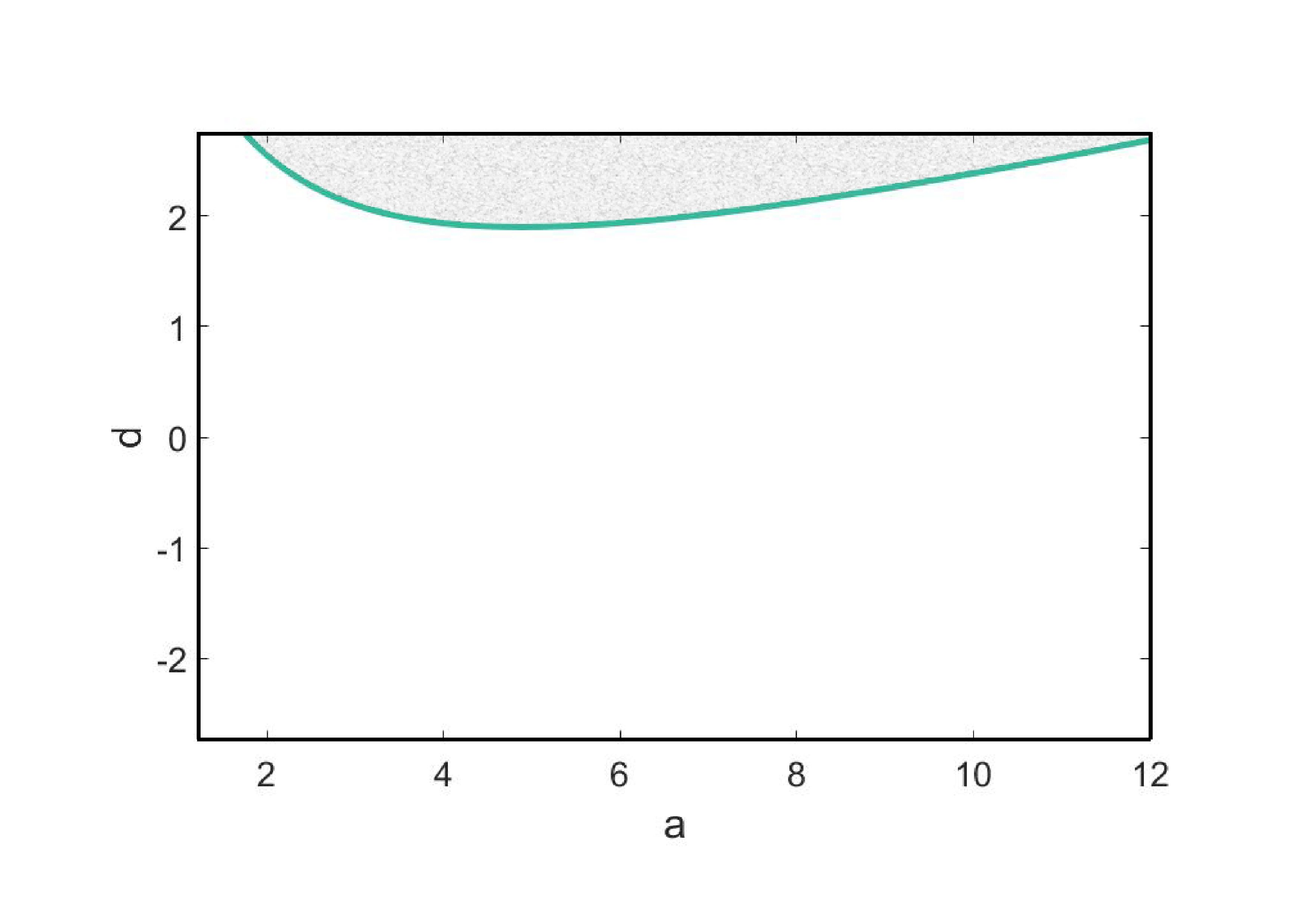}
		\caption{Characterization of channel parameters $(a,d)$ in shaded area under which the state-dependent Gaussian Z-IC achieves the capacity of the corresponding channel without states in very strong regime when $Q_2>\frac{1+P_2}{P_2}$.}\label{fig:vsac2}
	\end{figure}
	\vspace{5mm}
	
	
	Fig.~\ref{fig:vsac1} agrees with the result of the very strong IC without states in the sense that once $a$ is above a certain threshold (i.e., the interference is strong enough), then the point-to-point channel capacity without interference can be achieved. However, this is not always true for the {\em state-dependent} Z-IC. This can be seen from the condition \eqref{eq:zverystrongcond} in Theorem \ref{thr:VSCapacity}. If we let $a$ go to infinity, then the condition \eqref{eq:zverystrongcond} becomes $Q_2>\frac{1+P_2}{P_2}$, which is not always satisfied. This is because in the existence of state, $Y_1$ decodes $V$ instead of $X_2$, and the decoding rate is largest if the dirty paper coding design of $V$ (based on $S_2$ in receiver 2) also happens to be the same dirty paper coding design against $S_2$ in receiver 1. Clearly, as $a$ gets too large, $V$ is more deviated from such a favorable design, and hence the decoding rate becomes smaller, which consequently hurts the achievability of the point-to-point capacity for receiver 2. Such a phenomena can be observed in Fig.~\ref{fig:vsac2}, where the parameters $(a,d)$ under which the point-to-point channel capacity without interference and states can be achieved fall in the shaded area above the line. It can be seen that the constant $a$ cannot be too large to guarantee the achievability of the point-to-point channel capacity. Furthermore, the figure also suggests that further correlated states allow a larger range of $a$ under which the point-to-point channel capacity can be achieved. 	
	

	
	\section{Strong Interference Regime}\label{sec:zs}
	
	For the sake of technical convenience, in this section, we express $S_2$ as $S_2=cS_1+S_2'$, where $c$ is a constant representing the level of correlation, and $S_2'$ is independent from $S_1$ with $S_2^\prime\sim \mathcal{N}(0,Q_2^\prime)$ and $Q_2=c^2Q_1+Q_2^\prime$. Hence,  the channel model can be expressed in the following equivalent form,
	\begin{subequations}
		\begin{flalign}
		Y_1&=X_1+ aX_2+S_1+N_1\\
		Y_2&=X_2+cS_1+S_2^\prime+N_2.
		\end{flalign}
	\end{subequations}	
	
	It has been known that for the corresponding Z-IC without state which is strong but not very strong, i.e., $1 \leqslant a^2 < 1+P_1$,  the channel capacity contains rate pairs $(R_1,R_2)$ satisfying
	\begin{flalign}
	& R_1 +R_2 \leqslant \frac{1}{2} \log(1+P_1 +a^2P_2)\nn\\
	&R_1 \leqslant \frac{1}{2} \log{ (1+P_1)},\;\;\;R_2 \leqslant \frac{1}{2} \log{ (1+P_2)} \label{eq:StrongAchivGaussian}
	\end{flalign}
	which is illustrated as the pentagon O-A-B-E-F in Fig.~\ref{fig:scapa}.
	\vspace{5mm}
	\begin{figure}[thb]
		\centering
		\includegraphics[height=3in,width=5in]{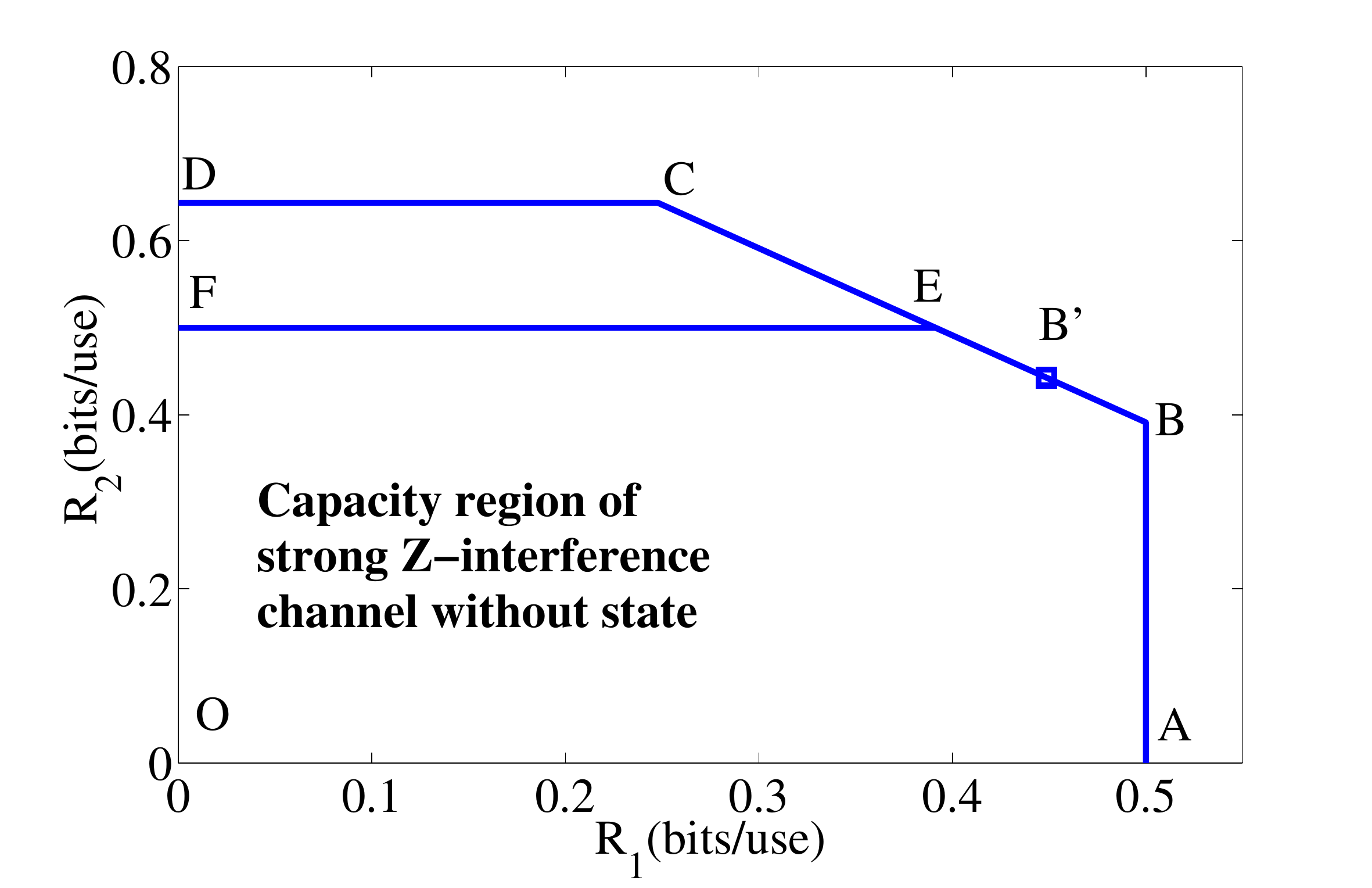}

		\caption{Capacity region of the strong Z-IC without state}\label{fig:scapa}

	\end{figure}
	\vspace{5mm}
	Our goal here is to study whether the points on the sum-capacity boundary of the Z-IC {\em without} state (i.e., the line B-E in Fig.~\ref{fig:scapa}) can be achieved for the corresponding {\em state-dependent} Z-IC. 
	Such a problem has been studied in \cite{Duan16IT} for the channel with two receivers corrupted by the same but differently scaled state. Here, we generalize such a study to the situation when the two receivers are corrupted by two correlated states.
	
	Since every point on this line can be achieved by rate splitting and successive cancellation in the case without state, for the state-dependent channel, we continue to adopt the idea of rate splitting and successive cancellation but using auxiliary random variables to incorporate dirty paper coding to further cancel state successively. More specifically, transmitter 1 splits its message $W_1$ into $W_{11}$ and $W_{12}$, and then encodes them into $U_1$ and $U_2$ respectively based on the Gel'fand-Pinsker binning scheme. Then transmitter 2 encodes its message $W_2$ into \textit{V}, based on the Gel'fand-Pinsker binning scheme.  The auxiliary random variables $U_1$, $U_2$, and $V$ are designed such that decoding of them at receiver 1 successively fully cancels the state corruption of $Y_1$ so that the sum capacity boundary (i.e., the line B-E) can be achieved if only decoding at receiver 1 is considered. Now further incorporating the decoding at receiver 2, if for any point on the line B-E, decoding of $V$ at receiver 2 does not cause further rate constraints, then such a point is achievable for the state-dependent Z-IC.  
	
	\begin{proposition}\label{pps:zstrongDMC}
		For the state-dependent Z-IC with states noncausally known at both transmitters, if the following condition is satisfied
		\begin{flalign}
		I(V;U_1Y_1)\leqslant I(V;Y_2),\label{eq:zscond}
		\end{flalign} 
		then an achievable region consists of rate pairs $(R_1,R_2)$ satisfying:
		\begin{subequations}
			\begin{flalign}
			R_1 \leqslant & I(U_1;Y_1)+I(U_2;VY_1|U_1) -I(S_1;U_1U_2)\label{zsas1}\\ 
			R_2 \leqslant &   I(V;U_1Y_1) -I(S_1;V) \label{zsas2}
			\end{flalign}
		\end{subequations}
		for some distribution $P_{S_1S_2}P_{V|S_1}P_{X_2|VS_1}P_{U_1|S_1}P_{U_2|S_1U_1}$ $P_{X_1|S_1U_1U_2}P_{Y_1|S_1X_1X_2}P_{Y_2|S_2X_2}$.
	\end{proposition}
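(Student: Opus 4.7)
The plan is to generalize the standard Gel'fand-Pinsker, rate-splitting, and successive cancellation scheme to the two-state setting outlined before the statement. I will split $R_1 = R_{11} + R_{12}$, introduce auxiliary codebook rates $\tR_{U_1},\tR_V,\tR_{U_2}$ that absorb the binning cost against the state $S_1$, and arrange receiver 1's decoding order as $U_1 \to V \to U_2$, while receiver 2 decodes only $V$.

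For codebook construction, I fix the joint law $P_{S_1S_2}P_{V|S_1}P_{X_2|VS_1}P_{U_1|S_1}P_{U_2|S_1U_1}$. I draw $2^{n\tR_{U_1}}$ sequences $u_1^n$ i.i.d.\ from $P_{U_1}$, partitioned into $2^{nR_{11}}$ bins; $2^{n\tR_V}$ sequences $v^n$ i.i.d.\ from $P_V$, partitioned into $2^{nR_2}$ bins; and, conditioned on each $u_1^n$, $2^{n\tR_{U_2}}$ sequences $u_2^n$ i.i.d.\ from $P_{U_2|U_1}$, partitioned into $2^{nR_{12}}$ bins. Given $(w_{11},w_{12},w_2)$ and $s_1^n$, the two transmitters (both of which know $s_1^n$) search for $u_1^n$ in bin $w_{11}$ jointly typical with $s_1^n$, for $v^n$ in bin $w_2$ jointly typical with $s_1^n$, and for $u_2^n$ in bin $w_{12}$ jointly typical with $(u_1^n,s_1^n)$, and then form the channel inputs via $P_{X_1|S_1U_1U_2}$ and $P_{X_2|VS_1}$. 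Standard covering-lemma bounds guarantee that all three searches succeed with high probability provided
\begin{flalign*}
\tR_{U_1}-R_{11} &\geqslant I(U_1;S_1),\\
\tR_V-R_2 &\geqslant I(V;S_1),\\
\tR_{U_2}-R_{12} &\geqslant I(U_2;S_1|U_1).
\end{flalign*}
The conditional independence of $U_1$ and $V$ given $S_1$ built into the fixed distribution ensures that the triple $(u_1^n,v^n,s_1^n)$ remains jointly typical after the two separate binning steps.

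Receiver 1 then decodes successively: first $u_1^n$ from $y_1^n$, then $v^n$ from $(u_1^n,y_1^n)$, and finally $u_2^n$ from $(u_1^n,v^n,y_1^n)$. A standard packing argument yields $\tR_{U_1}\leqslant I(U_1;Y_1)$, $\tR_V\leqslant I(V;U_1Y_1)$, and $\tR_{U_2}\leqslant I(U_2;VY_1|U_1)$. Receiver 2 decodes $v^n$ directly from $y_2^n$, which requires $\tR_V\leqslant I(V;Y_2)$; hypothesis \eqref{eq:zscond} ensures that this is already implied by the receiver-1 constraint on $\tR_V$ and therefore does not introduce a new binding inequality. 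Eliminating $\tR_{U_1}$ and $\tR_{U_2}$ by combining the covering and packing constraints, adding the resulting bounds on $R_{11}$ and $R_{12}$, and using the chain rule $I(U_1;S_1)+I(U_2;S_1|U_1)=I(U_1U_2;S_1)$ reproduces \eqref{zsas1}; the bound on $R_2$ is exactly \eqref{zsas2}.

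The main delicate point I expect is the packing step for the wrong-$v^n$ event at receiver 1. Because $v^n$ is generated from the marginal $P_V$ independently of $u_1^n$, the exponent controlling the probability that a wrong $v^n$ is jointly typical with $(u_1^n,y_1^n)$ must be shown to equal $I(V;U_1Y_1)$ rather than the weaker $I(V;Y_1)$. Establishing this depends on the intended codewords $(u_1^n,v^n,y_1^n)$ being jointly typical under the true joint law $P_{U_1VY_1}$, which in turn relies on the conditional-independence factorization $P_{U_1|S_1}P_{V|S_1}$ together with a Markov-lemma style argument applied after the state-covering step. Once this joint typicality is secured, the remaining error events are routine to bound and the inner region in the proposition follows.
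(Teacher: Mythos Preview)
Your proposal is correct and follows essentially the same scheme as the paper: rate splitting $W_1=(W_{11},W_{12})$, Gel'fand--Pinsker binning for $U_1,U_2,V$ against $S_1$, and successive decoding $U_1\to V\to U_2$ at receiver~1. The one point of difference is at receiver~2: you have it decode $V$ directly from $y_2^n$ (yielding $\tR_V\leqslant I(V;Y_2)$), whereas the paper's written proof has receiver~2 first decode $U_1$ and then $V$ (and even $U_2$). Your choice is in fact the one that matches the proposition as stated, since the hypothesis \eqref{eq:zscond} compares $I(V;U_1Y_1)$ with $I(V;Y_2)$ rather than with $I(V;Y_2|U_1)$, and the proposition imposes no $Y_2$-side constraint on $R_1$; the paper's appendix appears to carry over extra decoding steps from the regular-IC proof that are not needed here. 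Your identification of the packing exponent $I(V;U_1Y_1)$ (rather than $I(V;Y_1|U_1)$) for the wrong-$v^n$ event at receiver~1 is also the correct one, since $v^n$ is drawn i.i.d.\ from $P_V$ independently of the decoded $u_1^n$.
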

	\begin{proof}
		See Section \ref{apx:zstrongDMC}	
	\end{proof}
	We note that although the above achievable rate region does not explicitly contain $S_2$, in fact $S_2$ implicitly affects the condition \eqref{eq:zscond} via $Y_2$. Furthermore, the correlation between $S_1$ and $S_2$ is also expected to affect the condition \eqref{eq:zscond} via $Y_2$, which is our major interest in the Gaussian case. 
	
	For the Gaussian model,  based on Proposition \ref{pps:zstrongDMC}, we characterize the condition under which any point on the sum capacity boundary of the strong Z-IC without states (e.g., point $B'$ in Fig. \ref{fig:scapa}) is achievable. Hence, such a point is on the sum capacity boundary of the state-dependent Z-IC.  
	
	\begin{theorem}\label{thr:ZSPointCapa}
		For the state-dependent Gaussian Z-IC with states noncausally known at both transmitters, if the channel parameters  $(a,c,P_1,P_2,Q_1,Q_2^\prime)$ satisfy the following condition:
		\begin{flalign}
		&\frac{a^2P_2(P_2+c^2Q_1+Q_2^\prime+1)}{(ac-\beta)^2Q_1P_2+(a^2P_2+\beta^2Q_1)(Q_2^\prime+1)}\geqslant 1+\frac{a^2P_2}{P_1^{\prime\prime}+1} \label{eq:ZSCond}
		\end{flalign}
		where $\beta=\frac{a^2P_2}{P_1+a^2P_2+1}$, then the following point (on the line B-E)
		\begin{flalign}
		&R_1=\frac{1}{2}\log\left(1+\frac{P_1'}{a^2P_2+P_1''+1}\right) + \frac{1}{2}\log\left(1+P_1''\right)\nn\\
		&R_2=\frac{1}{2}\log\left(1+\frac{a^2P_2}{P_1''+1}\right) \label{eq:InnerPointsStrong}
		\end{flalign}
		where $P_1'=P_1-P_1''$, is on the sum-capacity boundary. 
		\end {theorem}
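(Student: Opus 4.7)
The plan is to instantiate Proposition~\ref{pps:zstrongDMC} with jointly Gaussian auxiliary variables chosen so that the two rate bounds \eqref{zsas1}--\eqref{zsas2} are met with equality at the target pair in \eqref{eq:InnerPointsStrong}, and so that the admissibility condition \eqref{eq:zscond} becomes the explicit channel-parameter inequality \eqref{eq:ZSCond}. Specifically, I would split transmitter~1's input as $X_1 = X_1' + X_1''$ with $X_1' \sim \mathcal{N}(0, P_1 - P_1'')$ and $X_1'' \sim \mathcal{N}(0, P_1'')$ independent, take $X_2 \sim \mathcal{N}(0, P_2)$ independent of the rest, and construct three dirty-paper auxiliaries
\begin{equation*}
U_1 = X_1' + \alpha_1 S_1, \qquad V = aX_2 + \beta S_1, \qquad U_2 = X_1'' + \alpha_2 S_1,
\end{equation*}
with $\alpha_1, \beta, \alpha_2$ set to the Costa MMSE coefficients that annihilate the residual $S_1$-terms in each peeled channel under the successive decoding order $U_1 \to V \to U_2$ at receiver~1.

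Next I would use the chain rule $-I(S_1; U_1 U_2) = -I(S_1; U_1) - I(S_1; U_2 \mid U_1)$ to rewrite the right-hand side of \eqref{zsas1} as $[I(U_1; Y_1) - I(S_1; U_1)] + [I(U_2; V Y_1 \mid U_1) - I(S_1; U_2 \mid U_1)]$. With the Costa choice of $\alpha_1$, the first bracket is the standard Gel'fand--Pinsker rate for $X_1'$ against interference-plus-noise $aX_2 + X_1'' + N_1$ and equals $\tfrac{1}{2}\log(1 + (P_1 - P_1'')/(a^2 P_2 + P_1'' + 1))$; with the Costa $\alpha_2$, the second bracket reduces, after $X_1'$ and $aX_2$ have been effectively stripped off, to $\tfrac{1}{2}\log(1 + P_1'')$. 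Similarly, \eqref{zsas2} collapses under the chosen $\beta$ to $\tfrac{1}{2}\log(1 + a^2 P_2/(P_1'' + 1))$. Summing gives $R_1 + R_2 = \tfrac{1}{2}\log(1 + P_1 + a^2 P_2)$, so the achieved pair sits on the line $B$--$E$ and hence on the sum-capacity boundary.

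All that remains is to translate the admissibility condition \eqref{eq:zscond} into \eqref{eq:ZSCond}. Since the construction forces $I(V; U_1 Y_1) - I(S_1; V) = R_2$, the inequality $I(V; U_1 Y_1) \leq I(V; Y_2)$ is equivalent to $I(V; Y_2) - I(S_1; V) \geq R_2$. Writing $S_2 = c S_1 + S_2'$, the pair $(V, Y_2)$ is jointly Gaussian with marginal variances $a^2 P_2 + \beta^2 Q_1$ and $P_2 + c^2 Q_1 + Q_2' + 1$ and cross-covariance $a P_2 + \beta c Q_1$, and a direct determinant expansion yields
\begin{equation*}
(a^2 P_2 + \beta^2 Q_1)(P_2 + c^2 Q_1 + Q_2' + 1) - (a P_2 + \beta c Q_1)^2 = (ac - \beta)^2 Q_1 P_2 + (Q_2' + 1)(a^2 P_2 + \beta^2 Q_1).
\end{equation*}
Cancelling the factor $a^2 P_2 + \beta^2 Q_1$ from the numerator of $I(V; Y_2)$ against $I(S_1; V) = \tfrac{1}{2}\log(1 + \beta^2 Q_1 / (a^2 P_2))$ then turns the inequality into precisely \eqref{eq:ZSCond}.

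The main obstacle is threading the three dirty-paper coefficients consistently through the cascaded decoding at receiver~1: one must verify that the Costa-type choices of $\alpha_1, \beta, \alpha_2$ simultaneously zero the residual $S_1$ contributions at every stage so that the peeled Gaussian subchannels see only the intended signal plus white noise, and that the resulting per-stage rates combine cleanly into the state-free MAC sum rate despite the cross-dependence of $U_1, V, U_2$ on the common state. Once this conditional Gaussian bookkeeping is carried out, the remaining translation into \eqref{eq:ZSCond} is a short determinant manipulation.
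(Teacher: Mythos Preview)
Your proposal is correct and follows essentially the same route as the paper: instantiate Proposition~\ref{pps:zstrongDMC} with the rate-split Gaussian input $X_1=X_1'+X_1''$, the layered dirty-paper auxiliaries $U_1=X_1'+\alpha_1S_1$, $V=aX_2+\beta S_1$, $U_2=X_1''+\alpha_2S_1$, and choose the Costa coefficients so that the successive decoding $U_1\to V\to U_2$ at receiver~1 yields exactly the state-free MAC corner rates. Your explicit determinant computation turning $I(V;Y_2)-I(S_1;V)\geq R_2$ into \eqref{eq:ZSCond} is in fact more detailed than what the paper records, but the underlying argument is the same.
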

		\begin{proof}
			Theorem \ref{thr:ZSPointCapa} follows from Proposition \ref{pps:zstrongDMC} by choosing the auxiliary random variables $U_1$, $U_2$ and $V$ as in the statement of the theorem. In particular, $U_1$ is first decoded by receiver 1, and is designed to cancel the state in $Y_1$ treating all other variables as noise. Then, $V$ is decoded by receiver 1, and is designed to cancel the state in $Y_1^{\prime}=Y_1-U_1=X_1''+aX_2+(c-\alpha_1)S_1+N_1$. Finally, $U_2$ is designed to cancel the state in $Y_1''=Y_1'-V=X_1''+(c-\alpha_1-\beta)S_1+N_1$. In order to satisfy the state cancellation requirements, $\alpha_1$, $\alpha_2$ and $\beta$ should satisfy
			\begin{flalign}
			&\alpha_1=\frac{P_1'}{P_1+a^2P_2+1}, \label{eq:sdirty_con1}\quad \quad\\
			&				\frac{\alpha_2}{1-\alpha_1}=\frac{P_1^{\prime\prime}}{P_1^{\prime\prime}+1}, \label{eq:sdirty_con2}\\
			&				\frac{\beta}{1-\alpha_1}=\frac{a^2P_2}{P_1^{\prime\prime}+a^2P_2+1},\label{eq:sdirty_con3}
			\end{flalign}
			which yields \eqref{eq:z_condition}. Substituting these choices of the random variables and the coefficients into Proposition \ref{pps:strong_inner},  \eqref{rate:pps_stronginner} becomes
			
			\begin{equation}
			\begin{aligned}
			R_{1}&\leqslant \min\left\{ I(U_1;Y_2)-I(U_1;S_1),\frac{1}{2}\log\left(1+\frac{P_1'}{a^2P_2+P_1''+1}\right)\right\}\\
			&+\min\left\{I(U_2;VY_2|U_1)-I(U_2;S_1|U_1), \frac{1}{2}\log\left(1+P_1''\right)\right\}\\
			R_{2}&\leqslant \min\left\{I(V;Y_2|U_1)-I(V;S_1), \frac{1}{2}\log\left(1+\frac{a^2P_2}{P_1''+1}\right)\right\}.
			\end{aligned}
			\end{equation}
			Hence, if the condition \eqref{cond:thr_stronginner} is satisfied, the points on the sum capacity boundary \eqref{rate:capIC} can be achieved.
		\end{proof} 
		
		Theorem \ref{thr:ZSPointCapa} provides the condition of channel parameters under which a certain given point is on the sum-capacity boundary of the capacity region. We next characterize a line segment on the sum-capacity boundary for a given set of channel parameters. 
		\begin{corollary}\label{cor:capacitySegment}
			For the state-dependent Z-IC with states noncausally known at both transmitters, if a point on the line $B-E$ in Fig.~\ref{fig:scapa} is on the sum-capacity boundary for a given set of channel parameters, then the segment between this point and point $B$ on the line $B-E$ is on the sum-capacity boundary for the same set of channel parameters.
		\end{corollary}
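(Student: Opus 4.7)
The plan is to prove the corollary by combining a direct achievability of the corner $B$ with a time-sharing argument. The key convex-analytic observation is that the sum-capacity line $R_1+R_2 = \tfrac{1}{2}\log(1+P_1+a^2 P_2)$ is closed under convex combinations: if two rate pairs on this line are both achievable, so is every convex combination, and each such combination still lies on the same line and hence on the sum-capacity boundary. Consequently, once both $B$ and the given achievable point are shown to be on the sum-capacity boundary, time-sharing their coding schemes immediately yields that the entire segment between them is on the sum-capacity boundary.

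To argue that $B$ is achievable under the hypothesis, I would revisit the scheme of Theorem \ref{thr:ZSPointCapa} at the extreme value of the rate-splitting parameter, namely $P_1''=a^2-1$, which is exactly the value that produces the rate pair at $B$. In the Gaussian evaluation of condition \eqref{eq:ZSCond}, I note that the dirty-paper coefficient $\beta=a^2P_2/(P_1+a^2P_2+1)$ is independent of the splitting parameter $P_1''$, so the left-hand side of \eqref{eq:ZSCond} is the same at $B$ as at the given point; only the right-hand side $1+a^2P_2/(P_1''+1)$ shifts with $P_1''$. I would carefully analyze this monotone structure to decide whether the hypothesis transfers directly to $B$. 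If the direct monotonicity argument is not sufficient, I would fall back on an alternative scheme for $B$: transmitter 2 uses independent DPC against $S_2$ to secure its individual-maximum rate $\tfrac{1}{2}\log(1+P_2)$ at receiver 2, while receiver 1 first decodes transmitter 2's auxiliary codeword (exploiting the strong coupling between $X_2$ and $Y_1$) and then decodes its own message via DPC against the residual state, with decodability of transmitter 2's codeword at receiver 1 inherited from the hypothesis.

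The principal obstacle will be this propagation step. The right-hand side of \eqref{eq:ZSCond} grows as $P_1''$ decreases toward $a^2-1$, so the scheme-level condition is harder to satisfy near $B$ than at the given intermediate point, and a straightforward monotonicity argument does not immediately close the gap. The resolution I anticipate exploits the special algebraic structure at the corner $B$, where transmitter 2 attains its individual maximum and can therefore be handled by a stand-alone DPC encoding that does not rely on rate splitting. Once achievability of $B$ is established this way, convex combination with the given achievable point on the sum-capacity line finishes the proof, since each such combination lies on the line and is achievable by time-sharing.
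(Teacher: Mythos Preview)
Your proposal contains a genuine error: you have swapped the identifications of the two corner points. Point $B$ in Fig.~\ref{fig:scapa} is the corner at which $R_1=\tfrac{1}{2}\log(1+P_1)$ (transmitter~1 attains its individual maximum), which in the parameterization \eqref{eq:InnerPointsStrong} corresponds to $P_1''=P_1$ (equivalently $P_1'=0$). The value $P_1''=a^2-1$, and the property ``transmitter~2 attains its individual maximum $\tfrac{1}{2}\log(1+P_2)$,'' both describe point~$E$, not~$B$. This is also why your proposed fallback scheme for~$B$ is mismatched: the stand-alone DPC for transmitter~2 to reach $\tfrac{1}{2}\log(1+P_2)$ targets the wrong corner.

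Once the identification is corrected, the obstacle you flag disappears and the time-sharing detour becomes unnecessary. You already observed, correctly, that $\beta=a^2P_2/(P_1+a^2P_2+1)$ and hence the left-hand side of \eqref{eq:ZSCond} do not depend on $P_1''$, while the right-hand side $1+a^2P_2/(P_1''+1)$ is strictly decreasing in $P_1''$. Consequently, if \eqref{eq:ZSCond} holds at a point with splitting parameter $P_1''=P_1''^{\star}$, it automatically holds for every $P_1''\in[P_1''^{\star},P_1]$, which by \eqref{eq:InnerPointsStrong} is precisely the segment from the given point to~$B$. This one-line monotonicity is the intended argument behind the corollary; no separate achievability proof for~$B$ and no time-sharing are needed.
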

	
		\begin{figure}[thb]
			\centering
			\includegraphics[height=3in,width=5in]{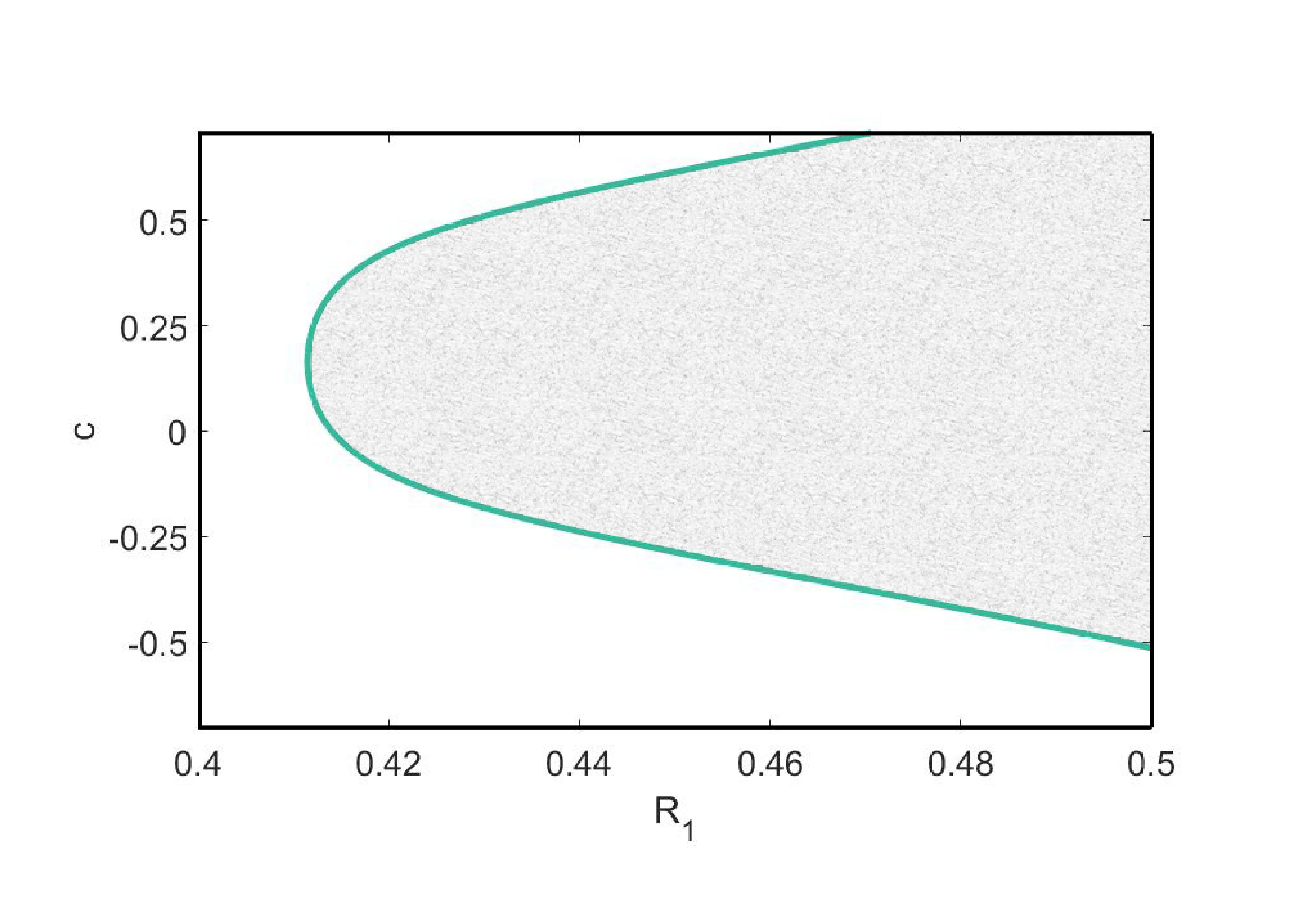}
			\caption{Ranges of $c$ under which points on sum-capacity boundary of the strong Z-IC without states can be achieved by the state-dependent Z-IC.}\label{fig:zsac}
		\end{figure}
		
		
		In order to numerically illustrate Theorem \ref{thr:ZSPointCapa}, we first note that each point on the sum-capacity boundary (i.e., the line B-E in  Fig. \ref{fig:scapa}) can be expressed as $(R_1,R_2)=(R_1, \frac{1}{2}\log(P_1+a^2P_2+1)-R_1)$. We now set $P_1=1$, $P_2=1$, $Q_1=2$, $Q_2=1$ and $a=1.2$, and hence $R_1 \in [ \frac{1}{2}\log(1.72), 0.5]$ parameterizes all points from point E to point B in Fig. \ref{fig:scapa}. In Fig.~\ref{fig:zsac}, we plot the ranges of $c$ under which points parameterized by $R_1$ on the sum capacity boundary of the strong Z-IC without states can be achieved by the state-dependent channel following Theorem \ref{thr:ZSPointCapa}. It can be seen that as the correlation between the two states (represented by $c$) increases, initially more points on the sum-capacity boundary are achieved and then  less points are achieved as $c$ is above a certain threshold. Thus, higher correlation does not guarantee more capability of achieving the sum-capacity boundary. This is because in our scheme $U_1$, $U_2$ and $V$ are specially designed for $Y_1$ based on dirty paper coding. At receiver 2, such design of $V$ initially approximates better the dirty paper coding design for $Y_2$ as $c$ becomes large, but then becomes worse as $c$ continues to increase, and hence decoding of $V$ at receiver 2 initially gets better and then becomes less capable, which consequently determines variation of achievability of the sum-capacity boundary. 
		
		
		\section{Weak Interference Regime}\label{sec:zweak}
		
		It has been shown in \cite{Sason04} that for the weak Gaussian Z-IC without state, i.e., $a^2\leqslant1$, the sum-capacity can be achieved by treating interference as noise at the interfered receiver. For the state-dependent Z-IC, if the two transmitters independently design dirty paper coding to cancel the state at their corresponding receivers, then the interference-free receiver achieves the capacity of the channel without state, and the interfered receiver (i.e., receiver 1) achieves the same rate as the channel without state by decoding its message treating the interference as noise. Thus, we obtain the following theorem.
		\begin{theorem}
			For the state-dependent Z-IC with states noncausally known at both transmitters, if $a^2\leqslant 1$,  the sum-capacity is given by
			\begin{flalign}\nn
			C_{\text{sum}} = \frac{1}{2}\log\left(1+\frac{P_1}{a^2P_2+1}\right)+ \frac{1}{2}\log\left(1+P_2\right).
			\end{flalign}
		\end{theorem}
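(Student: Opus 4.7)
The plan is to establish achievability and converse separately, showing they match.

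\textbf{Achievability.} The two transmitters independently apply Costa's dirty paper coding against their respective states, with no cooperation required even though both states are available at both transmitters. Transmitter 2 encodes $W_2$ against $S_2$ using dirty paper coding for the channel $Y_2 = X_2 + S_2 + N_2$, achieving rate $R_2 = \frac{1}{2}\log(1+P_2)$. Transmitter 1 encodes $W_1$ against $S_1$ using dirty paper coding, treating $aX_2$ as additional Gaussian noise at receiver~1; since $X_2$ is independent of $X_1$, $S_1$, and $N_1$, the effective noise variance is $a^2P_2+1$, so Costa's result gives $R_1 = \frac{1}{2}\log\left(1+\frac{P_1}{a^2P_2+1}\right)$. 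Summing yields the claimed $C_{\text{sum}}$.

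\textbf{Converse.} The clean route is a genie argument: provide both state sequences $(S_1^n, S_2^n)$ to both receivers as side information. This can only enlarge the capacity region, since each receiver can form the same decoding function as before (and more). With the states known to everyone (transmitters, receivers), each receiver can subtract its respective state from the channel output, so the resulting channel is equivalent to the Gaussian weak Z-IC \emph{without} state:
\begin{flalign*}
\tilde{Y}_1 &= X_1 + aX_2 + N_1, \\
\tilde{Y}_2 &= X_2 + N_2,
\end{flalign*}
with $a^2 \leq 1$ and the same power constraints $P_1, P_2$. For this classical weak Gaussian Z-IC, the sum-capacity is known \cite{Sason04} to be achieved by treating interference as noise at receiver~1, giving exactly $\frac{1}{2}\log\left(1+\frac{P_1}{a^2P_2+1}\right) + \frac{1}{2}\log(1+P_2)$. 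Hence this value upper-bounds the sum-rate of the original state-dependent channel.

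\textbf{Main obstacle.} There is no substantial obstacle once the genie argument is invoked; the only thing to verify is that feeding $(S_1^n, S_2^n)$ to the receivers does not shrink the achievable set. This is immediate because any encoding/decoding pair for the original channel remains valid when the receivers are given additional side information. The achievability direction also requires only standard independent dirty paper coding and does not exploit the state correlation or joint state knowledge, which is notable: in the weak regime, the correlation structure and the cooperative availability of states at both transmitters bring no benefit for sum-rate, in stark contrast to the strong and very strong regimes analyzed in Sections~\ref{sec:zvs} and \ref{sec:zs}.
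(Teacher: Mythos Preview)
Your proof is correct and follows essentially the same approach as the paper: independent dirty paper coding at each transmitter for achievability (treating interference as noise at receiver~1), and the genie-aided reduction to the state-free weak Gaussian Z-IC together with Sason's sum-capacity result \cite{Sason04} for the converse. The paper leaves the converse implicit, whereas you spell out the genie argument explicitly; your observation that the correlation between $S_1$ and $S_2$ plays no role in this regime also matches the paper's remark following the theorem.
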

		It can be seen that the sum-capacity achieving scheme does not depend on the correlation of the states, and hence, in the weak regime, the sum-capacity is not affected by the correlation of the states.
		

\newpage
\section{Technical Proofs}
			\subsection{Proof of Proposition \ref{pps:Z_inner}}\label{apx:Z inner}
We use random codes and fix the following joint distribution:
$$P_{S_1S_2UVX_1X_2Y_1Y_2}=P_{S_1S_2}P_{U|S_1S_2}P_{X_1|US_1S_2}P_{V|S_1S_2}P_{X_2|VS_1S_2}P_{Y_1Y_2|S_1S_2X_1X_2}$$

\begin{enumerate}
	\item Codebook Generation:
	\begin{itemize}
		\item Generate $2^{n(R_1+\tR_1)}$ codewords $U^n(w_1,l_1)$ with i.i.d.\ components based on $P_U$. Index these codewords by $w_1=1, \ldots, 2^{nR_1}, l_1 = 1, \ldots, 2^{n\tR_1}$.
		\item Generate $2^{n(R_2+\tR_2)}$ codewords $V^n(w_2,l_2)$ with i.i.d.\ components based on $P_V$. Index these codewords by $w_2=1, \ldots, 2^{nR_2}, l_2 = 1, \ldots, 2^{n\tR_2}$.
	\end{itemize}
	\item Encoding:
	\begin{itemize}
		
		\item Transmitter 1: Given $(s^n_1,s^n_2)$ and $w_1$, choose a $u^n(w_1,\tl_1)$ such that $$(u^n(w_1,\tl_1),s_1^n,s_2^n) \in T^n_\epsilon(P_{S_1S_2U})$$
		Otherwise, set $\tl_1=1$. It can be shown that for large n, such $u^n$ exists with high probability if 
		\begin{equation}
		\tR_1>I(U;S_1S_2). \label{eq:pps2-1}
		\end{equation}
		Then Generate $x^n_1$ with i.i.d. component based on $P_{X_1|US_1S_2}$ for transmission.
		
		\item Transmitter 2: Given $(s^n_1,s^n_2)$ and $w_2$, choose a $v^n(w_2,\tl_2)$ such that $$(v^n(w_2,\tl_2),s_1^n,s_2^n) \in T^n_\epsilon(P_{S_1S_2V})$$
		Otherwise, set $\tl_2=1$. It can be shown that for large n, such $v^n$ exists with high probability if 	
		\begin{equation}
		\tR_2>I(V;S_1S_2).
		\end{equation}
		Then Generate $x^n_2$ with i.i.d. component based on $P_{X_2|VS_1S_2}$ for transmission
	\end{itemize}
	
	\item Decoding:
	\begin{itemize}
		\item Decoder 1: Given $y^n_1$, find $(\hw_2, \hl_2)$ such that $$(v^n(\hw_2,\hl_2),y^n_1) \in T^n_\epsilon(P_{VY_1}).$$  If no or more than one such pair $(\hw_2, \hl_2)$ can be found, declare an error. It is easy to show that for sufficiently large n, we can correctly find such pair with high probability if 
		\begin{equation}
		R_2+\tR_2\leqslant I(V;Y_1).
		\end{equation}
		After decoding $v^n$, find the unique pair $(\hw_1, \hl_1)$ such that $$(u^n(\hw_1, \hl_1),v^n(w_2,\hl_2),y^n_1) \in T^n_\epsilon(P_{VUY_1})$$
		If no or more than one such pairs can be found, declare an error.  It is easy to show that for sufficiently large n, we can correctly find such pair with high probability if 
		\begin{equation}
		R_1+\tR_1\leqslant I(U;VY_1)
		\end{equation}
		
		\item Decoder 2: Given $y^n_2$, find $(\hw_2, \hl_2)$ such that $$(v^n(w_2,\hl_1),y^n_2) \in T^n_\epsilon(P_{UY_2}).$$ If no or more than one such pair $(\hw_2, \hl_2)$ can be found, declare an error. It is easy to show that for sufficiently large n, we can correctly find such pair with high probability if 
		\begin{equation}
		R_2+\tR_2\leqslant I(V;Y_2)\label{eq:pps2-2}
		\end{equation}
	\end{itemize}
\end{enumerate}
Proposition \ref{pps:Z_inner} is thus proved by combining \eqref{eq:pps2-1}-\eqref{eq:pps2-2}

\subsection{Proof of Proposition \ref{pps:zstrongDMC}}\label{apx:zstrongDMC}
We use random codes and fix the following joint distribution:
$$P_{S_1S_2U_1U_2VX_1X_2Y_1Y_2}=P_{S_1S_2}P_{V|S_1}P_{X_2|VS_1}P_{U_1|S_1}P_{U_2|S_1U_1}P_{X_1|U_1U_2S_1}P_{Y_1|S_1X_1X_2}P_{Y_2|S_2X_2}$$

\begin{enumerate}
	\item Codebook Generation:
	\begin{itemize}
		\item Generate $2^{n(R_{11}+\tR_{11})}$ codewords $U_1^n(w_{11},l_{11})$ with i.i.d.\ components based on $P_{U_1}$. Index these codewords by $w_{11}=1, \ldots, 2^{nR_{11}}, l_{11} = 1, \ldots, 2^{n\tR_{11}}$.
		\item For each $u_1^n(w_{11},l_{11})$, generate $2^{n(R_{12}+\tR_{12})}$ codewords $U_2^n(w_{11},l_{11},w_{12},l_{12})$ with i.i.d.\ components based on $P_{U_2|U_1}$. Index these codewords by $w_{12}=1, \ldots, 2^{nR_{12}}, l_{12} = 1, \ldots, 2^{n\tR_{12}}$.
		\item Generate $2^{n(R_2+\tR_2)}$ codewords $V^n(w_2,l_2)$ with i.i.d.\ components based on $P_V$. Index these codewords by $w_2=1, \ldots, 2^{nR_2}, v = 1, \ldots, 2^{n\tR_2}$.
	\end{itemize}
	\item Encoding:
	\begin{itemize}
		\item Transmitter 1: Given $s^n_1$ and $w_{11}$, choose a $u_1^n(w_{11},\tl_{11})$ such that $$(u^n(w_{11},\tl_{11}),s_1^n) \in T^n_\epsilon(P_{S_1U_{11}}).$$
		Otherwise, set $\tl_{11}=1$. It can be shown that for large n, such $u_1^n$ exists with high probability if 
		\begin{equation}
		\tR_{11}>I(U_1;S_1). \label{eq:pps3-1}
		\end{equation}
		Given $w_{11}$, $\tl_{11}$, $w_{12}$ and $s_1^n$, choose a $u_2^n(w_{11},\tl_{11},w_{12},\tl_{12})$ such that 
		$$(u_1^n(w_{11},\tl_{11}),u_2^n(w_{11},\tl_{11},w_{12},\tl_{12}),s_1^n) \in T^n_\epsilon(P_{S_1U_1U_2})$$
		Otherwise, set $\tl_{12}=1$. It can be shown that for large $n$, such $u_2^n$ exists with high probability if
		\begin{equation}
		\tR_{12}>I(U_2;S_1|U_1). \label{eq:pps3-2}
		\end{equation}
		Given $u_1^n(w_{11},\tl_{11})$, $u_2^n(w_{11},\tl_{11},w_{12},\tl_{12})$ and $s_1^n$, generate $x_1^n$ with i.i.d. components based on $P_{X_1|S_1U_1U_2}$
		
		\item Transmitter 2: Given $s^n_1$ and $w_2$, choose a $v^n(w_2,\tl_2)$ such that $$(v^n(w_2,\tl_2),s_1^n) \in T^n_\epsilon(P_{S_1V})$$
		Otherwise, set $\tl_2=1$. It can be shown that for large n, such $v^n$ exists with high probability if 	  
		\begin{equation}
		\tR_2>I(V;S_1).
		\end{equation}
		Then Generate $x^n_2$ with i.i.d. component based on $P_{X_2|VS_1}$ for transmission
	\end{itemize}
	
	\item Decoding:
	\begin{itemize}
		\item Decoder 1: Given $y^n_1$, find $(\hw_{11}, \hl_{11})$ such that $$(u_1^n(\hw_{11}, \hl_{11}),y^n_1) \in T^n_\epsilon(P_{U_1Y_1}).$$  If no or more than one such pair $(\hw_{11}, \hl_{11})$ can be found, declare an error. It is easy to show that for sufficiently large n, we can correctly find such pair with high probability if 
		\begin{equation}
		R_{11}+\tR_{11}\leqslant I(U_1;Y_1).
		\end{equation}
		After decoding $u_1^n$, find the unique pair $(\hw_{2}, \hl_{2})$ such that $$(u_1^n(\hw_{11}, \hl_{11}),v^n(w_2,\hl_2),y^n_1) \in T^n_\epsilon(P_{VU_1Y_1})$$
		If no or more than one such pairs can be found, declare an error.  It is easy to show that for sufficiently large n, we can correctly find such pair with high probability if 
		\begin{equation}
		R_2+\tR_2\leqslant I(V;Y_1|U_1)
		\end{equation}
		
		After successively decoding $v^n$, find the unique tuple $(w_{11},\tl_{11},w_{12},\tl_{12})$ such that
		$$((u_1^n(\hw_{11}, \hl_{11}),v^n(w_2,\hl_2),u_2^n(w_{11},\tl_{11},w_{12},\tl_{12}),y^n_1) \in T^n_\epsilon(P_{VU_1U_2Y_1}))$$
		If no or more than one such pairs can be found, declare an error.  It is easy to show that for sufficiently large n, we can correctly find such pair with high probability if 
		\begin{equation}
		R_{12}+\tR_{12}\leqslant I(U_2;VY_1|U_1)
		\end{equation}

		\item Decoder 2:  Given $y^n_2$, find $(\hw_{11}, \hl_{11})$ such that $$(u_1^n(\hw_{11}, \hl_{11}),y^n_2) \in T^n_\epsilon(P_{U_1Y_1}).$$  If no or more than one such pair $(\hw_{11}, \hl_{11})$ can be found, declare an error. It is easy to show that for sufficiently large n, we can correctly find such pair with high probability if 
		\begin{equation}
		R_{11}+\tR_{11}\leqslant I(U_1;Y_2).
		\end{equation}
		After decoding $u_1^n$, find the unique pair $(\hw_{2}, \hl_{2})$ such that $$(u_1^n(\hw_{11}, \hl_{11}),v^n(w_2,\hl_2),y^n_2) \in T^n_\epsilon(P_{VU_1Y_2})$$
		If no or more than one such pairs can be found, declare an error.  It is easy to show that for sufficiently large n, we can correctly find such pair with high probability if 
		\begin{equation}
		R_2+\tR_2\leqslant I(V;Y_2|U_1)
		\end{equation}
		
		After successively decoding $v^n$, find the unique tuple $(w_{11},\tl_{11},w_{12},\tl_{12})$ such that
		$$((u_1^n(\hw_{11}, \hl_{11}),v^n(w_2,\hl_2),u_2^n(w_{11},\tl_{11},w_{12},\tl_{12}),y^n_1) \in T^n_\epsilon(P_{VU_1U_2Y_2}))$$
		If no or more than one such pairs can be found, declare an error.  It is easy to show that for sufficiently large n, we can correctly find such pair with high probability if 
		\begin{equation}
		R_{12}+\tR_{12}\leqslant I(U_2;VY_2|U_1)
		\end{equation}
	\end{itemize}
	The corresponding achievable region is thus characterized by
	\begin{flalign}
	R_{11}&\leqslant \min\{I(U_1;Y_1), I(U_1;Y_2)\}-I(U_1;S_1)\label{eq:pps3-3}\\
	R_{12}&\leqslant \min\{I(U_2;VY_1|U_1), I(U_2;VY_2|U_1)\}-I(U_2;S_1|U_1)\label{eq:pps3-4}\\
	R_{2}&\leqslant \min\{I(V;Y_1|U_1), I(V;Y_2|U_1)\}-I(V;S_1)\label{eq:pps3-5}
	\end{flalign}
\end{enumerate}
Proposition \ref{pps:strong_inner} is thus proved by combining \cref{eq:pps3-1,eq:pps3-2} with \cref{eq:pps3-3,eq:pps3-4,eq:pps3-5}.

\chapter{State-Dependent Interference Channel with Correlated States}\label{cha:2state}
In this chapter, we study the state-dependent regular IC channel with correlated states. This state-dependent IC is different from the regular IC without state as each receiver is corrupted by a channel state and the two transmitters know the information of both channel states noncausally. These two states are correlated, so that this model generalizes the models with independent states and with the same but differently scaled states. Comparing to the model studied in Chapter \ref{chap:z2state}, we need to develop a more sophisticated scheme to cancel the channel states. Furthermore, this model brings us more insights on the impact of the correlation between the states on the capacity region.

The rest of chapter is organized as follows. First, we describe the channel model. Second, we study the model in the very strong interference regime and characterize the channel parameters under which the two receivers achieve their corresponding point-to-point channel capacity without state and interference. Then, we study the model in strong but not very strong interference regime and characterize the sum capacity boundary partially under certain channel parameters based on joint design of rate splitting, successive cancellation, as well as dirty paper coding. Finally, we study the model in the weak interference regime and characterize the sum capacity, which is achieved by the two transmitters independently designing dirty paper coding and treating interference as noise.

\section {Channel Model}\label{sec:icmodel}

\begin{figure}[thb]
	\centering
	\includegraphics[width=4in]{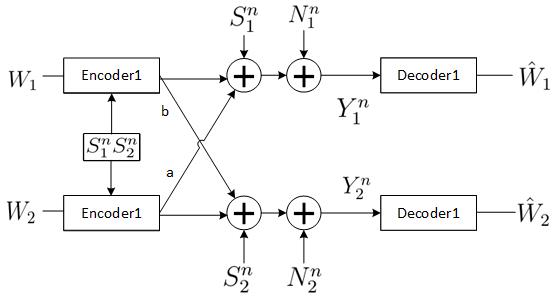}
	\caption{The state-dependent IC}\label{fig:inferencechannel}
\end{figure}

We consider the state-dependent IC (as shown in Fig.~\ref{fig:inferencechannel}), in which transmitters 1 and 2 send messages $W_{1}$ and $W_{2}$ respectively to the receivers 1 and 2. For $k=1,2$, encoder $k$ maps the message $w_k\in \cW_k$ to a codeword $x_k^n\in \cX_k^n$. The two inputs $x_1^n$ and $x_2^n$ are then transmitted over the IC to the receivers, which are corrupted by two correlated state sequences $S_1^n$ and $S_2^n$, respectively. The state sequences are known to both the transmitters noncausally, but are unknown at the receivers. Encoders 1 and 2 want to map their messages as well as the state sequences' information into codewords $x_1^n\in\cX_1^n$ and $x_2^n\in\cX_2^n$. The channel transition probability is given by $P_{Y_1Y_2|S_1S_2X_1X_2}$. The decoders at the receivers map the received sequences $y_1^n$ and $y_2^n$ into corresponding messages $\hw_k\in \cW_k$ for $k=1,2$.

The average probability of error for a length-$n$ code is defined as
\begin{flalign}\label{PE}
P_e^{(n)} = & \frac{1}{|\cW_1||\cW_2|}\sum_{w_1=1}^{|\cW_1|}\sum_{w_2=1}^{|\cW_2|} Pr\lbrace(\hat{w}_1, \hat{w}_2) \neq (w_1, w_2)\rbrace.
\end{flalign}
A rate pair $(R_1, R_2)$ is {\em achievable} if there exist a sequence of message sets $\cW_{k}^{(n)}$ with $|\cW_{k}^{(n)}|=2^{nR_k}$ for $k=1, 2$, such that the average error probability $P_e^{(n)} \rightarrow 0$ as $n \to \infty$. The {\em capacity region} is defined to be the closure of the set of all achievable rate pairs $(R_1, R_2)$.

In this dissertation, we study the Gaussian channel with the outputs at the two receivers for one channel use given by
\begin{subequations}
	\begin{flalign}
	Y_1&=X_1+ aX_2+S_1+N_1,\\
	Y_2&=bX_1+X_2+S_2+N_2
	\end{flalign}
\end{subequations}	
where $a$ and $b$ are the channel gain coefficients, and $N_1$ and $N_2$ are noise variables with Gaussian distributions $N_1 \sim \mathcal{N}(0,1)$ and $N_2 \sim \mathcal{N}(0,1)$. The state variables $S_1$ and $S_2$ are jointly Gaussian with the correlation coefficient $\rho$ and the marginal distributions $S_1 \sim \mathcal{N}(0,Q_1)$ and $S_2\sim \mathcal{N}(0,Q_2)$. Both the noise variables and the state variables are i.i.d. over the channel uses. The channel inputs $X_1$ and $X_2$ are subject to the average power constraints $P_1$ and $P_2$.

Our goal is to characterize channel parameters, under which the capacity of the corresponding IC without the presence of the state can be achieved, and thus the capacity region of the IC with the presence of state is also established. In particular, we are interested in understanding the impact of the correlation between the states $S_1$ and $S_2$ on the capacity characterization.

\section {Very Strong Interference Regime}\label{sec:icvs}

In this section, we study the impact of the correlation between states on the characterization of the capacity in the very strong regime, where the channel parameters satisfy
\begin{subequations}\nn
	\begin{flalign}
	P_1+a^2P_2+1&>(1+P_1)(1+P_2),\\
	b^2P_1+P_2+1&>(1+P_1)(1+P_2).
	\end{flalign}
\end{subequations}	
For the corresponding IC without states, the capacity region contains rate pairs ($R_1,R_2$) satisfying:
\begin{equation}\label{cap:VeryStrong}
\begin{aligned}
R_1 \leqslant & \frac{1}{2}\log(1+P_1),\\
R_2 \leqslant & \frac{1}{2}\log(1+P_2).
\end{aligned}
\end{equation}	
In this case, the two receivers achieve the point-to-point channel capacity without interference. Furthermore, in \cite{Duan16IT}, an achievable scheme has been established to achieve the same point-to-point channel capacity when the two receivers are corrupted by the same but differently scaled state. Our focus here is on the more general scenario, where the two receivers are corrupted by two {\em correlated} states, and our aim is to understand how the correlation affects the design of the scheme.


We first design an achievable scheme to obtain an achievable rate region for the discrete memoryless IC. The two transmitters encode their messages $W_1$ and $W_2$ into two auxiliary random variables $U$ and $V$, respectively, based on the Gel'fand-Pinsker binning scheme. Since the channel satisfies the very strong interference condition, it is easier for a receiver to decode the information of the interference. Thus either receiver first decodes the auxiliary random variable corresponding to the message intended for the other receiver, and then decodes its own message by decoding the auxiliary random variable for itself. For instance, receiver 1 first decodes $V$, then uses it to cancel the interference $X_2$ and partial state interference, and finally decodes its own message $W_1$ by decoding $U$. Differently from \cite{Duan16IT}, two auxiliary random variables $U$ and $V$ are designed not with regard to one state, but with regard to two correlated states. This requires a joint design for $U$ and $V$ to fully cancel the states. Based on such a scheme, we obtain the following achievable region.

\begin{proposition}\label{pps:IC inner}
	For the state-dependent IC with states noncausally known at both transmitters, the achievable region consists of rate pairs $(R_1,R_2)$ satisfying:
	\begin{subequations}
		\begin{flalign}
		R_1 \leqslant & \min\{ I(U;VY_1),I(U;Y_2)\}-I(S_1S_2;U),\label{eq:pps1-1} \\
		R_2 \leqslant & \min\{ I(V;UY_2),I(V;Y_1)\}-I(S_1S_2;V)\label{eq:pps1-2}
		\end{flalign}
	\end{subequations}
	for some distribution $P_{S_1S_2}P_{U|S_1S_2}P_{X_1|US_1S_2}P_{V|S_1S_2}P_{X_2|VS_1S_2}P_{Y_1Y_2|S_1S_2X_1X_2}$, where $U$ and $V$ are auxiliary random variables.
\end{proposition}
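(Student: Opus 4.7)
The plan is to prove Proposition \ref{pps:IC inner} by a standard Gel'fand--Pinsker binning argument applied symmetrically at both transmitters, combined with successive decoding of the two auxiliary random variables at each receiver. The structure will mirror the proof of Proposition \ref{pps:Z_inner} in Section \ref{apx:Z inner}, but with the extra feature that each receiver must decode the \emph{other} transmitter's auxiliary variable first to strip interference before decoding its own.

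First I would fix the joint distribution $P_{S_1S_2}P_{U|S_1S_2}P_{V|S_1S_2}P_{X_1|US_1S_2}P_{X_2|VS_1S_2}P_{Y_1Y_2|S_1S_2X_1X_2}$ and build a random code: generate $2^{n(R_1+\tilde R_1)}$ codewords $u^n(w_1,l_1)$ i.i.d.\ according to $P_U$, and independently generate $2^{n(R_2+\tilde R_2)}$ codewords $v^n(w_2,l_2)$ i.i.d.\ according to $P_V$. For encoding, transmitter 1, given $(s_1^n,s_2^n)$ and $w_1$, searches for an index $\tilde l_1$ such that $(u^n(w_1,\tilde l_1),s_1^n,s_2^n)\in T_\epsilon^n(P_{S_1S_2U})$; by the covering lemma such an index exists w.h.p.\ provided $\tilde R_1 > I(U;S_1S_2)$. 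Transmitter 2 performs the analogous binning step requiring $\tilde R_2 > I(V;S_1S_2)$. The channel inputs $x_1^n, x_2^n$ are then drawn conditionally i.i.d.\ from $P_{X_1|US_1S_2}$ and $P_{X_2|VS_1S_2}$ respectively, yielding the target joint distribution.

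Next I would analyze the decoders. Receiver 1 performs successive joint-typicality decoding: first it looks for the unique $(\hat w_2,\hat l_2)$ with $(v^n(\hat w_2,\hat l_2),y_1^n)\in T_\epsilon^n(P_{VY_1})$, which by the packing lemma succeeds if $R_2+\tilde R_2 \le I(V;Y_1)$; then given the decoded $v^n$, it searches for the unique $(\hat w_1,\hat l_1)$ with $(u^n(\hat w_1,\hat l_1),v^n,y_1^n)\in T_\epsilon^n(P_{UVY_1})$, which requires $R_1+\tilde R_1 \le I(U;VY_1)$. By the symmetric argument at receiver 2 (decode $U$ first, then $V$ using $U$), we obtain $R_1+\tilde R_1 \le I(U;Y_2)$ and $R_2+\tilde R_2 \le I(V;UY_2)$. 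Eliminating $\tilde R_1$ and $\tilde R_2$ by substituting $\tilde R_1 = I(U;S_1S_2)+\epsilon$ and $\tilde R_2 = I(V;S_1S_2)+\epsilon$ and letting $\epsilon\to 0$ gives precisely the bounds \eqref{eq:pps1-1}--\eqref{eq:pps1-2}.

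The main obstacle I expect is the error-event bookkeeping at the non-intended decoder: receiver 1 is being asked to decode $v^n$ even though $V$ carries transmitter 2's message, and the marginal distribution of the transmitted $v^n$ (after the state-dependent bin selection) is, by the standard Gel'fand--Pinsker argument, close in total variation to the i.i.d.\ $P_V$ codebook generation distribution, so the packing lemma applies. A related subtlety is that the index $\tilde l_2$ need not be decoded correctly in the sense that any $\tilde l_2'$ yielding the same $\hat w_2$ is acceptable; this is handled by not counting such events as errors, exactly as in the proof of Proposition \ref{pps:Z_inner}. Once these points are handled, combining the encoding constraints with the four decoding constraints produces the stated region, completing the proof.
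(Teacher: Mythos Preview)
Your proposal is correct and follows essentially the same approach as the paper's own proof: independent Gel'fand--Pinsker binning at each transmitter against both states, followed by successive typicality decoding at each receiver (other user's auxiliary variable first, then one's own), yielding exactly the four decoding constraints and two covering constraints you list. The paper's argument matches yours step for step, including the order of decoding at each receiver and the resulting elimination of $\tilde R_1,\tilde R_2$.
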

\begin{proof}
	See Section \ref{apx:IC inner}.
\end{proof}

We now study the Gaussian IC. For the sake of technical convenience, we express the Gaussian channel in Section \ref{sec:icvs} in a different form. Since $S_1$ and $S_2$ are jointly Gaussian, $S_1$ can be expressed as $S_1=dS_2+S_1^\prime$ where $d$ is a constant representing the level of correlation, and $S_1'$ is independent from $S_2$ and $S_1'\sim \mathcal{N}(0, Q_1')$ with $Q_1=d^2Q_2+Q_1^\prime$. Thus, without loss of generality, the channel model can be expressed in the following equivalent form that is more convenient for analysis.
\begin{subequations}
	\begin{flalign}
	Y_1&=X_1+ aX_2+dS_2+S_1^\prime+N_1,\\
	Y_2&=bX_1+X_2+S_2+N_2.
	\end{flalign}
\end{subequations}	

Following Proposition \ref{pps:IC inner}, we characterize the condition under which both the state and interference can be fully canceled, and hence the capacity region for the state-dependent Gaussian IC in the very strong regime is obtained.

\begin{theorem}\label{thr:IC inner}
	For the state-dependent Gaussian IC with state noncausally known at both transmitters,the capacity region for  is the same as the point-to-point channel capacity for both receivers, If the channel parameters  $(a,b,d,P_1,P_2,Q_1^\prime,Q_2)$ satisfy the following conditions:
	\begin{subequations}
		\begin{flalign}
		\frac{1}{2}\log(1+P_1)\leqslant& h(X_1)-h(U,Y_2)+h(Y_2)\label{eq:cond1}\\
		\frac{1}{2}\log(1+P_2)\leqslant& h(X_2)-h(V,Y_1)+h(Y_1)\label{eq:cond2}
		\end{flalign}
	\end{subequations}
	where the auxiliary random variables are designed as $U=X_1+\alpha_1S_1^\prime+\alpha_2S_2$ and $V=X_2+\beta_1S_1^\prime+\beta_2S_2$. Here, $X_1$,$X_2$, $ S_1^\prime $ and $ S_2 $ are independent Gaussian variables withe mean zero and variances $P_1$,$P_2$, $ Q_1 $ and $ Q_2 $, respectively. The parameters $\alpha_1$,$\alpha_2$,$\beta_1$ and $\beta_2$ are set as
	\begin{equation}\label{eq:s_variablesetting}
	\begin{aligned}
	\alpha_1&=\frac{P_1(1+P_2)}{(P_1+1)(P_2+1)-abP_1P_2},\\ \alpha_2&=\frac{P_1(d+dP_2-aP_2)}{(P_1+1)(P_2+1)-abP_1P_2},\\
	\beta_1&=\frac{bP_1P_2}{(P_1+1)(P_2+1)-abP_1P_2},\\ \beta_2&=\frac{P_2(P_1+1-bdP_1)}{(P_1+1)(P_2+1)-abP_1P_2}.
	\end{aligned}	
	\end{equation}
	\end {theorem}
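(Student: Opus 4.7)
The plan is to invoke Proposition \ref{pps:IC inner} with the Gaussian choice $U = X_1 + \alpha_1 S_1' + \alpha_2 S_2$, $V = X_2 + \beta_1 S_1' + \beta_2 S_2$, where $X_1, X_2, S_1', S_2$ are mutually independent zero-mean Gaussians with the stated variances. With this choice I verify that, under the coefficients \eqref{eq:s_variablesetting}, each of the four mutual-information differences in \eqref{eq:pps1-1}--\eqref{eq:pps1-2} is at least the corresponding point-to-point capacity. The converse is trivial, since $R_1 \le \frac{1}{2}\log(1+P_1)$ and $R_2 \le \frac{1}{2}\log(1+P_2)$ hold even in the absence of interference and state.

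\textbf{Step 1: Rewriting the ``cross'' terms.} Since $X_1$ is independent of $(S_1, S_2)$ and $U - X_1$ is a deterministic function of $(S_1, S_2)$, one has $h(U \mid S_1, S_2) = h(X_1)$, whence
\begin{equation*}
I(U; Y_2) - I(U; S_1, S_2) \;=\; h(X_1) - h(U \mid Y_2) \;=\; h(X_1) - h(U, Y_2) + h(Y_2).
\end{equation*}
The symmetric identity for $V$ at receiver 1 holds analogously. Thus conditions \eqref{eq:cond1}--\eqref{eq:cond2} are precisely the statements that these two cross terms are bounded below by $\frac{1}{2}\log(1+P_1)$ and $\frac{1}{2}\log(1+P_2)$, respectively.

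\textbf{Step 2: Matching the ``primary'' terms via coupled dirty paper coding.} I will show that $I(U; V, Y_1) - I(U; S_1, S_2) = \frac{1}{2}\log(1+P_1)$ and $I(V; U, Y_2) - I(V; S_1, S_2) = \frac{1}{2}\log(1+P_2)$. After receiver 1 uses $V$ as side information, the effective channel seen by $X_1$ is, up to an invertible linear transformation,
\begin{equation*}
Y_1 - aV \;=\; X_1 + (d - a\beta_2) S_2 + (1 - a\beta_1) S_1' + N_1,
\end{equation*}
which is Costa's dirty-paper channel whose ``state'' $\tilde S_1 := (d - a\beta_2)S_2 + (1 - a\beta_1) S_1'$ is known at transmitter 1 (since $S_1, S_2$ are). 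Costa's optimum with input power $P_1$ and unit noise forces $U = X_1 + \frac{P_1}{P_1+1}\tilde S_1$, which matches coefficient-by-coefficient to
\begin{equation*}
\alpha_1 = \tfrac{P_1}{P_1+1}(1 - a\beta_1), \qquad \alpha_2 = \tfrac{P_1}{P_1+1}(d - a\beta_2).
\end{equation*}
The analogous argument at receiver 2 (using $Y_2 - bU$ and residual state $(1 - b\alpha_2) S_2 - b\alpha_1 S_1'$) yields
\begin{equation*}
\beta_1 = -\tfrac{P_2}{P_2+1} b\alpha_1, \qquad \beta_2 = \tfrac{P_2}{P_2+1}(1 - b\alpha_2).
\end{equation*}
Solving this coupled $4 \times 4$ linear system produces exactly the values in \eqref{eq:s_variablesetting}; the common denominator $(P_1+1)(P_2+1) - abP_1P_2$ emerges because the two dirty-paper designs are intertwined through the cross-channel gains $a$ and $b$.

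\textbf{Step 3: Assembly and obstacle.} Steps 1 and 2 together guarantee that all four quantities in \eqref{eq:pps1-1}--\eqref{eq:pps1-2} are at least $\frac{1}{2}\log(1+P_1)$ or $\frac{1}{2}\log(1+P_2)$ as appropriate, so Proposition \ref{pps:IC inner} delivers the achievability of \eqref{cap:VeryStrong}; the matching converse, as noted, follows from the single-user capacity. The main obstacle is Step 2: unlike the single-helper or Z-channel settings of earlier chapters, here the two transmitters must both perform dirty paper coding against residual state terms whose coefficients \emph{depend on the other transmitter's auxiliary}. Establishing simultaneous optimality therefore requires solving a coupled linear system, and one must verify nonsingularity (i.e., $(P_1+1)(P_2+1) \neq abP_1P_2$) and the correct identification of \eqref{eq:s_variablesetting}; the remainder is routine Gaussian entropy computation.
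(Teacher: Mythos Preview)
Your proposal is correct and follows essentially the same route as the paper: invoke Proposition~\ref{pps:IC inner} with the stated Gaussian auxiliaries, derive the coupled dirty-paper conditions from the subtracted channels $Y_1-aV$ and $Y_2-bU$ (the paper writes these as $\frac{\alpha_1}{1-a\beta_1}=\frac{\alpha_2}{d-a\beta_2}=\frac{P_1}{P_1+1}$ and $\frac{\beta_1}{-b\alpha_1}=\frac{\beta_2}{1-b\alpha_2}=\frac{P_2}{P_2+1}$), solve the resulting linear system to obtain \eqref{eq:s_variablesetting}, and identify \eqref{eq:cond1}--\eqref{eq:cond2} as exactly the requirement that the cross terms $I(U;Y_2)-I(U;S_1,S_2)$ and $I(V;Y_1)-I(V;S_1,S_2)$ dominate. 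Your Step~1 entropy identity $I(U;Y_2)-I(U;S_1,S_2)=h(X_1)-h(U,Y_2)+h(Y_2)$ and the explicit framing of Step~2 as a coupled $4\times4$ system are slightly more detailed than the paper's own presentation, but the argument is the same.
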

	
	\begin{proof}
		The proof mainly follows Proposition \ref{pps:IC inner}. As discussed in the proof of Proposition \ref{pps:IC inner}, $V$ is first decoded by decoder 1 and $U$ is first decoded by decoder 2. And then  receiver 2 subtracts $U$ to cancel $X_1$ and obtain  $Y_2^{\prime}=Y_2-bU=X_2-b\alpha_1S_1^\prime+(1-b\alpha_2)S_2+N_2$, and receiver 1 subtracts $V$ to cancel $X_2$ and obtain  $Y_1^{\prime}=Y_1-aV=X_1+(1-a\beta_1)S_1^\prime+(d-a\beta_2)S_2+N_1$. In order to fully cancel the channel states for $Y_1^{\prime}$ and $Y_2^{\prime}$, based on the dirty paper coding scheme in \cite{Costa83}, we further require the coefficients to satisfy the following conditions,
		\begin{subequations}
			\begin{flalign}
			\frac{\alpha_1}{1-a\beta_1}&=\frac{\alpha_2}{d-a\beta_2}\label{eq:dirty_con1}\\
			\frac{\alpha_1}{1-a\beta_1}&=\frac{P_1}{P_1+1}\label{eq:dirty_con2}\\
			\frac{\beta_1}{-b\alpha_1}&=\frac{\beta_2}{1-b\alpha_2}\label{eq:dirty_con3}\\
			\frac{\beta_1}{-b\alpha_1}&=\frac{P_2}{P_2+1}\label{eq:dirty_con4}
			\end{flalign}
		\end{subequations}		
		which yield $\alpha_1$,$\alpha_2$,$\beta_1$ and $\beta_2$ in \eqref{eq:s_variablesetting}.
		
		By plugging these parameters into \eqref{eq:pps1-1}, we obtain $$I(U;VY_1)-I(S_1,S_2;U)= \frac{1}{2}\log(1+P_1),$$
		which yields $$R_1 \leqslant \min\{I(U;Y_2)-I(S_1,S_2;U), \frac{1}{2}\log(1+P_1)\}.$$

		Similarly, \eqref{eq:pps1-2} yields $$R_2 \leqslant \min\{I(V;Y_1)-I(S_1,S_2;V), \frac{1}{2}\log(1+P_2)\}.$$
		In order to achieve the channel capacity of the point-to-point channel as shown in \eqref{cap:VeryStrong} for both receivers, the following conditions should be satisfied:
		\begin{subequations}
			\begin{flalign}
			\frac{1}{2}\log(1+P_1)&\leqslant I(U;Y_2)-I(S_1,S_2;U)\label{eq:thr1_con1}\\
			\frac{1}{2}\log(1+P_2)&\leqslant I(V;Y_1)-I(S_1,S_2;V) \label{eq:thr1_con2}.
			\end{flalign}
		\end{subequations}	
	\end{proof}	
	We note that the conditions in Theorem \ref{thr:IC inner} represent the comparison between the ability of receivers to decode messages in different decoding steps. For instance, in condition \eqref{eq:cond1} the right-hand side term represents how much receiver 2 can decode $U$ in the first step of decoding in order to cancel the interference, and the left-hand side term represents the rate at which receiver 1 can decode $U$ in the second step of decoding, where we can use the dirty paper coding scheme to fully cancel the states and achieve the capacity. Hence, achieving the point-to-point channel capacity requires the second step to be bottleneck.
	
	We next study the impact of the channel parameters and state correlation on the achievablility of the point-to-point capacity. In particular, we illustrate how the interference gains $(a,b)$ affect the conditions (\ref{eq:cond1}) and (\ref{eq:cond2}).

	%
	\begin{figure}[H]
	\centering
	\begin{tabular}{cc} 
		\includegraphics[width=3in]{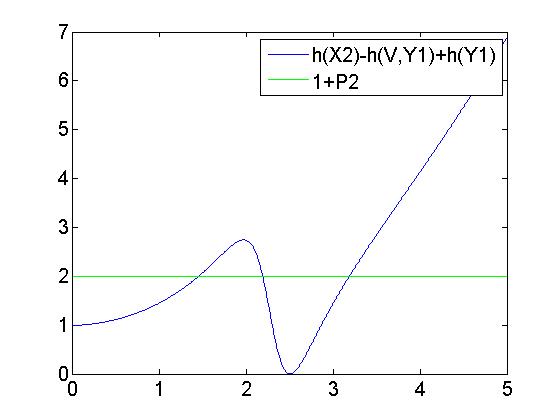}
		&\includegraphics[width=3in]{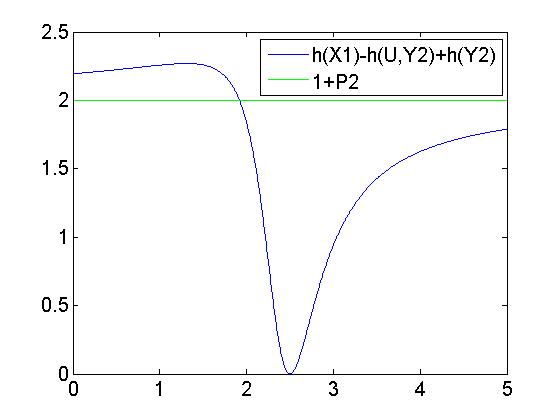}\\
		$h(X_1)-h(U,Y_2)+h(Y_2)$ versus b&	$h(X_2)-h(V,Y_1)+h(Y_1)$ versus b\\
	\end{tabular}
\vspace{0.1cm}
		\large $d=0.99$\\
		\begin{tabular}{cc} 
		\includegraphics[width=3in]{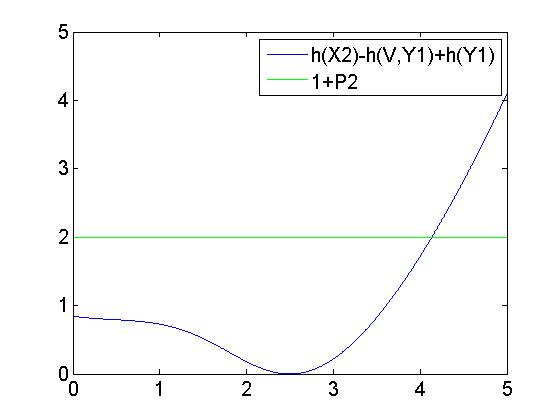}
		&\includegraphics[width=3in]{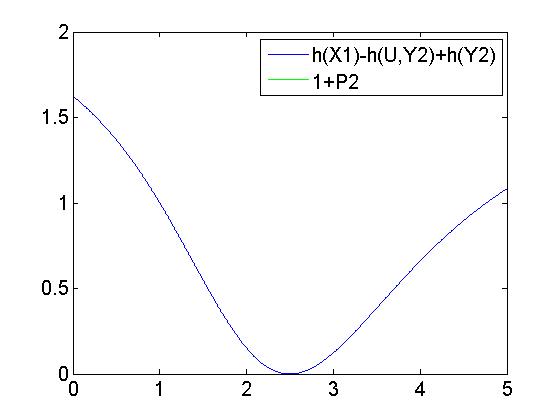}\\
			$h(X_1)-h(U,Y_2)+h(Y_2)$ versus b&	$h(X_2)-h(V,Y_1)+h(Y_1)$ versus b\\
	\end{tabular}
\vspace{0.1cm}
$d=0.5$\\
\begin{tabular}{cc} 
	\includegraphics[width=3in]{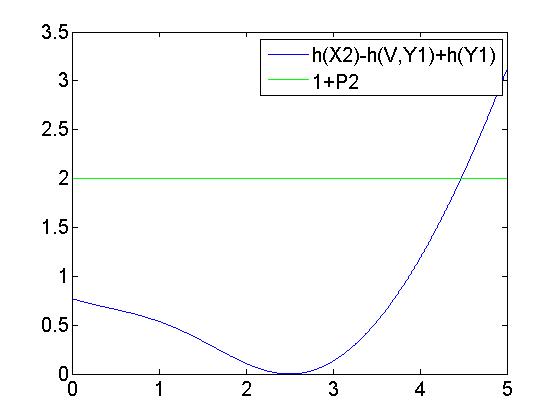}
	&\includegraphics[width=3in]{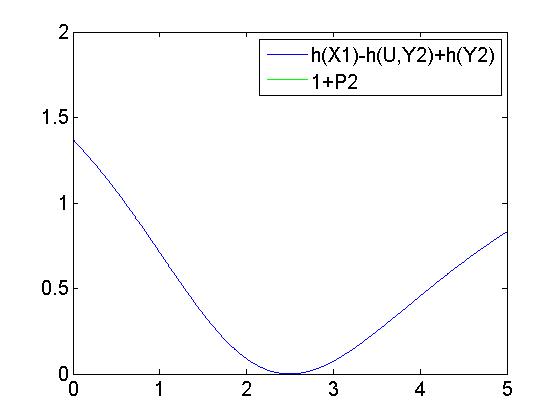}\\
		$h(X_1)-h(U,Y_2)+h(Y_2)$ versus b&	$h(X_2)-h(V,Y_1)+h(Y_1)$ versus b\\
	\end{tabular}
\vspace{0.1cm}
	$d=0.1$\\
	\caption{ Conditions \eqref{eq:cond1} and \eqref{eq:cond2} changing with b}\label{fig:b_beta}
\end{figure}
	
	In Fig. \ref{fig:b_beta}, we set $Q_1=Q_2=0.9, P_1=1,P_2=1$ and $a=1.6$, and plot the right side terms in  (\ref{eq:cond1}) and  (\ref{eq:cond2}), $h(X_1)-h(U,Y_2)+h(Y_2)$ and $h(X_2)-h(V,Y_1)+h(Y_1)$, versus the channel parameters $ b $ for three different values of $ d $. Taking the first row of Fig. \ref{fig:b_beta} as an example, it is clear that $\frac{1}{2}\log(1+P_2)$ is a straight line, and $h(X_1)-h(U,Y_2)+h(Y_2)$ is not a monotone function with respect to $ b $. The condition (\ref{eq:cond1}) is satisfied only when $h(X_1)-h(U,Y_2)+h(Y_2)$ is above the straight line $\frac{1}{2}\log(1+P_2)$. When the parameter $d=0.99$, there are two regions over which the condition (\ref{eq:cond1}) is satisfied. But if $d=0.5$, there is only one region over which the condition (\ref{eq:cond1}) is satisfied. For $d=0.1$, the condition (\ref{eq:cond1}) is not satisfied for any $b$. Similarly, the second row in Fig. \ref{fig:b_beta} illustrates the regions of $b$ over which the condition (\ref{eq:cond1}) is satisfied for the corresponding values of $ d $. Then the intersection of the region of $ b $ in the first and second rows of Fig. \ref{fig:b_beta} fully determines the ranges of $ b $ over which the point-to-point channel capacity can be achieved.
	
	\begin{figure}[H]
		\caption{Different d}\label{fig:c}
		\centering
			\includegraphics[width=3in]{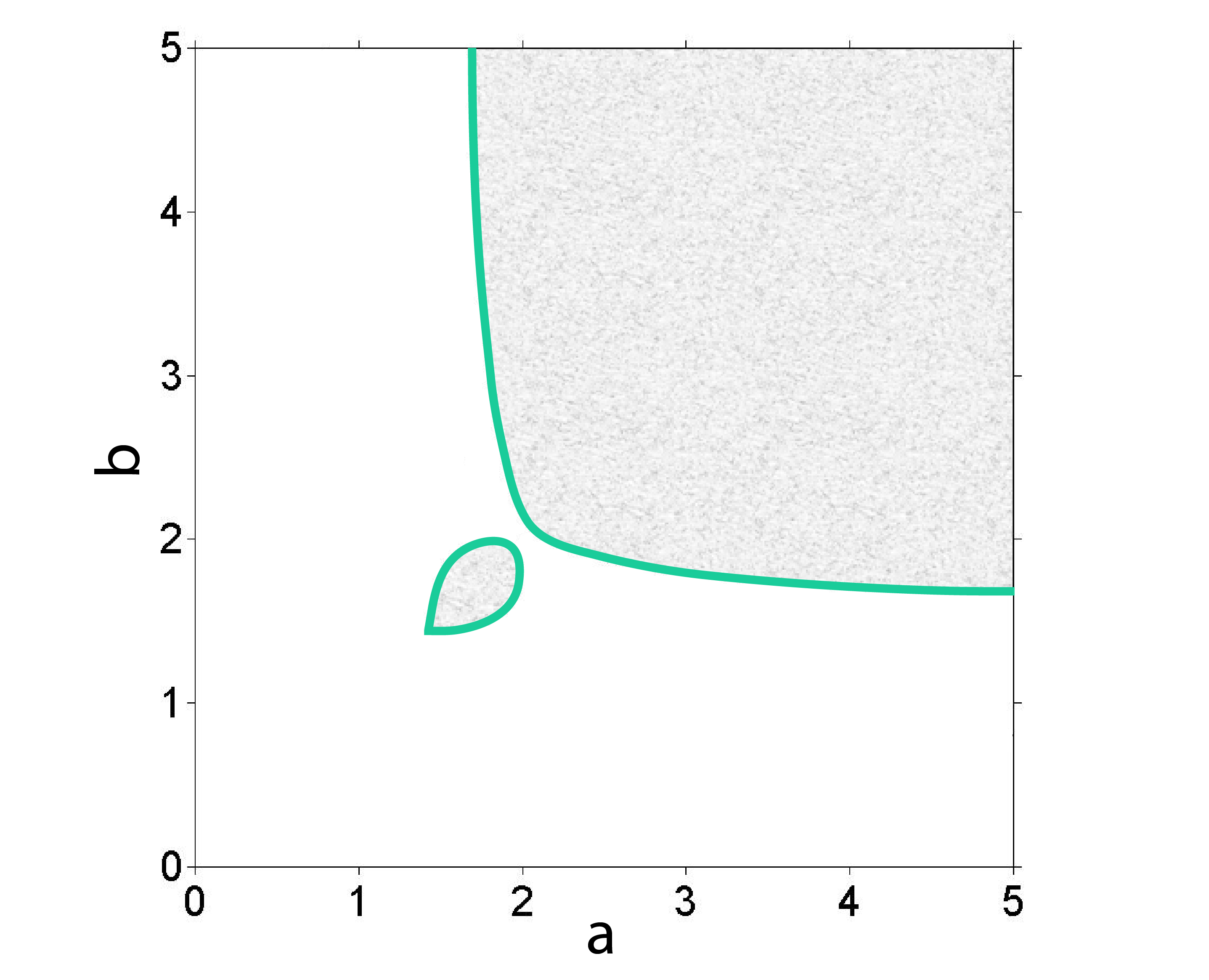}\\
			$d=0.99$\\
			\includegraphics[width=3in]{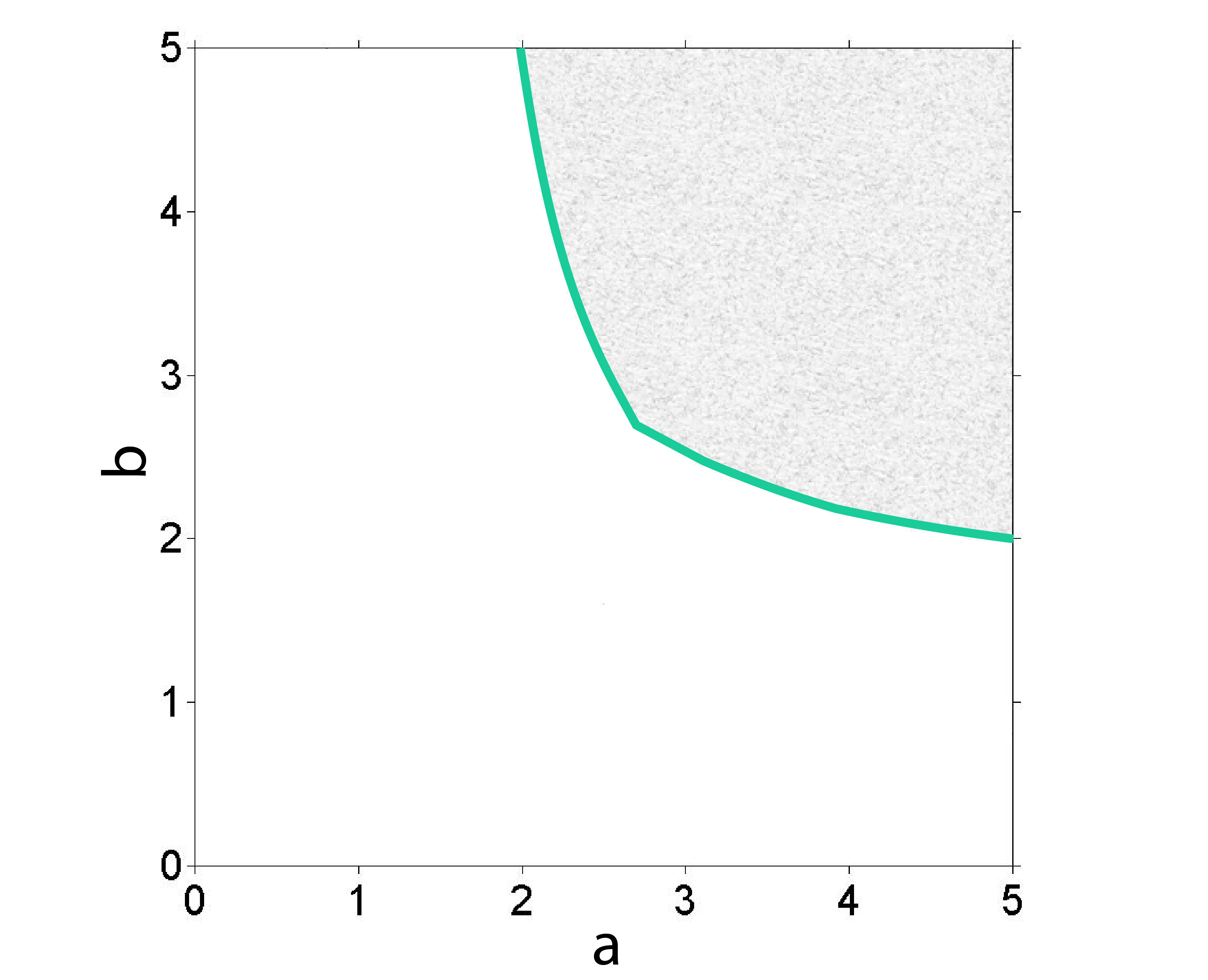}\\
			$d=0.5$\\
			\includegraphics[width=3in]{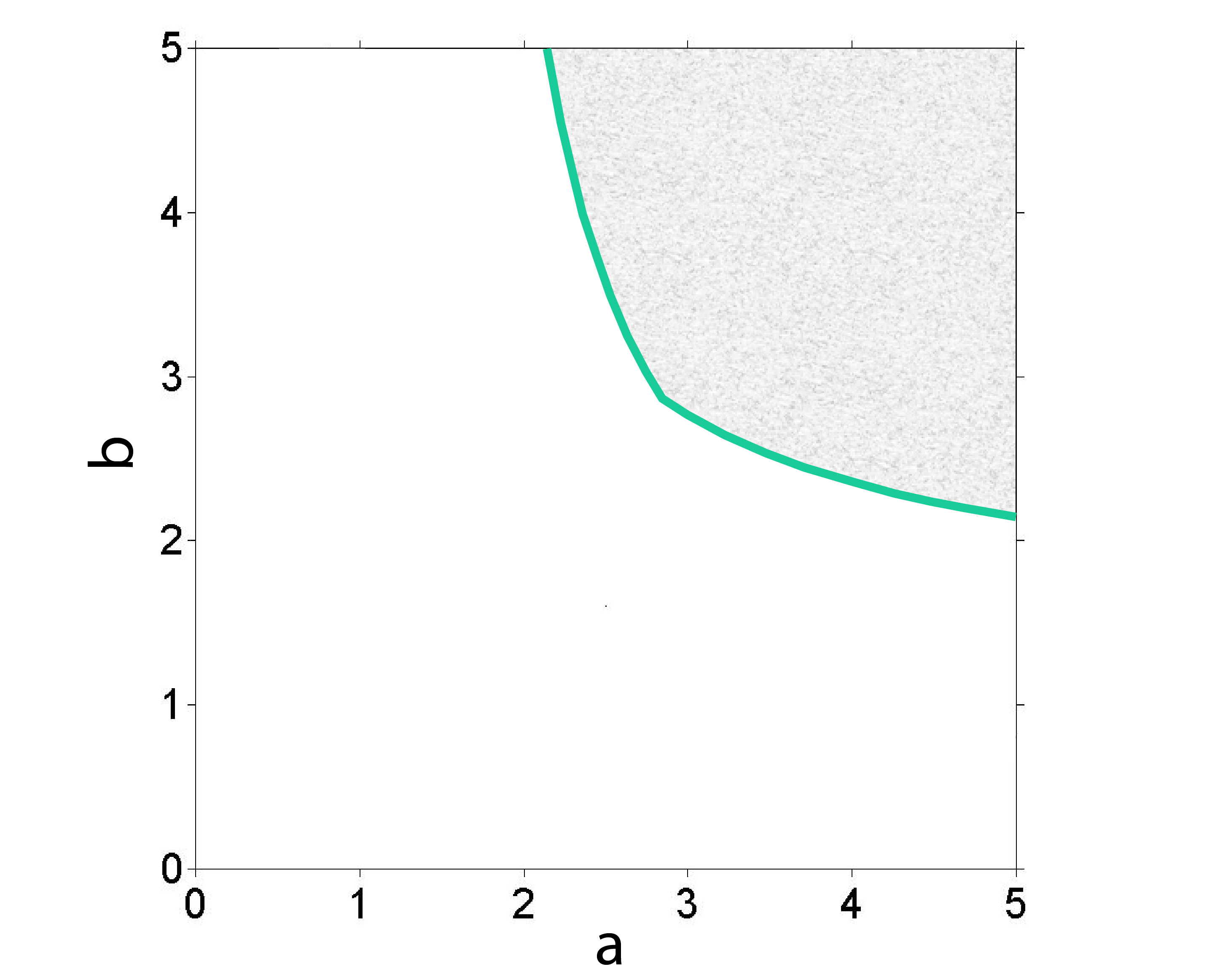}\\
			$d=0.1$
	\end{figure}	
	The range of the parameters $(a,b)$ such that the point-to-point channel capacity is obtained is shown in Fig. \ref{fig:c}. The x-axis and y-axis correspond to the parameters a and b, respectively. For these figures, if we fix $a=1.6$, then the ranges of b is consistent with those in Fig. \ref{fig:c} where both (\ref{eq:cond1})and (\ref{eq:cond2}) are satisfied.

	Fig. \ref{fig:c} also illustrates the impact of the correlation $d$ between the states $S_1$ and $S_2$ on the achievability of channel capacity. It is clear that as $d$ increases, i.e., the two states are more correlated, the range of $(a,b)$ over which the point-to-point channel capacity is achieved gets larger. This confirms the intuition that more correlated states are easier to be fully canceled.
	
		
		\section{Strong Interference Regime}\label{sec:ics}
		In this section, we study the state dependent IC in the strong regime, which excludes the very strong interference regime that has been studied in Section \ref{sec:icvs}. It has been shown in \cite{Sato81} that for the corresponding IC without state, which is strong but not very strong, i.e., the channel parameters satisfy
		\begin{flalign}
		&a\geqslant 1, \ \ \ \ b\geqslant 1,\\\nn
		&\min\{P_1+a^2P_2+1,b^2P_1+P_2+1\}\leqslant(1+P_1)(1+P_2),
		\end{flalign}
		where, without loss of generality, we assume that $P_1+a^2P_2+1 \leqslant b^2P_1+P_2+1$. The capacity region contains rate pairs  $(R_1,R_2)$ satisfying
		Hence, only one sum-rate bound is left, and the capacity region for strong IC without states contain rate pair $(R_1,R_2)$, which was characterized in \cite{Sato81}, satisfying 				
		\begin{flalign}\nn
		&	R_1 \leqslant \frac{1}{2}\log(1+P_1) ,\ \ \ \	R_2 \leqslant \frac{1}{2}\log(1+P_2),\\
		&R_1+R_2 \leqslant \frac{1}{2}\log(P_1+a^2P_2+1).\label{eq:strong_cap}
		\end{flalign}
		
		\vspace{5mm}
		\begin{figure}[H]
			\centering
			\includegraphics[width=4in]{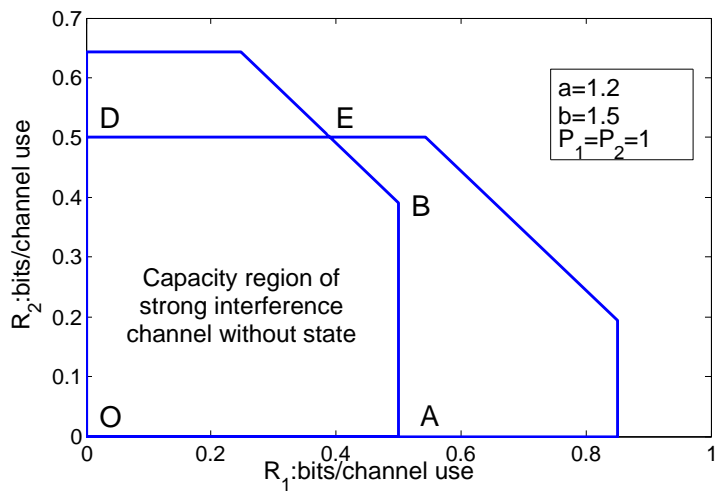}
			\caption{Capacity region of the strong IC without state}\label{fig:sreg}
		\end{figure}
	\vspace{5mm}	
		Such a region of is an intersection of the capacity regimes of two MACs, which is illustrated as the pentagon O-A-B-E-D-O in Fig. \ref{fig:sreg}.
		Our goal here is to study whether the points on sum-rate capacity boundary of the IC without state can be achieved. Such a problem has been studied in \cite{Duan16IT} for the channel with two receivers corrupted by the same but differently scaled state. Here, we generalize such a study to the situation when the two receivers are corrupted by two correlated states. 
		
		Since every point on this line of the sum-rate capacity can be achieved by rate splitting and successive cancellation in the case without state, for the state-dependent channel, we continue to adopt the idea of rate splitting and successive cancellation but using auxiliary random variables to incorporate dirty paper coding to further cancel state successively. More specifically, transmitter 1 splits its message $W_1$ into $W_{11}$ and $W_{12}$, and then encodes them into $U_1$ and $U_2$ respectively based on the Gel'fand-Pinsker binning scheme. Then transmitter 2 encodes its message $W_2$ into \textit{V}, based on the Gel'fand-Pinsker binning scheme.  The auxiliary random variables $U_1$, $U_2$, and $V$ are designed such that decoding of them at receiver 1 successively fully cancels the state corruption of $Y_1$ so that the sum capacity boundary (i.e., the line B-E) can be achieved if only decoding at receiver 1 is considered. Now further incorporating the decoding at receiver 2, if for any point on the line B-E, decoding of $V$ at receiver 2 does not cause further rate constraints, then such a point is achievable for the state-dependent IC.  
		
		\begin{proposition}\label{pps:strong_inner}
			For the state-dependent IC with states noncausally known at both transmitters, an achievable region consists of rate pairs $(R_1,R_2)$ satisfying:
			\begin{equation}
			\begin{aligned}\label{rate:pps_stronginner}
			R_{1}&\leqslant \min\{I(U_1;Y_1), I(U_1;Y_2)\}\\
			&+\min\{I(U_2;VY_1|U_1), I(U_2;VY_2|U_1)\}-I(U_1U_2;S_1)\\
			R_{2}&\leqslant \min\{I(V;Y_1|U_1), I(V;Y_2|U_1)\}-I(V;S_1)
			\end{aligned}
			\end{equation}
			for some distribution $P_{S_1S_2}P_{V|S_1}P_{X_2|VS_1}P_{U_1|S_1}P_{U_2|S_1U_1}P_{X_1|S_1U_1U_2}P_{Y_1|S_1X_1X_2}P_{Y_2|S_2X_2}$, where $U_1$, $U_2$ and $V$ are auxiliary random variables.
		\end{proposition}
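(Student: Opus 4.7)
The plan is to mimic the random-coding argument of Proposition \ref{pps:zstrongDMC} but with both receivers performing full successive decoding of $(U_1,V,U_2)$, which is justified in the strong interference regime. Fix a joint distribution of the form $P_{S_1S_2}P_{V|S_1}P_{X_2|VS_1}P_{U_1|S_1}P_{U_2|S_1U_1}P_{X_1|S_1U_1U_2}P_{Y_1Y_2|S_1S_2X_1X_2}$, split $W_1$ into $(W_{11},W_{12})$ of rates $R_{11}$ and $R_{12}$ with $R_1=R_{11}+R_{12}$, and apply Gel'fand--Pinsker binning simultaneously at both transmitters against the state $S_1$.

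The steps I would carry out are:
(i) \textbf{Codebook generation.} Generate $2^{n(R_{11}+\tilde R_{11})}$ codewords $u_1^n(w_{11},l_{11})$ i.i.d.\ $\sim P_{U_1}$; for each $u_1^n$, generate $2^{n(R_{12}+\tilde R_{12})}$ superposed codewords $u_2^n(w_{11},l_{11},w_{12},l_{12})$ i.i.d.\ $\sim P_{U_2|U_1}$; and generate $2^{n(R_2+\tilde R_2)}$ codewords $v^n(w_2,l_2)$ i.i.d.\ $\sim P_V$.
(ii) \textbf{Encoding.} Transmitter 1, given $s_1^n$ and $(w_{11},w_{12})$, selects $\tilde l_{11}$ so that $(u_1^n,s_1^n)$ is jointly typical, then $\tilde l_{12}$ so that $(u_1^n,u_2^n,s_1^n)$ is jointly typical; standard covering-lemma arguments require $\tilde R_{11}>I(U_1;S_1)$ and $\tilde R_{12}>I(U_2;S_1\mid U_1)$. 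Transmitter 2 similarly selects $\tilde l_2$ with $\tilde R_2>I(V;S_1)$, then generates $x_k^n$ conditionally i.i.d.\ from the prescribed kernels.
(iii) \textbf{Decoding.} Each receiver performs the same three-step successive joint-typicality decoding: first $u_1^n$, then $v^n$ given $u_1^n$, then $u_2^n$ given $(u_1^n,v^n)$. Standard packing-lemma analysis at receiver $j\in\{1,2\}$ yields the reliability constraints
\begin{equation*}
R_{11}+\tilde R_{11}\le I(U_1;Y_j),\quad R_2+\tilde R_2\le I(V;Y_j\mid U_1),\quad R_{12}+\tilde R_{12}\le I(U_2;VY_j\mid U_1).
\end{equation*}

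Eliminating the binning rates $(\tilde R_{11},\tilde R_{12},\tilde R_2)$ by substituting the encoding lower bounds into the decoding upper bounds, taking the tighter of the two receivers in each step, and using the chain rule $I(U_1;S_1)+I(U_2;S_1\mid U_1)=I(U_1U_2;S_1)$ after setting $R_1=R_{11}+R_{12}$, delivers exactly the region \eqref{rate:pps_stronginner}. The main obstacle I expect is the bookkeeping of the packing-lemma error events at receiver 2: because $V$ is binned against $S_1$ rather than against $S_2$, one must verify that the joint-typicality test $(u_1^n,v^n,y_2^n)\in T_\epsilon^n$ still admits the standard error analysis despite the mismatch between the binning variable and the state corrupting $Y_2$; this is handled by conditioning on the correct joint distribution of $(U_1,V,S_1,S_2,Y_2)$ induced by the code construction and invoking strong typicality to absorb the cross-correlation via $S_2$. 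Beyond this, no new ideas are needed and the calculation is essentially parallel to the appendix proof of Proposition \ref{pps:zstrongDMC}.
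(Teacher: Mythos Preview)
Your proposal is correct and follows essentially the same approach as the paper: identical codebook generation (superposition for $U_1,U_2$, independent binning for $V$), identical Gel'fand--Pinsker encoding constraints against $S_1$, and the same three-step successive decoding $(U_1,V,U_2)$ carried out symmetrically at \emph{both} receivers, after which eliminating $(\tilde R_{11},\tilde R_{12},\tilde R_2)$ and using the chain rule gives \eqref{rate:pps_stronginner}. Your remark about the packing-lemma bookkeeping at receiver~2 (where $V$ is binned against $S_1$ while $Y_2$ is corrupted by $S_2$) is a fair point to flag, but the paper treats it exactly as you suggest---by working with the induced joint law and standard strong-typicality arguments---so no additional idea is required.
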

		\begin{remark}
			This scheme can be generalized through further splitting the messages and changing the orders of decoding the messages at the two receivers. The achievable region can then be obtained by taking the convex hull of the union over all achievable regions for different scheme above.
		\end{remark}
		We note that although the above achievable rate region does not explicitly contain $S_2$, in fact $S_2$ implicitly affects the condition \eqref{eq:zscond} via $Y_2$. Furthermore, the correlation between $S_1$ and $S_2$ is also expected to affect the condition \eqref{eq:zscond} via $Y_2$, which is our major interest in the Gaussian case. 
		
		Based on Proposition \ref{pps:strong_inner}, we next characterize partial boundary of the capacity region for the state-dependent Gaussian IC. For the sake of technical convenience, we express the Gaussian model in a different form. In particular, we express $S_2$ as $S_2=cS_1+S_2^\prime$ where $c$ is a constant representing the level of correlation, and $S_1$ is independent from $S_2^\prime$ and $S_2^\prime\sim \mathcal{N}(0, Q_2')$ with $Q_2=c^2Q_1+Q_2^\prime$. Thus, without loss of generality, the channel model can be expressed in the following equivalent form that is more convenient for analysis here.
		\begin{subequations}
			\begin{flalign}
			Y_1&=X_1+ aX_2+S_1+N_1\\
			Y_2&=bX_1+X_2+cS_1+S_2^\prime+N_2.
			\end{flalign}
		\end{subequations}
		We next show that we can design a scheme to achieve the partial boundary of the capacity region for the IC without state. We note that the rate on the sum-capacity boundary can be characterized by
		\begin{equation}\label{rate:capIC}
		\begin{aligned}
		&R_1=\frac{1}{2}\log\left(1+\frac{P_1'}{a^2P_2+P_1''+1}\right) + \frac{1}{2}\log\left(1+P_1''\right),\\
		&R_2=\frac{1}{2}\log\left(1+\frac{a^2P_2}{P_1''+1}\right),
		\end{aligned}
		\end{equation}
		for some $P_1^{\prime}$, $P_1^{\prime\prime}\geqslant 0$, and $P_1^{\prime}+P_1^{\prime\prime}\leqslant P_1$.
		\begin{theorem} \label{thr:SPointCapa}
			Any rate point in \eqref{rate:capIC} can be achieved by the state-dependent IC if the channel parameters satisfy the following conditions
			\begin{equation}\label{cond:thr_stronginner}
			\begin{aligned}
			I(U_1;Y_2)-I(U_1;S_1)&\leqslant\frac{1}{2}\log\left(1+\frac{P_1'}{a^2P_2+P_1''+1}\right)\\
			I(U_2;VY_2|U_1)-I(U_2;S_1|U_1)&\leqslant \frac{1}{2}\log\left(1+P_1''\right)\\
			I(V;Y_2|U_1)-I(V;S_1)&\leqslant \frac{1}{2}\log\left(1+\frac{a^2P_2}{P_1''+1}\right),
			\end{aligned}
			\end{equation}
			where the mutual information terms are calculated by setting $U_1=X_1^\prime+\alpha_1S_1$, $U_2=X_1^{\prime\prime}+\alpha_2S_1$, $V=aX_2+\beta S_1$ and $X_1=X_1^\prime+X_1^{\prime\prime}$ where $X_1^\prime$, $X_1^{\prime\prime}$ and $X_2$ are Gaussian variables with mean zero and variances $P_1^\prime$, $P_1^{\prime\prime}$ and $P_2$, and $\alpha_1$,$\alpha_2$ and $\beta$ are given by
			\begin{equation} \label{eq:z_condition}
			\begin{aligned}
			\alpha_1&=\frac{P_1^\prime}{P_1+a^2P_2+1} \ \ \\\ \alpha_2&=\frac{P_1^{\prime\prime}}{P_1+a^2P_2+1}\\
			\beta&=\frac{a^2P_2}{P_1+a^2P_2+1}.
			\end{aligned}
			\end{equation}
		\end{theorem}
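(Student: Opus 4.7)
The plan is to apply Proposition \ref{pps:strong_inner} with the Gaussian auxiliary random variables $U_1 = X_1' + \alpha_1 S_1$, $U_2 = X_1'' + \alpha_2 S_1$, and $V = aX_2 + \beta S_1$ specified in the statement, where $X_1'$, $X_1''$, $X_2$ are independent zero-mean Gaussians of variances $P_1'$, $P_1''$, $P_2$, and $X_1 = X_1' + X_1''$. The coefficients $\alpha_1, \alpha_2, \beta$ will be chosen so that the three-stage successive decoding at receiver~1 in the order $U_1 \to V \to U_2$ implements a Costa-style dirty paper coding matched to $S_1$ at every stage, thereby fully canceling the state at receiver~1.

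Concretely, I would carry out the dirty paper calibration in three steps. In the first stage, $U_1$ is decoded while $X_1'' + aX_2 + N_1$ acts as effective noise of variance $P_1'' + a^2 P_2 + 1$ alongside the state $S_1$, so Costa's optimal choice is $\alpha_1 = P_1'/(P_1 + a^2 P_2 + 1)$. After the state contribution $\alpha_1 S_1$ is accounted for by $U_1$, the residual state seen by the remaining two stages is $(1-\alpha_1) S_1$, and decoding $V$ against noise $X_1'' + N_1$ of variance $P_1''+1$ forces $\beta/(1-\alpha_1) = a^2 P_2/(a^2 P_2 + P_1'' + 1)$, simplifying to $\beta = a^2 P_2/(P_1 + a^2 P_2 + 1)$. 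Finally, decoding $U_2$ after both $U_1$ and $V$ have been stripped leaves a scalar dirty paper channel with residual state $(1-\alpha_1-\beta) S_1$ and noise $N_1$, yielding $\alpha_2 = P_1''/(P_1 + a^2 P_2 + 1)$, matching \eqref{eq:z_condition}. With these calibrations, the standard Costa identity makes the three receiver-1 contributions in \eqref{rate:pps_stronginner}, namely $I(U_1;Y_1) - I(U_1;S_1)$, $I(U_2; V Y_1 | U_1) - I(U_2; S_1 | U_1)$, and $I(V; Y_1 | U_1) - I(V; S_1)$, evaluate to exactly the three summands appearing in \eqref{rate:capIC}.

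Having matched the receiver-1 contributions to the target, I would then invoke the three inequalities in \eqref{cond:thr_stronginner} to control the corresponding receiver-2 mutual information differences and conclude that each minimum in the achievable rate formula of Proposition \ref{pps:strong_inner} is consistent with the target rate pair \eqref{rate:capIC}. The resulting pair lies on the sum-rate line of the no-state strong IC \eqref{eq:strong_cap}, which remains a valid outer bound for the state-dependent channel, so it lies on the sum-capacity boundary.

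The main obstacle is the middle stage, that is, evaluating $I(U_2; V Y_1 | U_1) - I(U_2; S_1 | U_1)$. Because $V$ and $U_2$ are both built from $S_1$, this is not an immediate single-letter Costa expression; the key observation is that jointly conditioning on $U_1$ and $V$ is equivalent to letting the decoder form the reduced output $Y_1 - U_1 - V = X_1'' + (1 - \alpha_1 - \beta) S_1 + N_1$, after which the stage reduces to a scalar dirty paper channel calibrated by $\alpha_2$. A secondary check is that the independence structure assumed in Proposition \ref{pps:strong_inner}, under which $V$ depends on $S_1$ only and is conditionally independent of $(U_1, U_2)$ given $S_1$, is respected by the proposed construction through the independence of $X_1'$, $X_1''$, $X_2$, and $S_1$.
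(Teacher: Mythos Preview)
Your approach is essentially the same as the paper's: apply Proposition \ref{pps:strong_inner} with the specified Gaussian auxiliaries, calibrate $\alpha_1,\beta,\alpha_2$ by running Costa's dirty paper coding successively at receiver~1 in the order $U_1\to V\to U_2$ on the reduced channels $Y_1$, $Y_1-U_1$, $Y_1-U_1-V$ (the paper writes exactly the same intermediate relations $\alpha_1=P_1'/(P_1+a^2P_2+1)$, $\beta/(1-\alpha_1)=a^2P_2/(a^2P_2+P_1''+1)$, $\alpha_2/(1-\alpha_1)=P_1''/(P_1''+1)$), and then invoke \eqref{cond:thr_stronginner} so that the receiver-2 terms in each $\min$ do not tighten the constraint beyond the receiver-1 values, which already equal the target rates in \eqref{rate:capIC}. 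Your remark that the middle stage reduces to a scalar dirty paper channel via $Y_1-U_1-V$ is precisely the mechanism the paper uses.
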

		\begin{proof}
			Theorem \ref{thr:SPointCapa} follows from Proposition \ref{pps:strong_inner} by choosing the auxiliary random variables $U_1$, $U_2$ and $V$ as in the statement of the theorem. In particular, $U_1$ is first decoded by receiver 1, and is designed to cancel the state in $Y_1$ treating all other variables as noise. Then, $V$ is decoded by receiver 1, and is designed to cancel the state in $Y_1^{\prime}=Y_1-U_1=X_1''+aX_2+(c-\alpha_1)S_1+N_1$. Finally, $U_2$ is designed to cancel the state in $Y_1''=Y_1'-V=X_1''+(c-\alpha_1-\beta)S_1+N_1$. In order to satisfy the state cancellation requirements, $\alpha_1$, $\alpha_2$ and $\beta$ should satisfy
			\begin{flalign}
			&\alpha_1=\frac{P_1'}{P_1+a^2P_2+1}, \label{eq:sdirty_con1}\quad \quad\\
			&				\frac{\alpha_2}{1-\alpha_1}=\frac{P_1^{\prime\prime}}{P_1^{\prime\prime}+1}, \label{eq:sdirty_con2}\\
			&				\frac{\beta}{1-\alpha_1}=\frac{a^2P_2}{P_1^{\prime\prime}+a^2P_2+1},\label{eq:sdirty_con3}
			\end{flalign}
			which yields \eqref{eq:z_condition}. Substituting these choices of the random variables and the coefficients into Proposition \ref{pps:strong_inner},  \eqref{rate:pps_stronginner} becomes
			
			\begin{equation}
			\begin{aligned}
			R_{1}&\leqslant \min\left\{ I(U_1;Y_2)-I(U_1;S_1),\frac{1}{2}\log\left(1+\frac{P_1'}{a^2P_2+P_1''+1}\right)\right\}\\
			&+\min\left\{I(U_2;VY_2|U_1)-I(U_2;S_1|U_1), \frac{1}{2}\log\left(1+P_1''\right)\right\}\\
			R_{2}&\leqslant \min\left\{I(V;Y_2|U_1)-I(V;S_1), \frac{1}{2}\log\left(1+\frac{a^2P_2}{P_1''+1}\right)\right\}.
			\end{aligned}
			\end{equation}
			Hence, if the condition \eqref{cond:thr_stronginner} is satisfied, the points on the sum capacity boundary \eqref{rate:capIC} can be achieved.
		\end{proof}

			\begin{figure}[H]
				\centering
				\includegraphics[height=2.5in,width=4in]{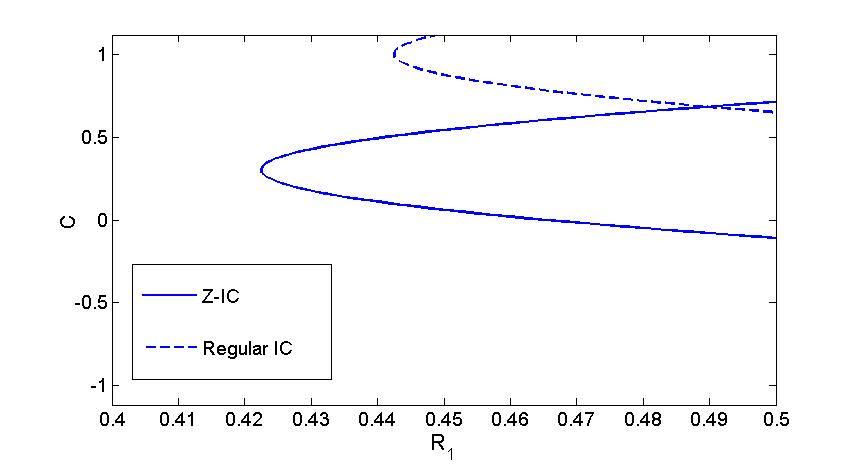}
				\caption{Ranges of $c$ under which points on sum capacity boundary of the strong IC and Z-IC without state can be achieved by the state-dependent IC and Z-IC.}\label{fig:sac}
			\end{figure}
			
			
		 In Fig.~\ref{fig:sac}, we plot the ranges of $c$ under which points, parameterized by $R_1$ on the sum capacity boundary of the strong Z-IC without state, can be achieved by the state-dependent Z-IC following Theorem \ref{thr:ZSPointCapa}. It can be seen that as correlation between the two states (represented by $c$) increases, initially more points on the sum capacity boundary are achieved and then  less points are achieved as $c$ is above a certain threshold. Thus, higher correlation does not guarantee more capability of achieving the sum capacity boundary. This is because in our scheme both $U_i$ and $V$ are specially designed for $Y_1$ based on dirty paper coding. At receiver 2, such design of $V$ initially approximates better the dirty paper coding design for $Y_2$ as $c$ becomes large, but then becomes worse as $c$ continues to increase, and hence decoding of $V$ at receiver 2 initially gets better and then becomes less capable, which consequently determines variation of achievability of the sum capacity boundary. 
			
			Fig. \ref{fig:scapa} also plots the same parameter range for the state-dependent IC as characterized by Theorem \ref{thr:SPointCapa}. It is clear that the state-dependent IC achieves a smaller line segment on the sum-capacity (i.e., smaller range of $R_1$). This is reasonable, because Theorem \ref{thr:SPointCapa} for the IC requires more conditions than Theorem \ref{thr:ZSPointCapa} for the Z-IC. Fig. \ref{fig:scapa} also demonstrates that large value of c(i.e., higher correlation between the states) is required for the IC to achieve the sum capacity than the Z-IC. This is because the dirty paper coding is designed with respect to receiver 1. High correlation between states helps such design to be more effective to cancel that states at receiver 2 as well.
			
			\section{Weak Interference Regime}\label{sec:icweak}
			In this section, we study the state-dependent IC and ZIC in the weak interference regime. The channel parameters for the IC in this regime satisfy  $|a(1+b^2P_1)|+|b(1+a^2P_2)|\leqslant1$, which reduces to $a\leqslant1$ for the Z-IC. It has been shown in \cite{Shang09,Anna09,Mota09}, for the weak IC without state and in \cite{Sason04} for the weak Z-IC that the sum capacity can be achieved by treating interference as noise. It was further shown in \cite{Duan16IT} that for the IC and Z-IC with the same but differently scaled state at two receivers, independent dirty paper coding at the two transmitters to cancel the states treating interference as noise achieves the same sum capacity. We node that such a scheme is also achievable in our model, which thus yields the following Corollary.
			
			\begin{corollary}(A direct result following \cite{Duan16IT})\label{thr:weak_IC_ZIC}
				For the state-dependent IC with states noncausally known at both transmitters, if $|a(1+b^2P_1)|+|b(1+a^2P_2)|\leqslant1$,  the sum capacity is given by
				\begin{flalign}\nn
				C_{sum} = \frac{1}{2}\log\left(1+\frac{P_1}{a^2P_2+1}\right)+ \frac{1}{2}\log\left(1+\frac{P_2}{b^2P_1+1}\right).
				\end{flalign}
			\end{corollary}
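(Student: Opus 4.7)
The plan is to prove the corollary by sandwiching the sum capacity between a matching achievable lower bound and a genie-aided upper bound. For achievability, I would have each transmitter perform Costa's dirty paper coding independently against the state that corrupts its corresponding receiver: transmitter $1$ encodes $W_1$ using DPC based on its noncausal knowledge of $S_1^n$, treating the interference $aX_2 + N_1$ as effective Gaussian noise of variance $a^2 P_2 + 1$, and symmetrically transmitter $2$ performs DPC against $S_2^n$ treating $bX_1 + N_2$ as noise of variance $b^2 P_1 + 1$. Because DPC achieves the state-free rate, receiver $1$ can decode $W_1$ at rate $\tfrac{1}{2}\log(1+\tfrac{P_1}{a^2P_2+1})$ and receiver $2$ can decode $W_2$ at rate $\tfrac{1}{2}\log(1+\tfrac{P_2}{b^2P_1+1})$, which already sum to the claimed $C_{\text{sum}}$. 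Note that each transmitter exploits only its own corresponding state; the correlation between $S_1$ and $S_2$ plays no role in the coding scheme.

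For the converse, I would invoke a simple genie argument: provide both state sequences $S_1^n$ and $S_2^n$ as side information to both receivers. Additional information at the decoders can only enlarge the capacity region, so any achievable sum rate for the original state-dependent channel is also achievable in this enhanced model. However, once the receivers know the states, they can subtract $S_1$ and $S_2$ from $Y_1$ and $Y_2$ respectively, reducing the problem to the two-user state-free Gaussian IC with the same channel gains $(a,b)$ and the same power constraints $(P_1,P_2)$. Under the condition $|a(1+b^2P_1)|+|b(1+a^2P_2)|\leqslant 1$, the sum capacity of the weak Gaussian IC is established in \cite{Shang09,Anna09,Mota09} to equal exactly $\tfrac{1}{2}\log(1+\tfrac{P_1}{a^2P_2+1})+\tfrac{1}{2}\log(1+\tfrac{P_2}{b^2P_1+1})$, attained by treating interference as Gaussian noise. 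This provides the matching upper bound on $C_{\text{sum}}$.

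There is no real technical obstacle here since both ingredients are standard; the only conceptual point worth emphasizing is that the correlation coefficient $\rho$ (or equivalently the parameter $c$ linking $S_2$ to $S_1$) drops out entirely. A priori, one might hope that joint knowledge of $(S_1,S_2)$ at the transmitters could be leveraged to outperform independent DPC, but the genie converse rules this out: since the state-free upper bound does not depend on $\rho$, no cooperative state-canceling strategy can beat the independent DPC scheme in this regime. Consequently the argument reduces verbatim to the one given in \cite{Duan16IT} for the same-but-differently-scaled-state case, justifying the claim that this corollary is a direct consequence of prior work.
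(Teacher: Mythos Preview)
Your proposal is correct and follows essentially the same approach as the paper: independent dirty paper coding at each transmitter against its own state while treating interference as noise for achievability, and the state-free weak Gaussian IC sum capacity from \cite{Shang09,Anna09,Mota09} (obtained via the genie that reveals the states to the receivers) for the converse. The paper presents this even more tersely by citing \cite{Duan16IT} directly, but your explicit genie argument and observation that the correlation parameter drops out are exactly the content behind that citation.
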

			
			It can be seen that the sum capacity achieving scheme does not depend on the correlation of the states, and hence, in the weak regime, the sum capacity is not affected by the correlation of the states.
			
\newpage
			\section{Technical Proofs}
			\subsection{Proof of Proposition \ref{pps:IC inner}}\label{apx:IC inner}
			We use random codes and fix the following joint distribution:
			$$P_{S_1S_2UVX_1X_2Y_1Y_2}=P_{S_1S_2}P_{U|S_1S_2}P_{X_1|US_1S_2}P_{V|S_1S_2}P_{X_2|VS_1S_2}P_{Y_1Y_2|S_1S_2X_1X_2}$$

			\begin{enumerate}
				\item Codebook Generation:
				\begin{itemize}
					\item Generate $2^{n(R_1+\tR_1)}$ codewords $U^n(w_1,l_1)$ with i.i.d.\ components based on $P_U$. Index these codewords by $w_1=1, \ldots, 2^{nR_1}, l_1 = 1, \ldots, 2^{n\tR_1}$.
					\item Generate $2^{n(R_2+\tR_2)}$ codewords $V^n(w_2,l_2)$ with i.i.d.\ components based on $P_V$. Index these codewords by $w_2=1, \ldots, 2^{nR_2}, l_2 = 1, \ldots, 2^{n\tR_2}$.
				\end{itemize}
				\item Encoding:
				\begin{itemize}	
					\item Transmitter 1: Given $(s^n_1,s^n_2)$ and $w_1$, choose a $u^n(w_1,\tl_1)$ such that $$(u^n(w_1,\tl_1),s_1^n,s_2^n) \in T^n_\epsilon(P_{S_1S_2U})$$ Otherwise, set $\tl_1=1$. It can be shown that for large n, such $u^n$ exists with high probability if 
					\begin{equation}
					\tR_1>I(U;S_1S_2)。 \label{eq:pps1}
					\end{equation}
					Then Generate $x^n_1$ with i.i.d. component based on $P_{X_1|US_1S_2}$ for transmission.					
					\item Transmitter 2: Given $(s^n_1,s^n_2)$ and $w_2$, choose a $v^n(w_2,\tl_2)$ such that $$(v^n(w_2,\tl_2),s_1^n,s_2^n) \in T^n_\epsilon(P_{S_1S_2V})$$
					Otherwise, set $\tl_2=1$. It can be shown that for large n, such $v^n$ exists with high probability if 	
					\begin{equation}
					\tR_2>I(V;S_1S_2).
					\end{equation}
					Then Generate $x^n_2$ with i.i.d. component based on $P_{X_2|US_1S_2}$ for transmission
				\end{itemize}

				\item Decoding:
				\begin{itemize}
					\item Decoder 1: Given $y^n_1$, find $(\hw_2, \hl_2)$ such that $$(v^n(w_2,\hl_2),y^n_1) \in T^n_\epsilon(P_{VY_1}).$$  If no or more than one such pair $(\hw_2, \hl_2)$ can be found, declare an error. It is easy to show that for sufficiently large n, we can correctly find such pair with high probability if 
					\begin{equation}
					R_2+\tR_2\leqslant I(V;Y_1).
					\end{equation}
					After decoding $v^n$, find the unique pair $(\hw_1, \hl_1)$ such that $$(u^n(\hw_1, \hl_1),v^n(w_2,\hl_2),y^n_1) \in T^n_\epsilon(P_{VUY_1})$$
					If no or more than one such pairs can be found, declare an error.  It is easy to show that for sufficiently large n, we can correctly find such pair with high probability if 
					\begin{equation}
					R_1+\tR_1\leqslant I(U;VY_1)
					\end{equation}
					
					\item Decoder 2: Given $y^n_2$, find $(\hw_1, \hl_1)$ such that $$(v^n(w_1,\hl_1),y^n_2) \in T^n_\epsilon(P_{UY_2}).$$  If no or more than one such pair $(\hw_1, \hl_1)$ can be found, declare an error. It is easy to show that for sufficiently large n, we can correctly find such pair with high probability if 
					\begin{equation}
					R_1+\tR_1\leqslant I(U;Y_2)
					\end{equation}
					After decoding $u^n$, find the unique pair $(\hw_2, \hl_2)$ such that $$(v^n(\hw_2, \hl_2),u^n(w_1,\hl_1),y^n_2) \in T^n_\epsilon(P_{VUY_2})$$
					If no or more than one such pairs can be found, declare an error.  It is easy to show that for sufficiently large n, we can correctly find such pair with high probability if 
					\begin{equation}
					R_2+\tR_2\leqslant I(V;UY_2)\label{eq:pps2}
					\end{equation}
				\end{itemize}
			\end{enumerate}
			Proposition \ref{pps:IC inner} is thus proved by combining \eqref{eq:pps1}-\eqref{eq:pps2}

			\subsection{Proof of Proposition \ref{pps:strong_inner}}\label{apx:strong_inner}
			We use random codes and fix the following joint distribution:
			$$P_{S_1S_2U_1U_2VX_1X_2Y_1Y_2}=P_{S_1S_2}P_{V|S_1}P_{X_2|VS_1}P_{U_1|S_1}P_{U_2|S_1U_1}P_{X_1|U_1U_2S_1}P_{Y_1|S_1X_1X_2}P_{Y_2|S_2X_2}$$
			
			\begin{enumerate}
				\item Codebook Generation:
				\begin{itemize}
					\item Generate $2^{n(R_{11}+\tR_{11})}$ codewords $U_1^n(w_{11},l_{11})$ with i.i.d.\ components based on $P_{U_1}$. Index these codewords by $w_{11}=1, \ldots, 2^{nR_{11}}, l_{11} = 1, \ldots, 2^{n\tR_{11}}$.
					\item For each $u_1^n(w_{11},l_{11})$, generate $2^{n(R_{12}+\tR_{12})}$ codewords $U_2^n(w_{11},l_{11},w_{12},l_{12})$ with i.i.d.\ components based on $P_{U_2|U_1}$. Index these codewords by $w_{12}=1, \ldots, 2^{nR_{12}}, l_{12} = 1, \ldots, 2^{n\tR_{12}}$.
					\item Generate $2^{n(R_2+\tR_2)}$ codewords $V^n(w_2,l_2)$ with i.i.d.\ components based on $P_V$. Index these codewords by $w_2=1, \ldots, 2^{nR_2}, v = 1, \ldots, 2^{n\tR_2}$.
				\end{itemize}
				\item Encoding:
				\begin{itemize}
					\item Transmitter 1: Given $s^n_1$ and $w_{11}$, choose a $u_1^n(w_{11},\tl_{11})$ such that $$(u^n(w_{11},\tl_{11}),s_1^n) \in T^n_\epsilon(P_{S_1U_{11}}).$$
					Otherwise, set $\tl_{11}=1$. It can be shown that for large n, such $u_1^n$ exists with high probability if 
					\begin{equation}
					\tR_{11}>I(U_1;S_1). \label{eq:pps4-1}
					\end{equation}
					Given $w_{11}$, $\tl_{11}$, $w_{12}$ and $s_1^n$, choose a $u_2^n(w_{11},\tl_{11},w_{12},\tl_{12})$ such that 
					$$(u_1^n(w_{11},\tl_{11}),u_2^n(w_{11},\tl_{11},w_{12},\tl_{12}),s_1^n) \in T^n_\epsilon(P_{S_1U_1U_2})$$
					Otherwise, set $\tl_{12}=1$. It can be shown that for large $n$, such $u_2^n$ exists with high probability if
					\begin{equation}
					\tR_{12}>I(U_2;S_1|U_1). \label{eq:pps4-2}
					\end{equation}
					Given $u_1^n(w_{11},\tl_{11})$, $u_2^n(w_{11},\tl_{11},w_{12},\tl_{12})$ and $s_1^n$, generate $x_1^n$ with i.i.d. components based on $P_{X_1|S_1U_1U_2}$
					
					\item Transmitter 2: Given $s^n_1$ and $w_2$, choose a $v^n(w_2,\tl_2)$ such that $$(v^n(w_2,\tl_2),s_1^n) \in T^n_\epsilon(P_{S_1V})$$
					Otherwise, set $\tl_2=1$. It can be shown that for large n, such $v^n$ exists with high probability if 	  
					\begin{equation}
					\tR_2>I(V;S_1).
					\end{equation}
					Then Generate $x^n_2$ with i.i.d. component based on $P_{X_2|VS_1}$ for transmission
				\end{itemize}
				
				\item Decoding:
				\begin{itemize}
					\item Decoder 1: Given $y^n_1$, find $(\hw_{11}, \hl_{11})$ such that $$(u_1^n(\hw_{11}, \hl_{11}),y^n_1) \in T^n_\epsilon(P_{U_1Y_1}).$$  If no or more than one such pair $(\hw_{11}, \hl_{11})$ can be found, declare an error. It is easy to show that for sufficiently large n, we can correctly find such pair with high probability if 
					\begin{equation}
					R_{11}+\tR_{11}\leqslant I(U_1;Y_1).
					\end{equation}
					After decoding $u_1^n$, find the unique pair $(\hw_{2}, \hl_{2})$ such that $$(u_1^n(\hw_{11}, \hl_{11}),v^n(w_2,\hl_2),y^n_1) \in T^n_\epsilon(P_{VU_1Y_1})$$
					If no or more than one such pairs can be found, declare an error.  It is easy to show that for sufficiently large n, we can correctly find such pair with high probability if 
					\begin{equation}
					R_2+\tR_2\leqslant I(V;Y_1|U_1)
					\end{equation}
					
					After successively decoding $v^n$, find the unique tuple $(w_{11},\tl_{11},w_{12},\tl_{12})$ such that
					$$((u_1^n(\hw_{11}, \hl_{11}),v^n(w_2,\hl_2),u_2^n(w_{11},\tl_{11},w_{12},\tl_{12}),y^n_1) \in T^n_\epsilon(P_{VU_1U_2Y_1}))$$
					If no or more than one such pairs can be found, declare an error.  It is easy to show that for sufficiently large n, we can correctly find such pair with high probability if 
					\begin{equation}
					R_{12}+\tR_{12}\leqslant I(U_2;VY_1|U_1)
					\end{equation}

					\item Decoder 2:  Given $y^n_2$, find $(\hw_{11}, \hl_{11})$ such that $$(u_1^n(\hw_{11}, \hl_{11}),y^n_2) \in T^n_\epsilon(P_{U_1Y_1}).$$  If no or more than one such pair $(\hw_{11}, \hl_{11})$ can be found, declare an error. It is easy to show that for sufficiently large n, we can correctly find such pair with high probability if 
					\begin{equation}
					R_{11}+\tR_{11}\leqslant I(U_1;Y_2).
					\end{equation}
					After decoding $u_1^n$, find the unique pair $(\hw_{2}, \hl_{2})$ such that $$(u_1^n(\hw_{11}, \hl_{11}),v^n(w_2,\hl_2),y^n_2) \in T^n_\epsilon(P_{VU_1Y_2})$$
					If no or more than one such pairs can be found, declare an error.  It is easy to show that for sufficiently large n, we can correctly find such pair with high probability if 
					\begin{equation}
					R_2+\tR_2\leqslant I(V;Y_2|U_1)
					\end{equation}
					
					After successively decoding $v^n$, find the unique tuple $(w_{11},\tl_{11},w_{12},\tl_{12})$ such that
					$$((u_1^n(\hw_{11}, \hl_{11}),v^n(w_2,\hl_2),u_2^n(w_{11},\tl_{11},w_{12},\tl_{12}),y^n_1) \in T^n_\epsilon(P_{VU_1U_2Y_2}))$$
					If no or more than one such pairs can be found, declare an error.  It is easy to show that for sufficiently large n, we can correctly find such pair with high probability if 
					\begin{equation}
					R_{12}+\tR_{12}\leqslant I(U_2;VY_2|U_1)
					\end{equation}
				\end{itemize}
				The corresponding achievable region is thus characterized by
				\begin{flalign}
				R_{11}&\leqslant \min\{I(U_1;Y_1), I(U_1;Y_2)\}-I(U_1;S_1)\label{eq:pps4-3}\\
				R_{12}&\leqslant \min\{I(U_2;VY_1|U_1), I(U_2;VY_2|U_1)\}-I(U_2;S_1|U_1)\label{eq:pps4-4}\\
				R_{2}&\leqslant \min\{I(V;Y_1|U_1), I(V;Y_2|U_1)\}-I(V;S_1)\label{eq:pps4-5}
				\end{flalign}
			\end{enumerate}
			Proposition \ref{pps:strong_inner} is thus proved by combining \cref{eq:pps4-1,eq:pps4-2} with \cref{eq:pps4-3,eq:pps4-4,eq:pps4-5}.
			
	\chapter{Conclusions and Future Work}\label{sec:conclusion}
	
	In this chapter, we first summarize our results in this dissertation, and then discuss some future directions.

	\section{Summary of Dissertation}
	In Chapter \ref{cha:helper}, we studied the state-dependent MAC with a helper. Our achievable scheme is based on integration of state subtraction and single-bin dirty paper coding. By analyzing the corresponding lower bound on the capacity, and comparing to the upper bounds, we characterized the capacity for various channel parameters. We anticipate that our way of analyzing the lower bound and characterizing the capacity can be applied to characterizing the capacity for other state-dependent networks. We further point out closely related problems of state masking \cite{Merhav07}, state amplification \cite{Kim08}, assisted interference suppression \cite{Grover10,Chou12}, which have a similar goal of minimizing the impact of the state on the output. It will be interesting to explore if the understanding here can shed any insight on these problems.		
	
	In Chapter \ref{chap:z2state}, we studied the state-dependent Gaussian Z-IC with receivers corrupted by two {\em correlated} states which are noncausally known at transmitters. We characterized the conditions on the channel parameters under which the state-dependent Z-IC achieves the capacity region or sum-capacity of the corresponding channel without state for the very strong regime, strong regime and weak regime.
	
	
	In Chapter \ref{cha:2state}, we studied the state-dependent Gaussian IC with receivers corrupted by two {\em correlated} states which are noncausally known at transmitters. We developed a new scheme that can simultaneously cancel the two states and interference, and analyze the impact of correlation between states on the achievability of the capacity region. Our comparison between the IC and the Z-IC suggests that the IC benefits more if the correlation between the states increases. We anticipate that the state cancellation schemes we developed here can be useful for studying other state-dependent models.
	
	\section{Future Works}

	In this dissertation, we characterized the capacity region for two-types of state-dependent models under various channel parameters. However, there are still channel parameters under which the channel capacity is still unknown. For the state-dependent MAC channel with a helper, the full capacity region remains unknown due to the lack of tighter inner and/or outer bounds. In the recent work \cite{Yang17}, a new outer bound is introduced which results in an improvement on the capacity characterization for this channel. Such a technique can be useful to characterize the full capacity of the state-dependent MAC channel with a helper model. For the state-dependent IC with correlated states noncausally known at the transmitters, we studied only the cases, in which both interferences are strong or weak. It is interesting to study the cases in which one interference is strong and one is weak.
	
	As mentioned in Chapter \ref{chap:Introduction}, there are two primary goals of information theory. The first is the development of the fundamental theoretical limits on the achievable performance when communicating a given information source over a given communications channel using coding schemes from a prescribed class. The second goal is the development of practical coding schemes, e.g. structured encoder(s) and decoder(s), that provide performance reasonably good in comparison with the optimal performance given by the theory. As our theoretical results on the capacity of these models fulfill the first goal for the channels of interest, the development of practical coding schemes becomes a natural direction to fulfill the second goal. 
	
	The lattice coding is an attractive practical coding scheme to deal with state. The proof in \cite{Erez2005} showed that the capacity loss due to lattice coding is limited by its "shaping gain", which is very small. Thus, it would be interesting as a future direction to apply the lattice coding strategy to our models to develop more practical coding schemes.


\bibliographystyle{IEEEtran}
\bibliography{Thesis_Yunhao}

\begin{thebibliography}{10}
\providecommand{\url}[1]{#1}
\csname url@samestyle\endcsname
\providecommand{\newblock}{\relax}
\providecommand{\bibinfo}[2]{#2}
\providecommand{\BIBentrySTDinterwordspacing}{\spaceskip=0pt\relax}
\providecommand{\BIBentryALTinterwordstretchfactor}{4}
\providecommand{\BIBentryALTinterwordspacing}{\spaceskip=\fontdimen2\font plus
\BIBentryALTinterwordstretchfactor\fontdimen3\font minus
  \fontdimen4\font\relax}
\providecommand{\BIBforeignlanguage}[2]{{%
\expandafter\ifx\csname l@#1\endcsname\relax
\typeout{** WARNING: IEEEtran.bst: No hyphenation pattern has been}%
\typeout{** loaded for the language `#1'. Using the pattern for}%
\typeout{** the default language instead.}%
\else
\language=\csname l@#1\endcsname
\fi
#2}}
\providecommand{\BIBdecl}{\relax}
\BIBdecl

\bibitem{Shannon1948}
C.~E. Shannon, ``A mathematical theory of communication,'' \emph{The Bell
  System Technical Journal}, vol.~27, pp. 379--423, 623--656, 1948.

\bibitem{Shannon59}
------, ``{Coding theorems for a discrete source with a fidelity criterion},''
  in \emph{IRE Nat. Conv. Rec., Pt. 4}, 1959, pp. 142--163.

\bibitem{Csiszar98}
\BIBentryALTinterwordspacing
I.~Csisz{\'{a}}r, ``The method of types,'' \emph{{IEEE} Trans. Information
  Theory}, vol.~44, no.~6, pp. 2505--2523, 1998. [Online]. Available:
  \url{https://doi.org/10.1109/18.720546}
\BIBentrySTDinterwordspacing

\bibitem{Mcmillan53}
\BIBentryALTinterwordspacing
B.~McMillan, ``The basic theorems of information theory,'' \emph{The Annals of
  Mathematical Statistics}, vol.~24, no.~2, pp. 196--219, 1953. [Online].
  Available: \url{http://www.jstor.org/stable/2236328}
\BIBentrySTDinterwordspacing

\bibitem{breiman1957}
\BIBentryALTinterwordspacing
L.~Breiman, ``The individual ergodic theorem of information theory,''
  \emph{Ann. Math. Statist.}, vol.~28, no.~3, pp. 809--811, 1957. [Online].
  Available: \url{http://dx.doi.org/10.1214/aoms/1177706899}
\BIBentrySTDinterwordspacing

\bibitem{Gray1990}
R.~M. Gray, \emph{Entropy and Information Theory}.\hskip 1em plus 0.5em minus
  0.4em\relax Springer-Verlag, Inc. New York, NY, USA, 1990.

\bibitem{Han81}
T.~S. Han and K.~Kobayashi, ``A new achievable rate region for the interference
  channel,'' \emph{IEEE Trans. Inform. Theory}, vol.~27, no.~1, pp. 49--60,
  January 1981.

\bibitem{Saito13}
Y.~Saito, A.~Benjebbour, Y.~Kishiyama, and T.~Nakamura, ``System-level
  performance evaluation of downlink non-orthogonal multiple access {(NOMA)},''
  in \emph{Proc. IEEE Intl. Symp. Personal, Indoor and Mobile Radio
  Communications (PIMRC)}, London, U.K., Sep. 2013.

\bibitem{Li14}
Q.~Li, H.~Niu, A.~Papathanassiou, and G.~Wu, ``5{G} network capacity: {K}ey
  elements and technologies,'' \emph{IEEE Veh. Technol. Mag.}, vol.~9, pp.
  71--78, 2014.

\bibitem{Xu15}
P.~Xu, Z.~Ding, X.~Dai, and H.~V. Poor, ``{NOMA}: {A}n information theoretic
  perspective,'' arXiv http://arxiv.org/abs/1504.07751, 2015.

\bibitem{Mitola:IPC:99}
J.~Mitola, ``Cognitive radio: {M}aking software radios more personal,''
  \emph{IEEE Personal Communications}, vol.~6, pp. 13--18, Aug. 1999.

\bibitem{Haykin:JSAC:05}
S.~Haykin, ``Cognitive radio: {B}rain-empowered wireless communications,''
  \emph{IEEE Journal on Selected Areas in Communications}, vol.~23, pp.
  201--220, Feb. 2005.

\bibitem{Akyi06}
I.~F. Akyildiz, W.-Y. Lee, M.~C. Vuran, and S.~Mohanty, ``Next
  generation/dynamics spectrum sccess/cognitive radio wireless networks: {A}
  survey,'' \emph{Computer Networks}, vol.~50, pp. 2127--2159, 2006.

\bibitem{Ashton09}
K.~Ashton, ``That 'internet of things' thing,'' \emph{RFiD Journal}, 2009.

\bibitem{Atzori10}
L.~Atzori, A.~Iera, and G.~Morabiton, ``The internet of things: {A} survey,''
  \emph{Computer Networks}, vol.~54, pp. 2787--2805, 2010.

\bibitem{Gubbi13}
J.~Gubbi, R.~Buyya, S.~Marusic, and M.~Palaniswami, ``Internet of things
  {(IoT)}: {A} vision, architectural elements, and future directions,''
  \emph{Future Generation Computer Systems}, vol.~29, pp. 1645--1660, 2013.

\bibitem{Shannon1958}
\BIBentryALTinterwordspacing
C.~E. Shannon, ``Channels with side information at the transmitter,'' \emph{IBM
  J. Res. Dev.}, vol.~2, no.~4, pp. 289--293, Oct. 1958. [Online]. Available:
  \url{http://dx.doi.org/10.1147/rd.24.0289}
\BIBentrySTDinterwordspacing

\bibitem{Gelf80}
S.~Gel'fand and M.~Pinsker, ``Coding for channels with ramdom parameters,''
  \emph{Probl. Contr. Inf. Theory}, vol.~9, no.~1, pp. 19--31, January 1980.

\bibitem{Costa83}
M.~H.~M. Costa, ``Writing on dirty paper,'' \emph{IEEE Trans. Inform. Theory},
  vol.~29, no.~3, pp. 439--441, May 1983.

\bibitem{Kim05}
S.~Sigurj\'{o}nsson and Y.-H. Kim, ``On multiple user channels with state
  information at the transmitters,'' in \emph{Proc.\ IEEE International
  Symposium on Information Theory (ISIT)}, Adelaide, Australia, Sep. 2005.

\bibitem{Lapidoth13_a}
A.~Lapidoth and Y.~Steinberg, ``The multiple-access channel with causal side
  information: Common state,'' \emph{IEEE Trans. Inform. Theory}, vol.~59,
  no.~1, pp. 32--50, Jan. 2013.

\bibitem{Zaidi13}
A.~Zaidi, P.~Piantanida, and S.~{Shamai (Shitz)}, ``Capacity region of
  cooperative multiple access channel with states,'' \emph{IEEE Trans. Inform.
  Theory}, vol.~59, no.~10, pp. 6153--6174, October 2013.

\bibitem{Lapidoth13_b}
A.~Lapidoth and Y.~Steinberg, ``The multiple-access channel with causal side
  information: Double state,'' \emph{IEEE Trans. Inform. Theory}, vol.~59,
  no.~3, pp. 1379--1393, Mar. 2013.

\bibitem{Li13}
M.~Li, O.~Simeone, and A.~Yener, ``Multiple access channels with state causally
  known at transmitters,'' \emph{IEEE Trans. Inform. Theory}, vol.~59, no.~3,
  pp. 1394--1404, Mar. 2013.

\bibitem{Steinberg05_2}
Y.~Steinberg and S.~{Shamai (Shitz)}, ``Achievable rates for the broadcast
  channel with states known at the transmitter,'' in \emph{Proc. IEEE Int.
  Symp. Information Theory (ISIT)}, Adelaide, Australia, July 2005.

\bibitem{Khos09}
R.~Khosravi-Farsani, B.~Akhbari, M.~Mirmohseni, and M.~Aref, ``Cooperative
  relay-broadcast channels with causal channel state information,'' in
  \emph{Proc.\ IEEE International Symposium on Information Theory (ISIT)},
  Seoul, Korea, Jul. 2009, pp. 1174--1178.

\bibitem{Lapidoth11}
A.~Lapidoth and L.~Wang, ``The state-dependent semideterministic broadcast
  channel,'' \emph{IEEE Trans. Inform. Theory}, vol.~59, no.~4, pp. 2242--2251,
  Apr. 2013.

\bibitem{Duan14ITW}
R.~Duan, Y.~Liang, and S.~{Shamai (Shitz)}, ``Dirty interference cancellation
  for {G}aussian broadcast channels,'' in \emph{Proc. IEEE Information Theory
  Workshop (ITW) on Information Theory for Wireless Networks}, Hobart,
  Tasmania, Australia, Nov. 2014.

\bibitem{Aref09}
B.~Akhbari, M.~Mirmohseni, and M.~R. Aref, ``Compress-and-forward strategy for
  the relay channel with non-causal state information,'' in \emph{Proc.\ IEEE
  International Symposium on Information Theory (ISIT)}, Seoul, Korea, Jul.
  2009.

\bibitem{Zaidi10}
A.~Zaidi, S.~P. Kotagiri, J.~N. Laneman, and L.~Vandendorpe, ``Cooperative
  relaying with state available noncausally at the relay,'' \emph{IEEE Trans.
  Inform. Theory}, vol.~56, no.~5, pp. 2272--2298, May 2010.

\bibitem{Zaidi11}
A.~Zaidi, S.~{Shamai (Shitz)}, P.~Piantanida, and L.~Vandendorpe, ``Bounds on
  the capacity of the relay channel with noncausal state at source,''
  \emph{IEEE Trans. Inform. Theory}, vol.~59, no.~5, pp. 2639--2672, May 2013.

\bibitem{Zhang11a}
L.~Zhang, S.~Cui, and J.~Jiang, ``Gaussian interference channel with state
  information,'' in \emph{Proc. IEEE Wireless Communications and Networking
  Conference}, March 2011, pp. 1960--1965.

\bibitem{Zhang11b}
L.~Zhang, T.~Liu, and S.~Cui, ``Symmetric {G}aussian interference channel with
  state information,'' in \emph{Proc. 49th Annual Allerton Conference on
  Communication, Control, and Computing}, Sep. 2011.

\bibitem{Ghas13}
S.~Ghasemi-Goojani and H.~Behroozi, ``On the achievable rate-regions for
  state-dependent {G}aussian interference channel,'' Available at
  http://arxiv.org/abs/1301.5535, submitted in January 2013.

\bibitem{Duan16IT}
R.~Duan, Y.~Liang, and S.~{Shamai (Shitz)}, ``State-dependent gaussian
  interference channels: Can state be fully canceled?'' \emph{IEEE Trans.
  Inform. Theory}, vol.~62, pp. 1957 -- 1970, Apr. 2016.

\bibitem{Duan13ITW}
R.~Duan, Y.~Liang, A.Khisti, and S.~{Shamai (Shitz)}, ``State-dependent
  {G}aussian {Z}-channel with mismatched side-information and interference,''
  in \emph{Proc. IEEE Information Theory Workshop (ITW) on Information Theory
  for Wireless Networks}, Sevilla, Spain, Sep. 2013.

\bibitem{Ghas14}
S.~Ghasemi-Goojani and H.~Behroozi, ``State-dependent {G}aussian
  {Z}-interference channel: New results,'' in \emph{Proc.\ International
  Symposium on Information Theory and its Applications (ISITA)}, Victoria,
  Australia, Oct. 2014, pp. 468--472.

\bibitem{Fehri2015Z}
H.~Fehri and H.~K. Ghomash, ``Z-interference channel with side information at
  the transmitters,'' \emph{AEU - International Journal of Electronics and
  Communications}, vol.~69, no.~9, pp. 1167--1180, 2015.

\bibitem{Haji13}
S.~Hajizadeh, M.~Monemizadeh, and E.~Bahmani, ``State-dependent {Z} channel,''
  in \emph{Proc.\ Conference on Information Sciences and Systems (CISS)},
  Princeton, NJ, USA., Mar. 2014.

\bibitem{Somekh08}
A.~Somekh-Baruch, S.~{Shamai (Shitz)}, and S.~Verd\'{u}, ``Cognitive
  interference channels with state information,'' in \emph{Proc. IEEE Int.
  Symp. Information Theory (ISIT)}, Toronto, Canada, July 2008.

\bibitem{Kazemi13ISIT}
M.~Kazemi and A.~Vosoughi, ``On the capacity of the state-dependent cognitive
  interference channel,'' in \emph{Proc.\ IEEE International Symposium on
  Information Theory (ISIT)}, Istanbul, Turkey, Jul. 2013.

\bibitem{Duan12ISIT}
R.~Duan and Y.~Liang, ``Gaussian cognitive interference channels with state,''
  in \emph{Proc.\ IEEE International Symposium on Information Theory (ISIT)},
  Boston, MA, Jul. 2012.

\bibitem{Duan15IT}
------, ``Bounds and capacity theorems for cognitive interference channels with
  state,'' \emph{IEEE Trans. Inform. Theory}, vol.~61, no.~1, pp. 280--304,
  Jan. 2015.

\bibitem{Mallik08}
S.~Mallik and R.~Koetter, ``Helpers for cleaning dirty papers,'' in \emph{Proc.
  IEEE International ITG Conference on Source and Channel Coding (SCC)}, Ulm,
  Germany, Jan 2008.

\bibitem{Laneman08}
S.~P. Kotagiri and J.~N. Laneman, ``Multiaccess channels with state known to
  some encoders and independent messages,'' \emph{EURASIP {J}ournal on
  {W}ireless {C}ommunications and {N}etworking}, 2008.

\bibitem{Zaidi09}
A.~Zaidi, S.~P. Kotagiri, J.~N. Laneman, and L.~Vandendorpe, ``Multiaccess
  channels with state known to one encoder: {A}nother case of degraded message
  sets,'' in \emph{Proc.\ IEEE International Symposium on Information Theory
  (ISIT)}, Seoul, Korea, Jul. 2009.

\bibitem{Somekh08MAC}
A.~Somekh-Baruch, S.~{Shamai (Shitz)}, and S.~Verd\'{u}, ``Cooperative
  multiple-access encoding with states available at one transmitter,''
  \emph{IEEE Trans. Inform. Theory}, vol.~54, no.~10, pp. 4448--4469, October
  2008.

\bibitem{Phil11}
T.~Philosof, R.~Zamir, U.~Erez, and A.~J. Khisti, ``Lattice strategies for the
  dirty multiple access channel,'' \emph{IEEE Trans. Inform. Theory}, vol.~57,
  no.~8, pp. 5006--5035, August 2011.

\bibitem{Duan14TIT}
R.~Duan, Y.~Liang, A.Khisti, and S.~{Shamai (Shitz)}, ``Parallel {G}aussian
  networks with a common state-cognitive helper,'' \emph{IEEE Trans. Inform.
  Theory}, vol.~61, no.~12, pp. 6680--6699, December 2015.

\bibitem{Yunhao16IT}
Y.~Sun, R.~Duan, Y.~Liang, A.~Khisti, and S.~S. Shitz, ``Capacity
  characterization for state-dependent gaussian channel with a helper,''
  \emph{IEEE Transactions on Information Theory}, vol.~62, no.~12, pp.
  7123--7134, Dec 2016.

\bibitem{Kim04}
Y.-H. Kim, A.~Sutivong, and S.~Sigurj\'{o}nsson, ``Multiple user writing on
  dirty paper,'' in \emph{Proc.\ IEEE International Symposium on Information
  Theory (ISIT)}, Chicago, Illinois, Jul. 2004.

\bibitem{Gelf84}
S.~Gel'fand and M.~Pinsker, ``On {G}aussian channels with random parameters,''
  in \emph{Proc.\ IEEE International Symposium on Information Theory (ISIT)},
  Tashkent, USSR, Sep. 1984.

\bibitem{Phil08}
T.~Philosof and R.~Zamir, ``The rate loss of single letter characterization for
  the ``dirty'' multiple access channel,'' in \emph{Proc. IEEE Information
  Theory Workshop (ITW) on Information Theory for Wireless Networks}, May 2008,
  pp. 31--35.

\bibitem{Yunhao16ISIT}
Y.~Sun, R.~Duan, Y.~Liang, A.~Khisti, and S.~S. (Shitz), ``Helper-assisted
  state cancellation for multiple access channels,'' in \emph{Proc.\ IEEE
  International Symposium on Information Theory (ISIT)}, Barcelona, Spain, Jul.
  2016.

\bibitem{Yunhao17ISIT}
Y.~Sun, R.~Duan, Y.~Liang, and S.~S. (Shitz), ``State-dependent z-interference
  channel with correlated states,'' in \emph{Proc.\ IEEE International
  Symposium on Information Theory (ISIT)}, Aachen, Germany, Jun. 2017.

\bibitem{Yunhao17IT}
Y.~Sun, R.~Duan, Y.~Liang, and S.~{Shamai (Shitz)}, ``State-dependent
  interference channel with correlated states,'' submitted to {\em IEEE Trans.
  Inform. Theory}, October 2017.

\bibitem{Duan14ISITA}
R.~Duan, Y.~Liang, and S.~{Shamai (Shitz)}, ``Dirty interference cancelation
  for multiple access channels,'' in \emph{Proc.\ International Symposium on
  Information Theory and its Applications (ISITA)}, Melbourne, Australia, Oct.
  2014.

\bibitem{Sason04}
I.~Sason, ``On achievable rate regions for the {G}aussian interference
  channel,'' \emph{IEEE Trans. Inform. Theory}, vol.~50, no.~6, pp. 1345--1356,
  June 2004.

\bibitem{Sato81}
H.~Sato, ``The capacity of the {G}aussian interference channel under strong
  interference,'' \emph{IEEE Trans. Inform. Theory}, vol.~27, no.~6, pp.
  786--788, Nov. 1981.

\bibitem{Shang09}
X.~Shang, G.~Kramer, and B.~Chen, ``A new outer bound and the
  noisy-interference sum-rate capacity for {G}aussian interference channels,''
  \emph{IEEE Trans. Inform. Theory}, vol.~55, no.~2, pp. 689--699, February
  2009.

\bibitem{Anna09}
V.~S. Annapureddy and V.~V. Veeravalli, ``Gaussian interference networks: Sum
  capacity in the low interference regime and new outer bounds on the capacity
  region,'' \emph{IEEE Trans. Inform. Theory}, vol.~55, no.~7, pp. 3032--3050,
  July 2009.

\bibitem{Mota09}
A.~S. Motahari and A.~K. Khandani, ``Capacity bounds for the {G}aussian
  interference channel,'' \emph{IEEE Trans. Inform. Theory}, vol.~55, no.~2,
  pp. 620--643, February 2009.

\bibitem{Merhav07}
N.~Merhav and S.~{Shamai (Shitz)}, ``Information rates subject to state
  masking,'' \emph{IEEE Trans. Inform. Theory}, vol.~53, no.~6, pp. 2254--2261,
  June 2007.

\bibitem{Kim08}
Y.-H. Kim, A.~Sutivong, and T.~M. Cover, ``State amplification,'' \emph{IEEE
  Trans. Inform. Theory}, vol.~54, no.~5, pp. 1850--1859, May 2008.

\bibitem{Grover10}
P.~Grover and A.~Sahai, ``Vector {W}itsenhausen counterexample as assisted
  interference suppression,'' \emph{International Journal of Systems, Control
  and Communications}, 2010.

\bibitem{Chou12}
C.~Choudhuri and U.~Mitra, ``On {W}itsenhausen's counterexample: {T}he
  asymptotic vector case,'' in \emph{Proc.\ IEEE Information Theory Workshop
  (ITW) on Information Theory for Wireless Networks}, 2012.

\bibitem{Yang17}
W.~Yang, Y.~Liang, S.~Shamai, and H.~V. Poor, ``State-dependent gaussian
  multiple access channels: New outer bounds and capacity results,'' in
  \emph{Proc. IEEE Int. Symp. Information Theory (ISIT)}, Aachen, Germany, Jul.
  2017.

\bibitem{Erez2005}
U.~Erez, S.~{Shamai (Shitz)}, and R.~Zamir, ``Capacity and lattice strategies
  for canceling known interference,'' \emph{IEEE Trans. Inform. Theory},
  vol.~51, no.~11, pp. 3820--3833, Nov. 2005.

\end{thebibliography}
\begin{vita}
	
\end{vita}
\end{document}